\let\newfloat\newfloat@ltx
\newtheorem{definition}{Definition}
\newtheorem{proposition}{Proposition}
\newcommand{\mc}{\mathcal}
\newcommand{\mb}{\mathbf}
\newcommand{\mr}{\mathrm}
\newcommand{\kket}[1]{|#1\rangle\rangle}
\newcommand{\bbra}[1]{\langle\langle #1|}
\begin{document}

\preprint{APS/123-QED}

\title{Fault-Tolerant Operation of Bosonic Qubits with Discrete-Variable Ancillae} 
\author{Qian Xu}
\affiliation{Pritzker School of Molecular Engineering, The University of Chicago, Chicago 60637, USA}

\author{Pei Zeng}
\affiliation{Pritzker School of Molecular Engineering, The University of Chicago, Chicago 60637, USA}

\author{Daohong Xu}
\affiliation{Pritzker School of Molecular Engineering, The University of Chicago, Chicago 60637, USA}

\author{Liang Jiang}
\email{liang.jiang@uchicago.edu}
\affiliation{Pritzker School of Molecular Engineering, The University of Chicago, Chicago 60637, USA}

\date{\today}

\begin{abstract}

Fault-tolerant quantum computation with bosonic qubits often necessitates the use of noisy discrete-variable ancillae. In this work, we establish a comprehensive and practical fault-tolerance framework for such a hybrid system and synthesize it with fault-tolerant protocols by combining bosonic quantum error correction (QEC) and advanced quantum control techniques. 
We introduce essential building blocks of error-corrected gadgets by leveraging ancilla-assisted bosonic operations using a generalized variant of path-independent quantum control (GPI).
Using these building blocks, we construct a universal set of error-corrected gadgets that tolerate a single photon loss and an arbitrary ancilla fault for four-legged cat qubits. Notably, our construction only requires dispersive coupling between bosonic modes and ancillae, as well as beam-splitter coupling between bosonic modes, both of which have been experimentally demonstrated with strong strengths and high accuracy. Moreover, each error-corrected bosonic qubit is only comprised of a single bosonic mode and a three-level ancilla, featuring the hardware efficiency of bosonic QEC in the full fault-tolerant setting. 
We numerically demonstrate the feasibility of our schemes using current experimental parameters in the circuit-QED platform.
Finally, we present a hardware-efficient architecture for fault-tolerant quantum computing by concatenating the four-legged cat qubits with an outer qubit code utilizing only beam-splitter couplings.
Our estimates suggest that the overall noise threshold can be reached using existing hardware. 
These developed fault-tolerant schemes extend beyond their applicability to four-legged cat qubits and can be adapted for other rotation-symmetrical codes, offering a promising avenue toward scalable and robust quantum computation with bosonic qubits.

\end{abstract}

\maketitle
\section{Introduction}

Quantum error correction (QEC) enables reliable quantum information processing~\cite{nielsen_chuang_2010, lidar_brun_2013, aharonov1996limitations}. However, paradigmatic QEC schemes, particularly those employing surface codes with physical qubits~\cite{kitaev2003fault, bravyi1998quantum, fowler2012surface, google2023suppressing}, suffer from huge resource overhead~\cite{litinski2019game, beverland2022assessing}.
This resource-intensive nature creates a substantial gap between the theoretical potential of fault tolerance and the capabilities of current noisy intermediate-scale quantum (NISQ)~\cite{Preskill2018quantumcomputingin} devices. 

Encoding quantum information into bosonic systems~\cite{gottesman2001encoding, chuang1997bosonic, michael2016new, cochrane1999macroscopically, albert2018performance} by leveraging their infinite-dimensional Hilbert spaces offers a promising avenue to reduce the overhead of QEC~\cite{darmawan2021practical, chamberland2022building, xu2023autonomous, guillaud2019repetition, gouzien2023performance, noh2022low}. 
While robust quantum memories based on single-mode bosonic codes have been experimentally demonstrated with improved memory lifetime~\cite{ofek2016extending, sivak2023real, ni2023beating}, realizing error-corrected operations on these bosonic qubits remains a formidable task.




One of the primary complexities stems from the weak non-linear interactions inherent in bosonic modes, necessitating the use of discrete-variable ancillae in systems such as circuit quantum electrodynamics (circuit QED) platform~\cite{blais2004cavity, krastanov2015universal}. 
However, a significant challenge arises in this hybrid system, as errors in the ancillae tend to propagate back to the bosonic mode, potentially compromising the encoded quantum information~\cite{heeres2017implementing}.
To address this issue, several methods have been developed to maintain precise control over the bosonic mode even in the presence of noisy ancillary systems~\cite{ma2020error, reinhold2020error, ma2020path}. Nevertheless, a comprehensive fault-tolerance framework for this hybrid system, along with guidelines for constructing fully fault-tolerant protocols using advanced quantum control concepts, remains conspicuously absent. Consequently, while universal error-detection operations on bosonic qubits have been constructed~\cite{teoh2022dual, tsunoda2023error} and demonstrated~\cite{chou2023demonstrating}, achieving a complete set of error-corrected operations has remained a significant challenge.

In this work, we bridge this gap by introducing a fault-tolerance framework tailored to the hybrid system composed of bosonic data modes and discrete-variable ancillae. 
Inspired by concatenated qubit codes~\cite{aliferis2005quantum}, we identify essential properties for gadgets encoded in bosonic codes (referred to as ``level-1" gadgets) in Sec.~\ref{sec:fault_tolerance}. These properties play a crucial role in determining the fault tolerance of a level-1 circuit, where the overall failure probability must be suppressed to a certain order of the physical error rate. Furthermore, we demonstrate how the defined fault tolerance can be achieved through the integration of bosonic QEC with compatible quantum control techniques. Specifically, in Secs.~\ref{Sec:GeneralizedPI} and~\ref{Sec:GPItoFT}, we establish a connection between a generalized version of path-independent control~\cite{ma2020path} (referred to as GPI) and fault tolerance, highlighting the importance of GPI operations as fundamental building blocks for error-corrected gadgets.

As an application of these fault-tolerant tools, in Sec.~\ref{sec:FT4legCat}, we construct universal error-corrected gadgets using GPI operations for the four-legged cat qubit~\cite{cochrane1999macroscopically, leghtas2013hardware, albert_performance_2018}. 
These gadgets can tolerate a single photon loss and an arbitrary ancilla fault, while only relying on dispersive coupling between bosonic modes and ancillae~\cite{heeres2015cavity, rosenblum2018fault, reinhold2020error} and beam-splitter (BS) coupling between bosonic modes~\cite{chapman2023high, lu2023high}.  Importantly, these coupling mechanisms have been experimentally demonstrated with strong coupling strengths. 
Each level-1 logical qubit, encoded in a four-legged cat code, utilizes only a single bosonic mode and a three-level ancilla, featuring the hardware efficiency of bosonic QEC. 
We numerically demonstrate the second-order error suppression for the level-$1$ gadgets. 
Moreover, we show that using a teleportation gadget that pumps energy into the system and suppresses phase-rotation errors, a robust cat-encoding memory is feasible even in the presence of finite $\chi$ mismatches in the circuit-QEC platform with current experimental parameters~\cite{reinhold2020error}. 

Finally in Sec.~\ref{Sec:concatenatedQEC}, we present a practical and hardware-efficient architecture for fault-tolerant quantum computing by concatenating the four-legged cat qubits with an outer qubit code.
While we primarily focus on the four-legged cat code throughout this work, we discuss in Sec.~\ref{sec:discussion} that the fault-tolerant schemes developed herein can be readily adapted to other rotation-symmetric bosonic codes~\cite{grimsmo2020quantum}.

\section{System description and error model} \label{Sec:System}

We first introduce some notations. 
We denote $[k] := \{1,2,\cdots,k\}$ as the set of integers from $1$ to $k$.
We denote $\left[ \int_{t_h} dt_h \right]_{h \in [k]} := \int_{t_k} d_{t_k}\int_{t_{k-1}}dt_{k-1}\cdots \int_{t_1}dt_1 $ as the multiple integral over variables in $\{t_h\}_{h\in[k]}$, and similarly $\left[\sum_{a_h}\right]_{h\in[k]} := \sum_{a_k }\sum_{a_{k-1}}\cdots\sum_{a_1}$ as the sum over variables in $\{a_h\}_{h \in [k]}$. We denote $A \propto B$ if there exists some $c \in \mathbb{C}$ such that $A = c B$. We denote $\mc{T}$ as the time ordering operator.

\subsection{Preliminaries}

\subsubsection{Bosonic codes \label{sec:bosonic_code}}
Single-mode bosonic error-correcting codes encode logical information into a subspace of the infinite-dimensional Hilbert space of an oscillator. Among them, the four-legged cat code~\cite{cochrane1999macroscopically, leghtas2013hardware, albert_performance_2018} encodes a single logical qubit and has codewords
\begin{equation}
    \ket{\mu_L} = c_{\mu}\left[ \ket{\alpha} + \ket{-\alpha} + (-1)^{\mu}(\ket{i\alpha} + \ket{-i\alpha}) \right],
\end{equation}
where $\mu = 0/1$, $\ket{\gamma}$ denoites a coherent state with an amplitude $\gamma \in \mathbb{C}$, and $c_{\mu} = 1/(2\sqrt{2\exp(-|\alpha|^2)(\cosh|\alpha|^2 + (-1)^\mu \cos|\alpha|^2)})$ are normalization constants. Given any quantum code encoding a single logical qubit, we denote $P_c := \ket{0_L}\bra{0_L} + \ket{1_L}\bra{1_L}$ as the projection onto the codespace, and $\bar{X}_c, \bar{Y}_c, \bar{Z}_c$ the logical $X$-, $Y$-, $Z$-Pauli operators respectively.

The capability of an error-correcting code to correct a given set of errors $\mb{E}$ is given by the Knill-Laflamme (KL) condition~\cite{knill1997theory}: $P_c E_i^{\dagger} E_j P_c \propto P_c$ for any $E_i, E_j \in \mb{E}$. More specifically, we can evaluate the $2\times 2$ QEC matrix $\epsilon_{jk}^c$ for any pair of errors $E_j, E_k$~\cite{albert_performance_2018}:
\begin{equation}
    P_c E_j^{\dagger} E_k P_c = \epsilon_{jk}^c,
\end{equation}
where $\epsilon_{jk}^c$ can be parametrized as $\epsilon_{jk}^c = c_{jk}^c P_c + x_{jk}^c \bar{X}_c + y_{jk}^c \bar{Y}_c + z_{jk}^c \bar{Z}_c$, where $c_{jk}^c, x_{jk}^c, y_{jk}^c, z_{jk}^c \in \mathbb{C}$. The KL condition is satisfied if $x_{jk}^c = y_{jk}^c = z_{jk}^c = 0$ for any $j$ and $k$.

Consider the four-legged code and an error set containing a single-photon loss $\mb{E} = \{I, a\}$, where $a$ denotes the annihilation operator. First, we have $P_c a P_c = 0$, indicating that a single-photon loss is perfectly detectable. Second, 
\begin{equation}
    P_c a^{\dagger}a P_c = \bar{n} P_c + \frac{\delta n}{2} \bar{Z}_c, 
\end{equation}
where $\bar{n} := (\bra{0_L}a^{\dagger}a\ket{0_L} + \bra{1_L}a^{\dagger}a\ket{1_L})/2$ denotes the mean photon number and $\delta n := \bra{0_L}a^{\dagger}a\ket{0_L} - \bra{1_L}a^{\dagger}a\ket{1_L}$ denotes the photon number difference between the two codewords. For an arbitrary $\alpha$, $\delta n \neq 0$, indicating the a single photon loss is not perfectly correctable. However, $\delta n = O(e^{-2\alpha^2})$ as $\alpha \gg 1$ and a single-photon loss is approximately correctable for large $\alpha$. Furthermore, $\delta n = 0$ is exactly satisfied at a discrete set of finite $\alpha$~\cite{li2017cat}, which we refer to as sweet spots.  Similarly, one can show that for a continuous set of phase-rotation errors $\mb{R} = \{e^{i\theta a^{\dagger}a}\}_{\theta \in [-\theta_m,\theta_m]}$, the KL condition is approximately satisfied for large $\alpha$ if $\theta_m < \pi/4$~\cite{grimsmo2020quantum}. First, $P_c e^{-i\theta_1 a^{\dagger}a}e^{i\theta_2 a^{\dagger}a}P_c = c_{12} P_c + z_{12}\bar{Z}_c$ for any $\theta_1, \theta_2 \in \mb{R}$ since $e^{i(\theta_2 - \theta_1)a^{\dagger}a}$ preserves the photon number. Next, 
\begin{equation}
\begin{aligned}
    z_{12} & = \left(\bra{+_L} e^{i(\theta_2 - \theta_1)a^{\dagger}a}\ket{-_L} + \bra{-_L} e^{i(\theta_2 - \theta_1)a^{\dagger}a}\ket{+_L}\right)/2 \\
    & \approx \left(\langle i\alpha|\alpha e^{i(\theta_2 - \theta_1)}\rangle + \langle -i\alpha|\alpha e^{i(\theta_2 - \theta_1)}\rangle\right)/2 + h.c.,\\
\end{aligned}
\end{equation}
where the the approximation utilizes that $\ket{+_L} \approx (\ket{\alpha} + \ket{-\alpha})/\sqrt{2}$ and $\ket{-_L} \approx (\ket{i\alpha} + \ket{-i\alpha})/\sqrt{2}$ for large $\alpha$. Obviously, $z_{12} \rightarrow 0$ as $\alpha \gg 1$ as long as $|\theta_2 - \theta_1| \neq \pi/2$, which holds if $\theta_m < \pi/4$. 

To conclude, the four-legged cat code can approximately correct a single photon loss and a continuous set of phase rotations with amplitude smaller than $\pi/4$ (for large $\alpha$). 
In fact, cat codes serve as numerically optimized codes for certain regimes of a bosonic channel with both loss and dephasing errors~\cite{leviant2022quantum}. 


\subsubsection{Open quantum system and Markovian quantum evolution}
A noisy Markovian evolution of a quantum system is described by a Lindblad master equation:
\begin{equation}
\begin{aligned}
    \frac{d \rho}{d t} & = \mc{L}(t) \rho =
    - i [H(t), \rho] + (\sum_j  \mathcal{D}[\sqrt{\gamma_j} J_j])\rho,
    \label{eq:Lindblad_equation}
\end{aligned}
\end{equation}
where $H(t)$ is the system Hamiltonian and $\mathcal{D}[O] = O\bullet O^{\dagger} - \frac{1}{2}\{O^{\dagger}O, \bullet\}$ is a Lindblad superoperator associated with a jump operator $O$, and $\gamma_j$ is the jump rate for the error $J_j$. 

Denote $H_{\mathrm{eff}}(t) := H(t) - \frac{i}{2}\sum_j \gamma_j J_j^{\dagger}J_j$ as the effective Hamiltonian, and $\mathcal{S} := \sum_j \gamma_j J_j \bullet J_j^{\dagger}$ as the superoperator describing quantum jumps. The Lindbladian $\mc{L}(t)$ can be rewritten as $\mc{L}(t) = -i[H_{\mr{eff}}(t), \bullet] + \mc{S}$. Then, the joint quantum channel, given by the time integral of Eq.~\eqref{eq:Lindblad_equation}, admits a Dyson expansion with respect to $\mc{S}$~\cite{ma2020path}:
\begin{equation}
    \begin{aligned}
        \rho(t) = \mathcal{G}(t, 0)\rho(0) = \sum_{q = 0}^{\infty} \mathcal{G}_q(t, 0) \rho(0),
    \end{aligned}
\end{equation}
where $\mathcal{G}_0(t,0) = \mathcal{W}(t, 0) := W(t,0)\bullet W^{\dagger}(t,0)$, with $W(t, 0) := \mathcal{T} \exp \left[-i \int_{0}^{t} H_{\mathrm{eff}}\left(t^{\prime}\right) d t^{\prime}\right]$, describes evolution without any quantum jumps, and
\begin{equation}
    \begin{aligned}
\mathcal{G}_q(t)= \left[\int_{t_h = 0}^t dt_h\right]_{h \in [q]} \mc{T} \big( & \mathcal{W}(t, t_q)\mathcal{S} \cdots \\ & \mathcal{S} \mathcal{W}(t_2, t_1) \mathcal{S W}(t_1, 0) \big),
\end{aligned}
\label{eq:dyson_expansion}
\end{equation}
where $\mathcal{G}_q$ ($q \geq 1$) describes the evoluation with $q$ quantum jumps. We refer to such an expansion as the jump expansion and and $\mc{G}^{[n]} := \sum_{q = 0}^n \mc{G}_q$ as the $n$-th order truncation of $\mc{G}$ under the jump expansion.

For quantum error correction, we care about the expansion of the channel $\mc{G}$ in terms of the small noise parameter $p := \gamma_i t$ given an evolution time $t$ (here, we have assumed equal noise rates for simplicity):
\begin{equation}
\mc{G}(t,0) = \sum_{q^{\prime}}p^{q^{\prime}}\mathcal{G}^{\prime}_{q^{\prime}}(t,0).
\end{equation}
Such an expansion can be obtained by Dyson expanding $\mc{G}$ with respect to the full Lindblad superoperators of the noise $(\sum_j \mathcal{D}[\sqrt{\gamma_j} J_j])$ in Eq.~\eqref{eq:Lindblad_equation}, instead of their quantum-jump components $\mc{S}$.  
We refer to such an expansion of $\mc{G}$ as its noise expansion, and $\mc{G}^{\prime [n]} := \sum_{q^{\prime} = 0}^{n} \mc{G}^{\prime}_q$ as the $n$-th order truncation of $\mc{G}$ under the noise expansion. 

Observe that $\mc{G}^{[n]} = \mc{G}^{\prime [n]} + O(p^{n + 1})$, i.e. the $n$-th order truncation of a channel $\mc{G}$ in terms of its jump expansion or its noise expansion is equivalent up to $n$-order of $p$. Since $\mc{G}^{[n]}$ is easier to evaluate for the purpose of this work, we will mostly consider the jump expansion of channels. A nice property of a channel's jump expansion is that it is automatically in a Kraus form:
\begin{equation}
\begin{aligned}
     \mc{G}(t,0) = \sum_{q = 0} & \left[\int_{t_h = 0}^t dt_h\right]_{h \in [q]} \left[\sum_{j_h=1}^N\right]_{h \in [q]} \\
     & G_q(\{t_h\}, \{j_h\})\bullet G_q^{\dagger}(\{t_h\}, \{j_h\}),
\end{aligned}
\label{eq:dyson_expansion_Kraus_form}
\end{equation}
where 
\begin{equation}
    G_q(\{t_h\}, \{j_h\}):= \mc{T}  W(T,t_q)E_{j_q} \cdots E_{j_2} W(t_2, t_1) E_{j_1} W(t,0).
\end{equation}
One can, therefore, view $G_q(\{t_h\}, \{j_h\})$ as a Kraus operator of the channel with discrete indices $q, \{j_h\}$ and continuous indices $\{t_h\}$.

\subsection{General setup} \label{ssec:setup}
\begin{figure}
    \centering
    \includegraphics[width=0.48\textwidth]{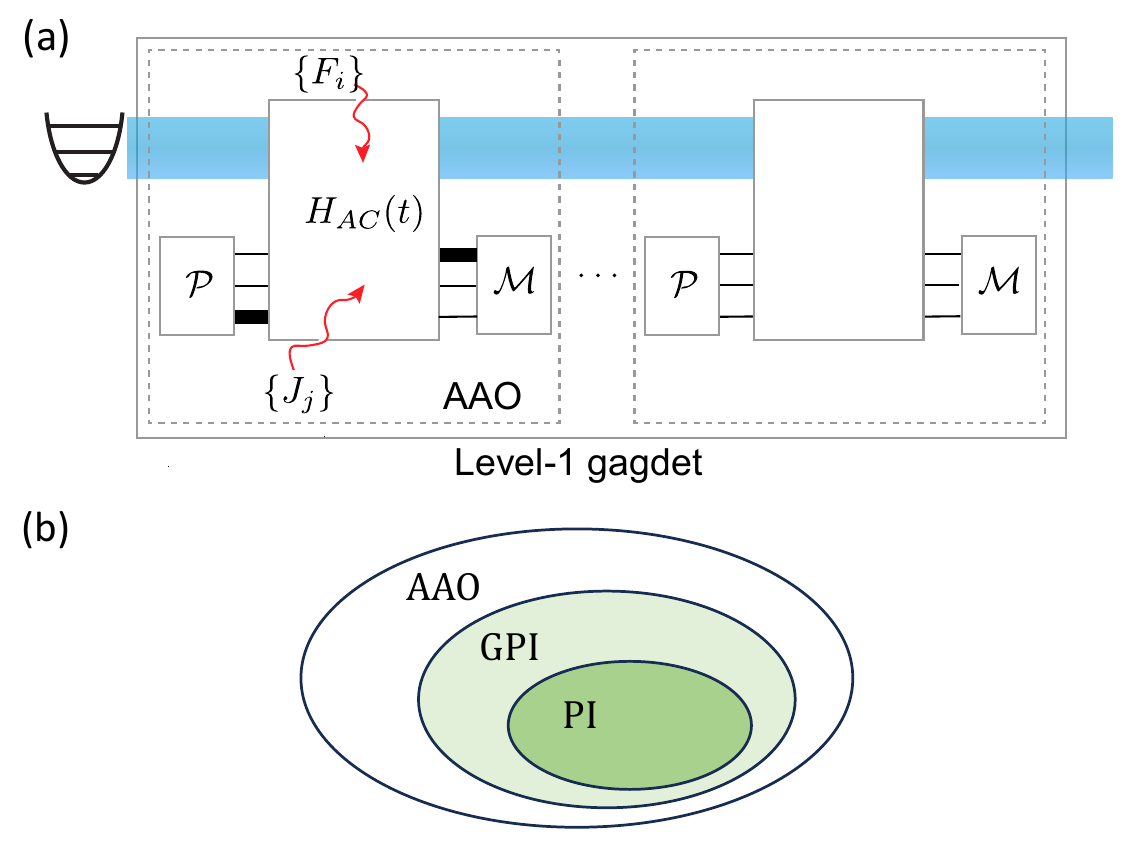}
    \caption{(a) Illustration of a level-$1$ bosonic gadget consisting of a sequence of ancilla-assisted operations. For each AAO, the ancilla is initialized to some state $\ket{i}$ and measured in some basis $\mc{B}_A$. The later AAOs can be performed adaptively using the earlier ancilla measurement outcomes. (b) Illustration of the AAO, GPI and PI operations. As a special case of AAO, the GPI operations with bosonic QEC can handle bosonic errors induced by relevant ancilla faults. The previous PI operations~\cite{ma2020path} can be regarded as a special GPI without bosonic QEC, which are designed to avoid any introduction of bosonic errors due to relevant ancilla faults.}
    \label{fig:gadget}
\end{figure}
As shown in Fig.~\ref{fig:gadget}(a), we consider gadgets for some bosonic code consisting of a sequence of ancilla-assisted operations (AAOs). For each AAO, a $d_A \geq 2$ ancilla $A$ is initialized in some initial state $\ket{i}_A$, interacts with the bosonic mode $C$ for some time $T$, and is finally measured in some basis $\mathcal{B}_A$. 
We consider continuous Markovian interactions between $A$ and $C$, which is described by the Lindblad master equation in Eq.~\eqref{eq:Lindblad_equation} with a Hamiltonian $H_{AC}(t)$ that acts on the joint system, a set of bosonic jump operators $\{\sqrt{\kappa_i} F_i \}$, and a set of ancilla jump operators $\{\sqrt{\gamma_j} J_j \}$. We allow adaptively choosing the later AAOs using the earlier ancilla measurement outcomes. Note that a direct operation on the bosonic mode can be viewed as a trivial AAO with the ancilla being initialized in some state $\ket{i}$, idling for the evolution time, and measured in $\ket{i}$  without any errors. 

Such an AAOs-composed bosonic gadget forms a channel $\mc{N}$ on the bosonic mode, which can be decomposed as $\mathcal{N} = \mathcal{N}_{n} \circ \mathcal{N}_0$, where $\mathcal{N}_0$ is the target bosonic operation and $\mathcal{N}_n = \sum_k N_k \bullet N_k^{\dagger}$ is a data noise channel represented by a set of noise Kraus operators $\{N_k\}$.  Fault-tolerance essentially concerns the relation between the physical channels $\{\mathcal{G}\}$ and the resultant bosonic channel $\mathcal{N}$. More specifically, we need to study how the noise in $\mathcal{G}$, which we refer to as faults, propagates to the data errors $\{N_k\}$ in $\mathcal{N}_n$. We will need to quantify the size of the faults and the data errors and design AAOs such that small faults only propagate to small data errors. 
We refer to a physical channel $\mc{G}$ that contains no more than $t$ faults as its $n$-th order truncation $\mc{G}^{[n]}$.  
To quantify the size of the data bosonic errors, we need to specify a bosonic error-correcting code and an error basis.  In this work, we will primarily focus on the cat codes~\cite{cochrane1999macroscopically, leghtas2013hardware, mirrahimi2014dynamically} and a basis we termed loss-dephasing basis, which is closely related to photon loss and bosonic dephasing errors. 

\subsection{Loss-dephasing basis and error metrics} 
Typical bosonic errors include excitation loss ($a$), heating ($a^{\dagger}$), and bosonic dephasing ($a^{\dagger}a$). For such errors, a natural basis to use is $\{e^{-i\theta a^{\dagger}a} a^k, a^{\dagger k^{\prime}} e^{i\theta^{\prime} a^{\dagger}a}\}_{k, k^{\prime} \in \mathbb{N}; \theta, \theta^{\prime} \in (-\pi, \pi]}$, which is a complete basis for all single-mode bosonic operators.
Neglecting heating errors $a^{\dagger}$, which are typically small~\cite{reinhold2020error, teoh2022dual}, the relevant errors are then spanned by $\{E_{k}(\theta) := e^{-i\theta a^{\dagger}a} a^k\}_{k \in \mc{N}, \theta \in (-\pi,\pi]}$, which we refer to as the loss-dephasing basis. 


A four-legged cat code can only correct errors $E_k(\theta)$ with small $k$ and $|\theta|$ (see Sec.~\ref{sec:bosonic_code}). This motivates us to quantify the size of $E_k(\theta)$ as $|E_k(\theta)|_w := (k, |\theta|) \in \mathbb{N}\times [0, \pi]$. We compare the size of two errors by introducing a partial order with respect to the proper cone $R^{2}_+ := [0,\infty)\times[0,\infty)$, i.e. $|E_{k^{\prime}}(\theta^{\prime})|_w \geq |E_{k}(\theta)|_w \leftrightarrow (k^{\prime} - k, |\theta^{\prime}| - |\theta|) \in R^{2}_+$.  We say that a bosonic noise channel $\mathcal{N}_n$ contains at most $(k, \theta)$ errors if all its Kraus operators have size at most $(k, \theta)$, and a state $\ket{\phi^{\prime}}$ is at most $(k, \theta)$ far from a target state $\ket{\phi}$ if there exists a $\mathcal{N}_n$ containing at most $(k, \theta)$ errors such that $\ket{\phi^{\prime}}$ is in the support of $\mathcal{N}_n (\ket{\phi}\bra{\phi})$.  With this quantification of error sizes, for $\alpha \gg 1$, the four-legged cat can correct errors $|E_k(\theta)|_w \leq (1, \pi/4)$~\cite{albert_performance_2018}. Fig.~\ref{fig:FTfm}(a) illustrates the $2$-dimensional error space indicated by the number of photon loss and dephasing angle.

\begin{figure}
    \centering
    \includegraphics[width=0.48\textwidth]{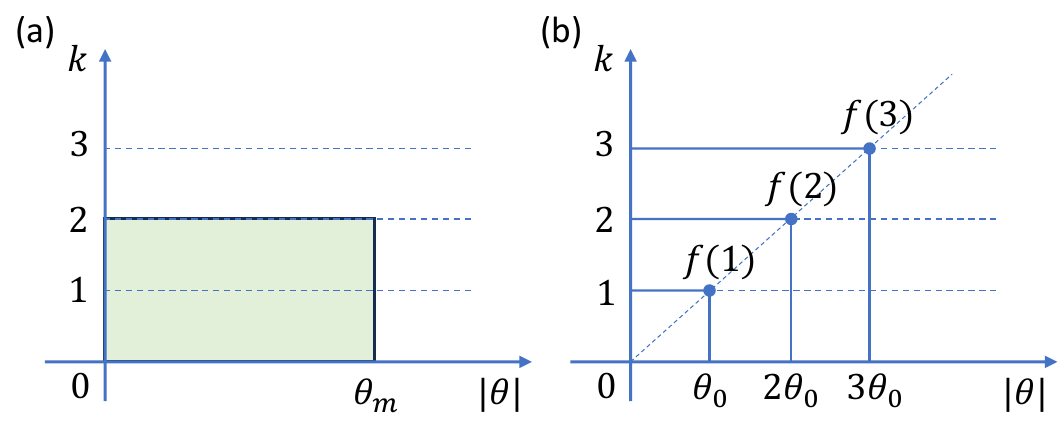}
    \caption{Illustration of bosonic error decomposition and the error propagation function $f(m)$. (a) The bosonic loss-dephasing error can be expanded by the basis $E_k(\theta)$. By defining a partial order of the size $E_k(\theta)$, the bosonic error $E_k(\theta)$ with at most $(k,\theta_m)$ error can be illustrated as the green region in the plot. Here $k=2$. (b) Suppose $m$ faults occur during the gate implementation. To capture the propagation of faults to the final bosonic error, we introduce a function $f(m)=(m,m\theta_0 \text{ mod } \pi)$ as a upper bound of the induced final loss and dephasing errors.
    }
    \label{fig:FTfm}
\end{figure}


\section{fault-tolerance \label{sec:fault_tolerance}}
In this section, we formalize a fault-tolerance framework for the hybrid system with bosonic modes and discrete-variable ancillae in the context of concatenated codes~\cite{aliferis2005quantum}.  
Since the single-mode cat code alone cannot arbitrarily suppress logical errors, one needs to concatenate it with an outer qubit code for fault-tolerant quantum computing.
That is, we will have three levels of gadgets. The level-$0$ gadgets are the physical operations; The level-$1$ gadgets are encoded in the cat code and the level-$2$ gadgets are encoded in the outer qubit code.
A quantum circuit is fault-tolerantly executed using level-$2$ gadgets, and each level-$2$ gadget is executed using a level-$1$ circuit with multiple level-$1$ gadgets. 
In order for each level-$2$ gadget (or equivalent, a level-$1$ circuit) to be executed with a failure rate $O(p^{t + 1})$, which suppresses the physical error rate $p$ to certain order $t + 1$, the level-$1$ gadgets suffice to satisfy the following properties:  

First, there exists a function $f:\mathbb{N} \rightarrow \mathbb{N}\times [0, \pi]$ that satisfies:
\begin{enumerate}
    \item $f(m_1) \leq f(m_2) \leftrightarrow m_1 \leq m_2$ if $m_1, m_2 \leq t$.
    \item $f(m_1 + m_2) = f(m_1) + f(m_2)$ if $m_1 + m_2 \leq t$.
\end{enumerate}
Roughly speaking, $f(m)$ constraints the maximal size of data errors that $a$ faults during a protocol can propagate to. For instance, for a bosonic code that can correct phase rotations smaller than $\theta_{\mr{max}}$, we will choose $f(m) = (m, m\theta_0 \mod \pi)$ for some $\theta_0 \in [0, \theta_{\mr{max}} /t]$, which constraints that $m$ faults can propagate to at most $m$ photon losses and phase rotations of size at most $m \theta_0$. We illustrate such an error propagation constrained by $f$ in Fig.~\ref{fig:FTfm}(b). 


Given $f$ and $t$, we then define $t$-FT fault-tolerant gadgets, including gate, error correction, state preparation, and measurement, for the hybrid system by generalizing the definitions in Ref.~\cite{aliferis2005quantum} for qubits. We remark that, the following FT definitions are related to the choice of the function $f$.


\begin{definition}[$t$-FT gate]
A gate is $t$-FT if it satisfies: For an input codeword that has an error of size $(k, \theta)$, if at most $m$ faults occur during the gate with $(k, \theta) + f(m) \leq f(t)$, the output state is at most $(k, \theta) + f(m)$ far from the codespace; Furthermore, ideally decoding the output state gives the same codeword as first ideally decoding the input state and then applying the ideal gate. 

\label{def:FT_gates}
\end{definition}

Note that this condition for the gate corresponds to the combination of Property 2 and Property 3 of Ref.~\cite{aliferis2005quantum}.

\begin{definition}[$t$-FT QEC] 
A QEC gadget is $t$-FT if it satisfies:
\begin{enumerate}
    \item For an input codeword with an error of size $(k, \theta)$, if at most $m$ faults occur during the protocol with $(k, \theta) + f(m) \leq f(t)$, ideally decoding the output state gives the same codeword as ideally decoding the input state. 
    \item For at most $m$ faults during the protocol with $f(m) \leq f(t)$, no matter the size of the error on the input state, the output state is at most $f(m)$-far from the code space.
\end{enumerate}
\label{def:FT_QEC}
\end{definition}
Note that conditions (i) and (ii) correspond to Properties 1 and 0 of Ref.~\cite{aliferis2005quantum}, respectively.

State preparation and measurement are special cases of FT gates:
\begin{definition}[$t$-FT state preparation]
A state-preparation gadget is $1$-FT if it satisfies:  
If at most $m \leq t$ faults occur during the protocol, the output state is at most $f(m)$-far from the target state; Furthermore, ideally decoding the output state gives the ideal target state. 
\label{def:FT_state_preparation}
\end{definition}

\begin{definition}[$t$-FT measurement]
A measurement gadget is $t$-FT if it satisfies:
For an input codeword that has an error of size $(k, \theta)$, if at most $m$ faults occur during the gate with $(k, \theta) + f(m) \leq f(t)$, the measurement is equivalent to applying the ideal measurement to the input codeword. 
\label{def:FT_measurement}
\end{definition}

Based on the definition of the $t$-FT gadgets, we have the following proposition.

\begin{figure}
    \centering
    \includegraphics[width=0.35\textwidth]{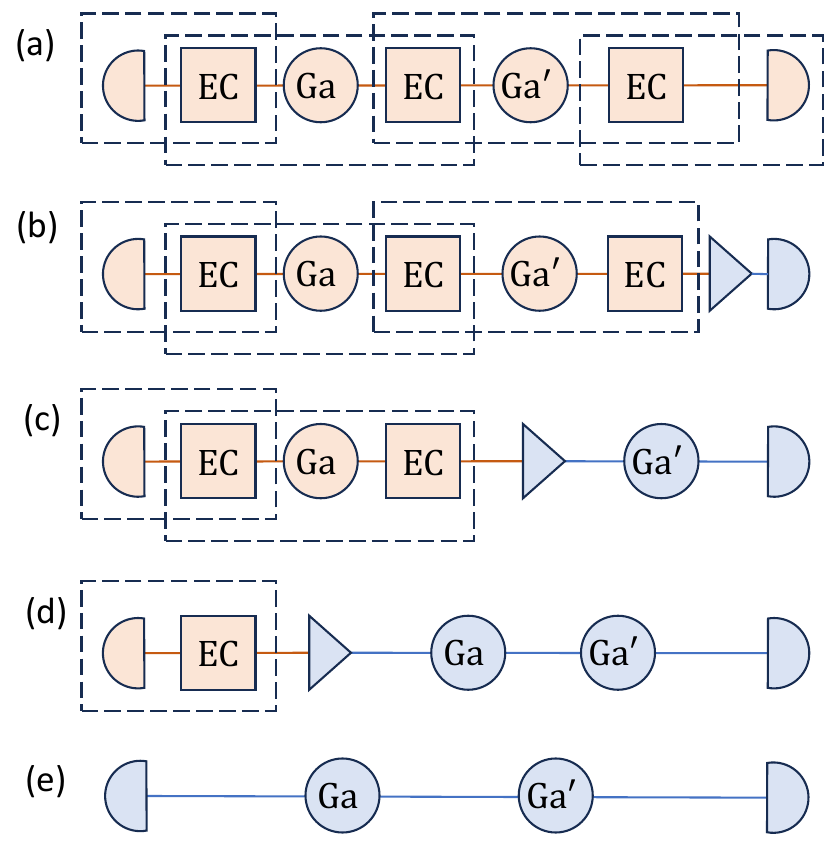}
    \caption{Reduction of a FT level-1 circuit to the ideal circuit.}
    \label{fig:FTreduce}
\end{figure}

\begin{figure}
    \centering
    \includegraphics[width=0.35\textwidth]{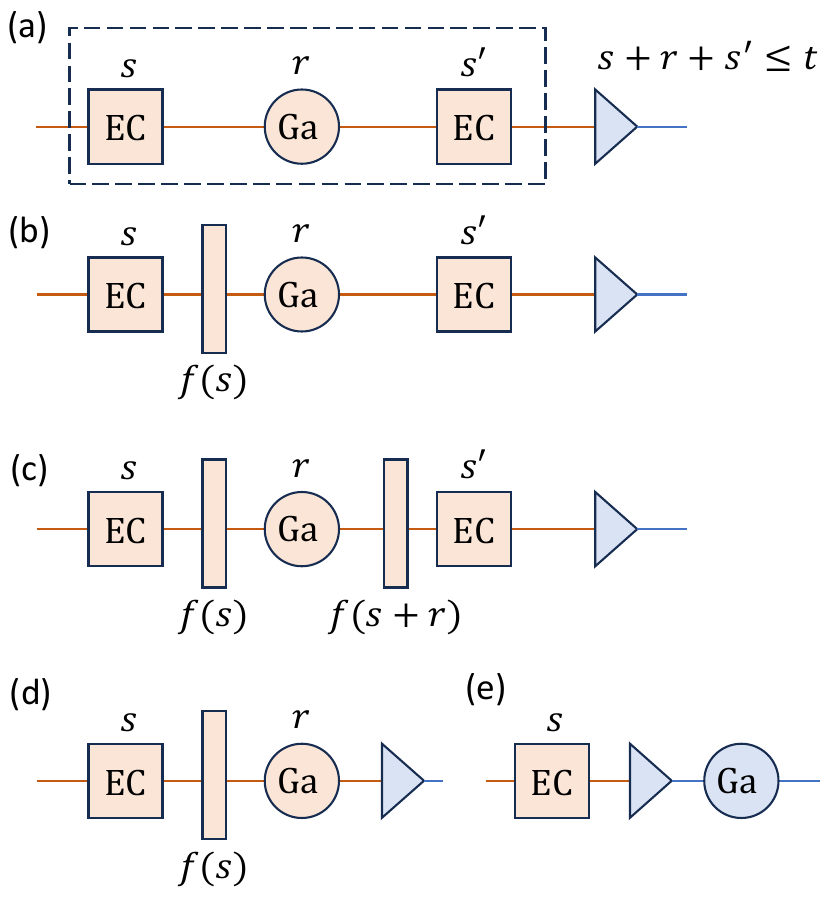}
    \caption{Reduction of the extended rectangular to an ideal gadget.}
    \label{fig:ExRec}
\end{figure}

\begin{proposition}
    Using $t$-FT level-$1$ gadgets, any level-$1$ circuit has a failure rate $O(p^{t + 1})$, where $p\in[0,1)$ is the physical error rate, i.e., the probability that one fault happens in the gadget.
    \label{prop:correct_circuit}
\end{proposition}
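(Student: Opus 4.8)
The plan is to follow the standard concatenated-code reduction argument of Ref.~\cite{aliferis2005quantum}, adapted to the hybrid bosonic-ancilla setting with the error metric $f$ defined above. The central object is the \emph{extended rectangle} (ExRec): for each location in the level-$1$ circuit, I would group the leading QEC gadget, the gate gadget, and the trailing QEC gadget into a single unit, as depicted in Fig.~\ref{fig:ExRec}. The goal is to show that an ExRec carrying at most $t$ faults is \emph{correct}, meaning it is equivalent to the ideal gadget preceded and followed by ideal decoding. The $t$-FT definitions (Definitions~\ref{def:FT_gates}--\ref{def:FT_measurement}) are precisely tailored so that each individual gadget behaves well under this grouping, and the two algebraic properties of $f$ (monotonicity and additivity for arguments summing to at most $t$) are exactly what make the fault-counting bookkeeping across an ExRec consistent.

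First I would define the failure of a level-$1$ circuit to be the event that the final ideally-decoded logical state differs from the output of the ideal level-$1$ circuit, and I would prove the contrapositive: if \emph{every} ExRec in the circuit contains at most $t$ faults, then the circuit does not fail. The argument proceeds ExRec by ExRec. Using Definition~\ref{def:FT_QEC}(ii), the leading QEC gadget with $m_1$ faults returns a state at most $f(m_1)$-far from the codespace regardless of the incoming error size; then by Definition~\ref{def:FT_gates} the gate with $m_2$ faults, acting on an input of error size bounded by $f(m_1)$, produces an output at most $f(m_1)+f(m_2)$-far from the codespace and commutes with ideal decoding, provided $f(m_1)+f(m_2)\le f(t)$ — which holds because additivity gives $f(m_1)+f(m_2)=f(m_1+m_2)$ and $m_1+m_2\le t$ inside a single ExRec. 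Finally Definition~\ref{def:FT_QEC}(i) lets the trailing QEC gadget absorb the accumulated error back into a clean codeword with the correct logical content. Chaining these three steps establishes that a $\le t$-fault ExRec decodes identically to the ideal gadget, which is the ExRec-correctness claim illustrated in Fig.~\ref{fig:FTreduce}.

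The second step is the counting argument that converts ExRec-correctness into the bound $O(p^{t+1})$. The circuit fails only if some ExRec contains at least $t+1$ faults. Since each fault occurs independently with probability at most $p$, the probability that a fixed ExRec sees $t+1$ or more faults is $\binom{L}{t+1}p^{t+1}(1-p)^{L-t-1} = O(p^{t+1})$, where $L$ is the (constant) number of fault locations in the ExRec; summing over the constant number of ExRecs in the level-$1$ circuit preserves the $O(p^{t+1})$ scaling. I would also need the standard observation that faults in overlapping ExRecs can be assigned consistently, so that no fault is double-counted and the union bound over ExRecs is valid; this is where the extended (rather than bare) rectangle is essential.

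The main obstacle I anticipate is not the combinatorics but verifying that the error-metric arithmetic closes correctly at the boundary of the cone-based partial order. Unlike the scalar fault count in the qubit setting of Ref.~\cite{aliferis2005quantum}, here errors live in $\mathbb{N}\times[0,\pi]$ with the dephasing angle accumulating modulo $\pi$, so the additivity $f(m_1)+f(m_2)=f(m_1+m_2)$ must be invoked only in the regime $m_1+m_2\le t$ where no wraparound occurs and the partial order is respected; establishing that the chained inequalities $(k,\theta)+f(m)\le f(t)$ in the gadget definitions compose without violating the $\theta_m<\pi/4$ correctability window is the delicate point. Once this is pinned down, the reduction is a direct transcription of the qubit-code argument.
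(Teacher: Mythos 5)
Your proposal is correct and follows essentially the same route as the paper's own proof: the extended-rectangle reduction of Ref.~\cite{aliferis2005quantum}, invoking Definition~\ref{def:FT_QEC}(ii) for the leading EC, both requirements of Definition~\ref{def:FT_gates} for the gate, and Definition~\ref{def:FT_QEC}(i) for the trailing EC, followed by the same counting argument that a logical failure requires more than $t$ independent faults in a single ExRec, giving $O(p^{t+1})$. Your added remarks on consistent fault assignment in overlapping ExRecs and on the additivity of $f$ avoiding wraparound in the $\mathbb{N}\times[0,\pi]$ error metric are refinements the paper treats only implicitly, but they do not change the structure of the argument.
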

\begin{proof}

    We follow the extended-rectangle formalism in Ref.~\cite{aliferis2005quantum}. Without loss of generality, we consider an ideal quantum circuit in Fig.~\ref{fig:FTreduce}(e). 
    Here we take the single-qubit level-1 circuit as an example. In practice, we realize this circuit using the noisy $t$-FT level-1 gadgets shown in Fig.~\ref{fig:FTreduce}(a). To analyze the fault-tolerance property of this circuit, we draw several dashed boxes to cover the whole circuit. The dashed boxes are called extended rectangles $\mathrm{exRec}$. For a quantum gate, an extended rectangle $\mathrm{exRec}$ (a dashed box in Fig.~\ref{fig:FTreduce}(a)) is a composition of a front EC gadget, a gate and a rear EC gadget, i.e. $\mathrm{exRec} = \mathrm{EC}\circ \mathrm{Ga}\circ \mathrm{EC}$.



    We say that any operation $\mathrm{Op}$ is $t$-good if it contains no more than $t$ faults. In what follows, we show that if all the dashed boxes in Fig.~\ref{fig:FTreduce}(a) are $t$-good, we can reduce the noisy circuit to the ideal circuit following the stream in Fig.~\ref{fig:FTreduce}.
    To this end, we introduce the ideal decoder $\mathrm{ID}$ (the blue triangles in Fig.~\ref{fig:FTreduce} and \ref{fig:ExRec}), which performs an ideal recovery given a bosonic code. We also introduce a $(k,\theta)$-filter $[(k,\theta)]\mathrm{F}$ (the orange thin rectangles in Fig.~\ref{fig:ExRec}) which performs a projection onto the space spanned by all states that can be obtained by acting on a codeword with an error no larger than $(k,\theta)$.

    First of all, we notice that if the last box in Fig.~\ref{fig:FTreduce}(a) is $t$-good, then based on the definition of $t$-FT QEC and measurement, we can equivalently convert the circuit in Fig.~\ref{fig:FTreduce}(a) to (b). 
    
    Then, we follow the procedures in Fig.~\ref{fig:ExRec} to reduce the extended gadgets of quantum gates to the ideal gadgets: 
    Denote the faults occur in the front EC gadget, the gate gadget and the rear EC gadget to be $s, r, s'$, respectively. Since the dashed box is $t$-good, we have $s+r+s'\leq t$. Fig.~\ref{fig:ExRec}(a) and (b) are equivalent due to the second requirement of FT QEC in Def.~\ref{def:FT_QEC}; (b) and (c) are equivalent due to the first requirement of the FT gate in Def.~\ref{def:FT_gates}; (c) and (d) are equivalent due to the first requirement of FT QEC in Def.~\ref{def:FT_QEC}; (d) and (e) are equivalent due to the second requirement of the FT gate in Def.~\ref{def:FT_gates}. Then we can transform the circuit from Fig.~\ref{fig:FTreduce}(b) to (d) using the reduction in Fig.~\ref{fig:ExRec}. 

    Finally, we use the property of FT state preparation to reduce Fig.~\ref{fig:FTreduce}(d) to (e). The argument is similar to the ones for the extended gadgets of quantum gates in Fig.~\ref{fig:ExRec}.

    In our gadget set-up in Sec.~\ref{ssec:setup}, errors represented by quantum jump happens independently in the level-1 gadgets. Consider a level-1 circuit composed of many $t$-FT level-1 gadgets that can be grouped into extented rectangles (see e.g., Fig.~\ref{fig:FTreduce}(a)). 
    If there are at most $t$ quantum errors in each extended rectangle, we can convert it to an ideal gadget. In that case, only when more than $t$ errors occur in the same extended rectangles at the same time can one logical error happen, which owns a probability of $O(p^{t+1})$.
\end{proof}

In the following, we focus on constructing FT level-$1$ bosonic gadgets that satisfy the above definitions by integrating bosonic quantum error correction and quantum controls. More specifically, given a bosonic code $\mc{C}$ that can correct loss and phase-rotation errors, e.g. the cat code, we try to design error-corrected $\mc{C}$-encoded gadgets by carefully engineering the Hamiltonian of their composing AAOs so that physical faults propagate controllably to data errors. An analogous example in the context of qubit fault-tolerance is the use of transversal gates~\cite{shor1996fault}, which guarantees that a single circuit fault only propagates to a single data error per code block. However, this quantum control task is more sophisticated when involving bosonic modes as we need to consider complicated continuous evolution in a larger Hilbert space. 

In order for a level-$1$ gadget to be FT, it has to tolerate both bosonic faults and ancilla faults. Tolerance against bosonic faults can be achieved relatively easily by using the error-transparency control~\cite{ma_error-transparent_2020}, or more generally, the error-closure control~\cite{tsunoda2022error}. Tolerance against ancilla errors is usually harder to achieve since some DV ancilla errors tend to propagate uncontrollably and a small ancilla fault could lead to a catastrophic error in the bosonic mode. Fortunately, path-independent control~\cite{ma2020path, ma_algebraic_2022, reinhold2020error} has been proposed for controlling the ancilla faults propagation to the bosonic mode. 
However, the previously defined PI condition~\cite{ma2020path} is more stringent than what is required by fault tolerance. Thus in the next section, we will generalize the PI control, relax its requirement, and rigorously connect its generalized version to fault-tolerance analyzed in this section. 

\section{Generalized path-independent operations} \label{Sec:GeneralizedPI}
We first review the PI control proposed in Ref.~\cite{ma2020path}. Again, we consider a Markovian interaction between a bosonic mode $C$ and a $d \geq 2$-level ancilla $A$ described by Eq.~\eqref{eq:Lindblad_equation}, where only the ancilla noises are considered, i.e.
\begin{equation}
    \frac{d \rho}{dt} = -i [H_{AC}(t), \rho] + \sum_j \mathcal{D}[\sqrt{\gamma_j} J_j] \rho
    \label{eq:master_equation_ancillla_errors_only}
\end{equation}
where $J_j$ are some jump operators acting only on the ancillary system. 
The ancilla is initialized in some initial state $\ket{i}_A$, and measured in a certain basis $\{\ket{r}_A\}$ after an interaction time $T$. Let $\mc{G}(T)$ denote the joint channel generated by the Lindblad master equation in Eq.~\eqref{eq:master_equation_ancillla_errors_only} for a time $T$. With a slight abuse of notation, we may neglect the subscripts $A$ or $C$ labeling the ancilla or the bosonic mode for states or operators without ambiguity. We denote $\bbra r \mc{G} \kket{i} := \bra{r} \mc{G}(\ket{i}\bra{i}\otimes \bullet)\ket{r}$ as the (unnormalized) channel on the bosonic mode conditioned on the initial and final ancilla state $\ket{i}$ and $\ket{r}$~\cite{ma_algebraic_2022}. 
A PI gate is defined as follows.
\begin{definition}[PI gate]
An ancilla-assisted gate $\mathcal{G}(T)$ is PI in an ancilla basis $\mc{B}_A$ with an ancilla initial state $\ket{i}$ if for any $\ket{r} \in \mc{B}_A$,
\begin{equation}
     \langle\langle r| \mathcal{G}(T)|i\rangle\rangle \propto U_{ri}\bullet U_{ri}^{\dagger},
    \label{eq:PI_condition}
\end{equation}
where $U_{ri}$ is some $r$-dependent unitary on the bosonic mode. 
\label{def:PI_gate}
\end{definition}
The PI condition in Eq.~\eqref{eq:PI_condition} implies that each conditional channel does not contain any errors (it is a unitary channel without information loss) propagated from the ancilla, although the unconditional channel might. In other words, no information is lost to the environment by performing such a noisy operation if the ancilla measurement outcome is accessible. See Fig.~\ref{fig:illustration_ancilla_assisted_operation} for an illustration of such a PI gate.

Note that the PI condition in Eq.~\eqref{eq:PI_condition} is for the joint channel, which could be hard to evaluate directly. As such, Ref.~\cite{ma_algebraic_2022} provided an easy-to-evaluate algebraic condition for the Hamiltonian $H_{AC}(t)$ and the jump operators $\{J_j\}$ in order for $\mc{G}$ to satisfy Eq.~\eqref{eq:PI_condition}, which we present in Appendix~\ref{appendix:algebric_conditions}.

The PI definition in Def.~\ref{def:PI_gate} considers an infinite number of ancilla-faults since it examines the full $\mathcal{G}(T)$. In practice, when the ancilla noises are small, by correcting only a finite number of ancillary faults, we can suppress the real logical error rates to a higher order. As such, we define the following finite-order PI gate that only concerns a finite truncation of $\mc{G}(T)$:

\begin{definition}[Finite-order PI gate] 
An ancilla-assisted gate is $n$-PI in an ancilla basis $\mc{B}_A$ with an ancilla initial state $\ket{i}$ if for any $\ket{r} \in \mc{B}_A$ and any $k\leq n$, 
\begin{equation}
    \langle\langle r| 
    \mathcal{G}^{[k]}(T) |i\rangle\rangle \propto U_{ri}\bullet U_{ri}^{\dagger},
    \label{eq:def_finite_PI}
\end{equation}
where $U_{ri}$ is some $r$-dependent unitary on the bosonic mode. 
\label{def:finite-order_PI_gate}
\end{definition}

\begin{figure}
    \centering
    \includegraphics[width=0.5\textwidth]{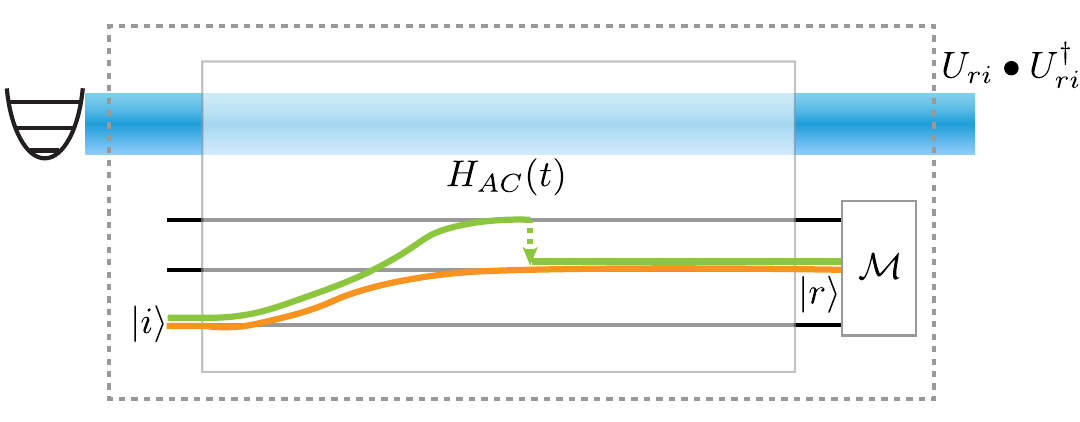}
    \caption{Illustration of a PI gate. Given an ancilla initial state $\ket{i}$ and a measurement basis $\mc{B}_A$, the bosonic mode undergoes a $r$-dependent unitary $U_{ri}$ for any ancilla measurement outcome $\ket{r}\in \mc{B}_A$, independent of the different system paths (see e.g. the green and orange curves, where an ancilla relaxation happens for the green curve).}
    \label{fig:illustration_ancilla_assisted_operation}
\end{figure}

In Appendix~\ref{appendix:algebric_conditions}, we present an algebraic condition for the Hamiltonian and jump operators in order for $\mc{G}$ to satisfy Eq.~\eqref{eq:def_finite_PI}.

The PI condition, even with a finite order, is demanding since it requires the conditional channels to be exactly unitary channels and thus allows no error propagation at all. However, observe that if the bosonic mode is protected by some bosonic codes, fault-tolerance can still be achieved even if we allow error propagations, as long as the propagated errors are small and correctable. Based on this observation, we generalize the PI control and present a less stringent condition that, nevertheless, still complies with the idea of fault-tolerance:
\begin{definition} [GPI operation]
Given a single-mode bosonic code with a codespace projection $P_c$, we say that an ancilla-assisted operation is $n$-th order generalized path-independent (GPI) in an ancilla basis $\mc{B}_A$ with an initial ancilla state $\ket{i}$ if for any $\ket{r}\in \mc{B}_A$ and $k \leq n$,
\begin{equation}
    \langle\langle r| \mathcal{G}^{[k]}(T) |i\rangle\rangle \propto (\sum_s K^s_{ri}\bullet K^{s \dagger}_{ri}),
    \label{eq:GPI_operation}
\end{equation}
where $\{K^s_{ri}\}_s$ satisfies the KL condition, i.e. $P_c K^{s \dagger}_{ri} K^{s^{\prime}}_{ri}P_c \propto P_c$ for any $K_{ri}^s, K_{ri}^{s^{\prime}} \in \{K^s_{ri}\}_s$.
\label{def:GPI_operation}
\end{definition}
Note that any conditional channel $\langle\langle r| \mathcal{G}^{[k]}(T) |i\rangle\rangle$ can be written in the form of Eq.~\eqref{eq:GPI_operation}, with a set of $(r,i)$-dependent Kraus operators $\{K^s_{ri}\}_s$. The condition that $\{K^s_{ri}\}_s$ satisfies the KL condition implies that the conditional channel $\langle\langle r| \mathcal{G}^{[k]}(T) |i\rangle\rangle$ contains only correctable errors. 

The GPI condition generalizes from the PI condition in Def.~\ref{def:finite-order_PI_gate} from the following two aspects. First, the GPI condition considers any operation (any CPTP map) to the bosonic mode as a target, while the PI condition only considers unitary gates. Second, the GPI condition allows finite propagation of ancilla faults to the bosonic mode for each conditional channel, as long as the propagated errors are correctable by the bosonic code. See Fig.~\ref{fig:gadget}(b) for an illustration of the relation between ancilla-assisted operations, GPI operations and PI operations.

In Appendix~\ref{appendix:algebric_conditions}, we present an algebraic condition for GPI operations again by only examining the Hamiltonian and jump operators. Note that we directly present the GPI condition in the finite-order form, which is of practical relevance. 




\subsection{GPI examples}
Here, we provide two examples of GPI operations for the four-legged cat code. 
\subsubsection{GPI SNAP gate with a three-level $\chi$-mismatched ancilla \label{sec:GPI_SNAP}}
As an example, we consider the photon-number selective phase (SNAP) gate~\cite{heeres2015cavity} in circuit-QED systems. In the rotating frame, a three-level ancilla with levels $\{\ket{g}, \ket{e}, \ket{f}\}$ is dispersively coupled to a bosonic mode via the Hamiltonian 
\begin{equation}
    H_0 = - (\chi_f \ket{f}\bra{f} + \chi_e \ket{e}\bra{e})\otimes a^{\dagger}a,
    \label{eq:SNAP_Hamiltonian1}
\end{equation}
and the ancilla is frequency-selectively driven between $\ket{g}$ and $\ket{f}$ states:
\begin{equation}
    H_c(t) = \Omega \sum_{n = 0}^N e^{- i (n\chi_f t - \phi_n)}\ket{f}\bra{g} + h.c.,
    \label{eq:SNAP_Hamiltonian2}
\end{equation}
where $\vec{\phi} := (\phi_0, \phi_1, \cdots, \phi_N)$ is some phase vector that we can choose. We consider ancilla jump operators $\{J_1 = \sqrt{\gamma} \ket{e}\bra{f}, J_2 = \sqrt{\gamma} \ket{g}\bra{e}, J_3 = \sqrt{\gamma} \sum_{s \in \{e, f\}} \Delta_s \ket{s}\bra{s}\}$, where $J_1$ describes the ancilla decays from $\ket{f}$ to $\ket{e}$, $J_2$ the ancilla decay from $\ket{e}$ to $\ket{g}$, and $J_3$ an ancilla dephasing with arbitrary phases $\Delta_e, \Delta_f \in \mathbb{C}$. We will use this error model throughout the paper whenever using a three-level ancilla.  

In the interaction picture associated with $H_0$, the system Hamiltonian reads
\begin{equation} \label{eq:SNAP_interacting}
    \tilde{H} = \Omega \left[|f\rangle \langle g|\otimes S(\vec{\phi}) + h.c.\right],
\end{equation}
where $S(\vec{\phi}) := \sum_{n = 1}^N e^{i\phi_n} \ket{n}\bra{n}$ applies a number dependent phase shift $\vec{\phi}$ to the bosonic mode. Note that we have performed the rotating wave approximation by assuming $\Omega \ll \chi_f$. Denote $\Delta\chi:= \chi_f - \chi_e$ as the $\chi$ mismatch. The ancilla jump operators transform to 
$\tilde{J}_1(t) = \sqrt{\gamma}|e\rangle \langle f| \otimes e^{-i\Delta \chi t a^{\dagger}a}, \tilde{J}_2(t) = \sqrt{\gamma}|g\rangle \langle e| \otimes e^{-i \chi_e t a^{\dagger}a}$, and $\tilde{J}_3 = J_3$. 

We initialize the ancilla in $\ket{g}$ and let the system evolve for a time $T = \pi/2\Omega$, and measure the ancilla in the $\{\ket{g}, \ket{e}, \ket{f}\}$ basis. In the absence of errors, the ancilla will end up in $|f\rangle$ while the central system undergoes the target number-dependent phase shifts $S(\vec{\phi})$, i.e. $\langle f| e^{-i\tilde{H}_c T}|g\rangle = S(\vec{\phi})$. 
With ancilla errors, we can explicitly write down the truncated conditional channels (in the interaction picture) up to the first order:
\begin{equation}
    \begin{aligned}
        \langle\langle g|\tilde{\mathcal{G}}^{[1]}(T)|g\rangle\rangle & \propto \mathcal{I}, \\
        \langle\langle f|\tilde{\mathcal{G}}^{[1]}(T)|g\rangle\rangle & \propto S(\vec{\phi})\bullet S^{\dagger}(\vec{\phi}), \\
        \langle\langle e|\tilde{\mathcal{G}}^{[1]}(T)|g\rangle\rangle & \propto \int_{t=0}^T dt\ e^{-i\Delta \chi t a^{\dagger}a} S(\vec{\phi})\bullet S^{\dagger}(\vec{\phi}) e^{i\Delta \chi t a^{\dagger}a},\\
    \end{aligned}
    \label{eq:channel_GPI_SNAP}
\end{equation}
If there is no $\chi$-mismatch, i.e. $\Delta \chi = 0$, this gate is strictly a $1$-PI gate (see Eq.~\eqref{eq:def_finite_PI}); If there is a finite $\chi$-mismatch, the gate is no longer PI. Nevertheless, for a bosonic code that can correct phase rotations in the range $[-\theta_{m}/2, \theta_{m}/2]$ (e.g. $\theta_{m} = \pi/2$ for the four-legged cat), the gate is still a $1$-GPI gate if $\Delta \chi T \leq \theta_{m}$ (see Eq.~\eqref{eq:GPI_operation}).

In Appendix~\ref{appendix:algebric_conditions}, we show that one can verify the GPI property of this SNAP gate more easily without calculating the conditional channels by checking a set of algebraic conditions for the Hamiltonian and jump operators. Also, in Appendix~\ref{appendix:GPI_parity_flag}, we present another GPI SNAP scheme using a qutrit and an extra flag qubit, which can tolerate even larger $\chi$-mismatch $\Delta \chi T$. 
\subsubsection{GPI parity measurement with a three-level $\chi$-mismatched ancilla
\label{sec:GPI_parity_measurement}}
As another example of GPI operations, we consider the parity measurement for correcting photon loss errors~\cite{rosenblum2018fault} using a three-level $\chi$-mismatched ancilla.

The system Hamiltonian (in the rotating frame) is $H_0 = - (\chi_e |e\rangle \langle e| + \chi_f |f\rangle \langle f|)\otimes a^{\dagger}a$ (without ancilla drives). We denote $\ket{\pm}$ as $(\ket{g} \pm \ket{f})/\sqrt{2}$. The ancilla is initialized in $\ket{+}$ and measured in the basis $\{\ket{+}, \ket{-}, \ket{e}\}$. 

In the absence of ancilla errors, 
the operation performs a projection onto the even (odd) subspace of the bosonic mode conditioned on the ancilla measured in $\ket{+}$ ($\ket{-}$):
\begin{equation}
\begin{aligned}
    \bbra{+}\mc{G}^{[0]}\kket{+} & = P_+ \bullet P_+, \\
    \bbra{-}\mc{G}^{[0]}\kket{+} & = P_- \bullet P_-, \\
\end{aligned}
\end{equation}
where $P_{\pm} := (I \pm e^{-i\pi a^{\dagger}a})/2$ is the projection on the even/odd parity subspace of the bosonic mode.

In the presence of ancilla errors $\{J_1 = \sqrt{\gamma}\ket{e}\bra{f}$, $J_2 = \sqrt{\gamma}\ket{g}\bra{e}$, $J_3 = \sqrt{\gamma}\sum_{s \in \{e, f\}} \Delta_s \ket{s}\bra{s}\}$, we move to the interaction picture associated with $H_0$.
Now, the system Hamiltonian is $0$ and the ancilla jump operators read $\tilde{J}_1(t) = \sqrt{\gamma}|e\rangle \langle f|\otimes e^{-i\Delta \chi t a^{\dagger}a}, \tilde{J}_2(t) = \sqrt{\gamma} |g\rangle \langle e|\otimes e^{-i\chi_e t a^{\dagger}a}$, and $\tilde{J}_3 = J_3 = \sqrt{\gamma}\sum_{s \in \{e, f\}} \Delta_s \ket{s}\bra{s}\}$, same to those in the previous SNAP example. Here, without loss of generality, we set $\Delta_f = -1$. 

We can calculate the the noise expansion of the joint channel up to the first order (see Eq.~\eqref{eq:dyson_expansion_Kraus_form}):
\begin{equation}
\begin{aligned}
    \tilde{\mathcal{G}}^{[1]}(T) & = W(T,0)\bullet W^{\dagger}(T,0) + \gamma \int_{t = 0}^T G_1(t, 1)\bullet G_1^{\dagger}(t, 1)\\
    & + \gamma \int_{t = 0}^T G_1(t, 3)\bullet G_1^{\dagger}(t, 3),
\end{aligned}
\end{equation}
where $W(t_2, t_1) := \exp\left[-i H_{\mr{eff}}(t_2 - t_1)\right]$ with $H_{\mr{eff}} := -\frac{i}{2}\sum_{j=1}^3 \tilde{J}_j^{\dagger}\tilde{J}_j = - \frac{i}{2}\gamma [(1 + |\Delta_e|^2)\ket{e}\bra{e} + 2 \ket{f}\bra{f}]$, and
$G_1(t, j):= W(T, t)\tilde{J}_j(t)W(t,0)$. 
Note that we have dropped the term associated with the first-order quantum jump with $\tilde{J}_2$, which is zero when the ancilla starts from $\ket{+}$. 
Going back to the lab frame, the truncated channel is $\mathcal{G}^{[1]}(T) = \tilde{\mathcal{G}}^{[1]}(T)\circ [U_0(T)\bullet U_0^{\dagger}(T)]$, where $U_0(T) := e^{-i H_0 T}$. Then we can calculate the truncated conditional channels:
\begin{equation}
\begin{aligned}
    \langle\langle +|\mathcal{G}^{[1]}|+\rangle\rangle & = [(1-\frac{p}{2})P_+ + \frac{p}{2}P_-]\bullet [(1-\frac{p}{2})P_+ + \frac{p}{2}P_-], \\
    & + p P_- \bullet P_- + O(p^2)\\
    \langle\langle -|\mathcal{G}^{[1]}|+\rangle\rangle & = [(1-\frac{p}{2})P_- + \frac{p}{2}P_+]\bullet [(1-\frac{p}{2})P_- + \frac{p}{2}P_+], \\ 
    & + p P_+ \bullet P_+  + O(p^2)\\
    \langle\langle e|\mathcal{G}^{[1]}|+\rangle\rangle & = \frac{p}{2T} \int_{t=0}^T dt e^{-i(\Delta\chi t + \pi) a^{\dagger}a}\bullet e^{i(\Delta\chi t + \pi) a^{\dagger}a}\\
    & + O(p^2),\\
    \label{eq:conditional_channels_parity}
\end{aligned}
\end{equation}
where $p := \gamma T$. For a four-legged cat with $\alpha \gg 1$, Eq.~\eqref{eq:conditional_channels_parity} satisfies the GPI condition as long as $\Delta \chi T < \pi/2$. Note that the first two terms in Eq.~\eqref{eq:conditional_channels_parity} imply that one might obtain wrong parity measurement outcomes with a probability $O(p)$ if the ancilla is measured in $\ket{\pm}$. Such effective measurement errors can be suppressed to the second order by repeating the above parity measurement three times and performing majority voting, which will be discussed in the next section when we rigorously consider fault-tolerance. 


\section{Connection between GPI and fault-tolerance} \label{Sec:GPItoFT}
In this section, we establish the connection between GPI quantum control and fault-tolerance defined in Sec.~\ref{sec:fault_tolerance}. Let the bosonic mode be encoded in some bosonic code with a code projection $P_c$. 
\begin{proposition}
    Each AAO contained in a $t$-FT level-$1$ gadget with an ancilla initial state $|i\rangle$ and an ancilla measurement basis $\mc{B}_A$ has to be $t$-GPI with respect to $|i\rangle$, $\mc{B}_A$, and the code projection $P_c$. 
\end{proposition}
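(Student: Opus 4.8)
The plan is to prove this as a \emph{necessity} statement by fault localization: I would fix an arbitrary AAO inside the $t$-FT gadget, together with its prescribed initial state $\ket{i}$ and measurement basis $\mc{B}_A$, and show that the FT definitions of Sec.~\ref{sec:fault_tolerance} force each of its conditional channels $\bbra{r}\mc{G}^{[k]}(T)\kket{i}$, for every $k\le t$ and every $\ket{r}\in\mc{B}_A$, to be proportional to a channel whose Kraus operators obey the KL condition on $P_c$. That is exactly the content of the $t$-GPI condition in Def.~\ref{def:GPI_operation}, so establishing it for all $k$ and $\ket{r}$ finishes the proof.

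First I would concentrate the faults. Because the quantum jumps in distinct AAOs occur independently (the same fact invoked in the proof of Prop.~\ref{prop:correct_circuit}), I may consider the fault pattern in which at most $k$ jumps land in the chosen AAO while every other AAO is fault-free; the total fault count is then $m=k\le t$, and for the cat-code choice $f(m)=(m,\,m\theta_0 \bmod \pi)$ Property~1 of $f$ gives $f(m)\le f(t)$. Conditioning on the outcome $\ket{r}$ of the chosen AAO and on the no-jump outcomes of all the others, the induced bosonic channel factorizes into the ideal (no-jump, correct-outcome) maps of the surrounding AAOs composed with the single conditional channel $\bbra{r}\mc{G}^{[k]}(T)\kket{i}$ of the target AAO.

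Next I would feed a perfect input codeword into the gadget and invoke the FT definition appropriate to the gadget type (Def.~\ref{def:FT_gates}, \ref{def:FT_QEC}, \ref{def:FT_state_preparation}, or \ref{def:FT_measurement}). Since $f(m)\le f(t)$, each such definition guarantees that the conditional output is at most $f(m)$-far from the codespace \emph{and} that ideally decoding it reproduces the ideal logical result. The first clause places the residual error inside the code's correctable set; the second fixes which codeword it decodes to. Together they say the conditional gadget channel equals a correctable—hence KL-satisfying—error channel composed with the ideal logical map. Since the surrounding fault-free AAOs implement fixed codespace-preserving unitaries $V$ (with $VP_cV^\dagger=P_c$), conjugating the Kraus operators by them leaves $P_c K^{s\dagger}_{ri}K^{s'}_{ri}P_c\propto P_c$ intact, so the KL property descends from the whole-gadget channel to $\bbra{r}\mc{G}^{[k]}(T)\kket{i}$ itself. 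As $k\le t$ and $\ket{r}\in\mc{B}_A$ were arbitrary, the target AAO is $t$-GPI.

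I expect the main obstacle to be the transfer of correctability from the whole gadget down to a single AAO, specifically ruling out that a \emph{fault-free} later AAO inside the same gadget (for instance a QEC round) could repair an uncorrectable error the target AAO has already introduced. The key lever to make rigorous is that an error violating the KL condition on $P_c$ is by definition irreversible, so no fault-free subsequent operation—conditioned on its own no-jump outcome, hence a projective QEC map followed by a recovery unitary—can restore the lost logical information; a KL violation in $\bbra{r}\mc{G}^{[k]}(T)\kket{i}$ would therefore survive to the gadget output and contradict the ``$f(m)$-far and decodes correctly'' guarantee. One must also verify that genuine \emph{measurement} errors, such as the $p\,P_\mp\bullet P_\mp$ terms in the parity example of Eq.~\eqref{eq:conditional_channels_parity}, do not spuriously register as KL violations; these operators annihilate the even-parity cat codespace, so they drop out upon restriction to $P_c$ and leave the GPI property untouched, with their effect relegated to the classical post-processing handled by majority voting.
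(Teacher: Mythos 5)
Your proposal is correct and takes essentially the same route as the paper's proof, which argues the contrapositive in three lines: if an AAO were not $t$-GPI, some outcome-conditioned channel at order $k \leq t$ would contain non-correctable errors, so the gadget's output could no longer be guaranteed correct. You simply make explicit the steps the paper leaves implicit --- localizing the $k \leq t$ faults in the target AAO, conditioning on surrounding no-jump outcomes, and invoking the necessity direction of the Knill--Laflamme condition to rule out repair of a KL-violating error by downstream fault-free operations --- which strengthens rather than departs from the paper's argument.
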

\begin{proof}
Any $t$-FT gadget requires that if any $m \leq t$ faults occur during the protocol, the output is guaranteed to be correct. However, if one AAO is not $t$-GPI, there exists an ancilla measurement outcome $r$, conditioned on which the bosonic channel (see Eq.~\eqref{eq:GPI_operation}) contains non-correctable errors. As a result, the final output can no longer be guaranteed to be correct.
\end{proof}


Conversely, we can combine pieces of $t$-GPI operations to get a $t^{\prime} \leq t$-FT gadgets, as shown in Fig.~\ref{fig:gadget}. In order to make $t^{\prime} = t$, there are extra requirements for the AAOs, which are typically easier to satisfy than GPI. Instead of making rigorous statements about these requirements for generic gadgets, we will make case studies when constructing concrete FT gadgets. 
Nevertheless, we comment on some high-level ideas used to design the extra ingredients that can be combined with GPI to achieve fault tolerance here: (i) Operations are error transparent/closure against bosonic errors (see Sec.~\ref{appendix:error_transparency}); (ii) The propagation from ancilla faults to bosonic errors is linear; (iii) There exists at least one ancilla state $\ket{r}$ such that the ideal conditional bosonic channel $\bbra{r}\mc{G}^{[0]}\kket{i}$ gives the target operation. 

As the first example, we construct $1$-FT Z-axis rotation $Z(\theta)$ for the four-legged cat using the GPI SNAP gate presented in Sec.~\ref{sec:GPI_SNAP}. To implement a $Z(\theta)$ gate, we choose a GPI SNAP gate with $\Delta \chi T < \pi/2$ and 
\begin{equation}
    S(\vec{\theta}) = P_0 + P_3 + e^{i\theta}(P_2 + P_1),
    \label{eq:ET_SNAP}
\end{equation}
where $P_j := \sum_{i=0}\ket{4i + j}\bra{4i + j}$. We consider the same ancilla jump errors as those presented in Sec.~\ref{sec:GPI_SNAP}. In addition, we consider a single jump operator $a$ representing a single photon loss for the bosonic mode.
Then we implement the $1$-FT $Z(\theta)$ gate based on Algorithm~\ref{alg:Z-rotation} below. The $3$-level ancilla basis is denoted by $\ket{g}$,$\ket{e}$ and $\ket{f}$ according to Eq.~\eqref{eq:SNAP_Hamiltonian1}.
\begin{algorithm}
$o \gets e$. 
\tcp{$o$ records the ancilla measurement outcome} 
\While{$o \neq f$}{
Prepare the anilla in the $|g\rangle$ state, apply the GPI SNAP gate with $S(\vec{\theta})$ in Eq.~\eqref{eq:ET_SNAP} for a time $T = \pi/2\Omega$, and measure the ancilla in the $|g\rangle, |e\rangle, |f\rangle$ basis with an outcome $o \in \{g, e, f\}$. \\
\If{$o = e$}{Apply a phase rotation $e^{i\Delta \chi T a^{\dagger}a/2}$ to the bosonic mode.}
}
\caption{$1$-FT $Z(\theta)$ gate}
\label{alg:Z-rotation}
\end{algorithm}

Now, we verify that the above protocol satisfies the definition of a $1$-FT gate in Def.~\ref{def:FT_gates}. Here, we choose $f(m) = (m, m \Delta \chi T/2)$ with $\Delta\chi T/2 < \pi/4$.  Suppose the input error is of size $(k, \theta_0)$ and there are $m$ faults during the protocol. There are two cases where $(k, \theta_0) + f(m) \leq f(1)$. First, $m = 0$ and $(k, \theta_0) \leq (1, \Delta\chi T/2)$. Obviously, the gate is error-transparent to the phase rotation $e^{-i\theta a^{\dagger}a}$, i.e. it simply commutes through the gate and remains at the output, since it commutes with the system Hamiltonian (see Eq.~\eqref{eq:SNAP_Hamiltonian1} and \eqref{eq:SNAP_Hamiltonian2}). Moreover, as we shown in Appendix~\ref{sec:ET_SNAP}, the gate is also error-transparent to a single photon loss $a$ when using the form of $S(\vec{\phi})$ in Eq.~\eqref{eq:ET_SNAP}. Therefore, the input $(k \leq 1, \theta \leq \Delta \chi T/2)$ error simply remains at the output and stays correctable.
Second, $m = 1$ and $(k, \theta) = (0,0)$. In this case, either an ancilla dephasing, or an ancilla decay, or a single photon loss occurs during the protocol.
A single ancilla dephasing might cause the ancilla ending in $\ket{g}$ instead of $\ket{f}$ but does not add any error to the bosonic mode; A single ancilla decay from $\ket{f}$ to $\ket{e}$ only causes a correctable phase rotation with an angle $|\delta \theta| \leq \Delta \chi T/2 < \pi/4$~\footnote{An ancilla decay from $\ket{f}$ to $\ket{e}$ induces phase rotations in the range of $[0, \Delta\chi T]$ (see Eq.~\eqref{eq:channel_GPI_SNAP}). The feedback rotation by $-\Delta \chi T/2$ then shifts the phase rotation window to $[-\Delta\chi T/2, \Delta \chi T/2]$}; A single-photon loss simply remains at the output and stays correctable.  

As the second example, we construct a $1$-FT QEC protocol for correcting a single-photon loss. Note that we will present a full EC gadget correcting both photon loss and dephasing errors in the next section. The protocol utilizes the $1$-GPI parity measurement presented in Sec.~\ref{sec:GPI_parity_measurement}, with a $\chi$ mismatch $\Delta \chi T < \pi/2$. 
\begin{algorithm}[ht!]
$o_{i} \gets e$ for $i \in \{1,2,3\}$.
\tcp{$\{o_i\}_{i \in [3]}$ record three consecutive parity measurement outcomes}
\For{$i \gets 1\; \KwTo\; 3$}{
\While{$o_i = e$}{
Prepare the anilla in the $\ket{+}$ state, apply the dispersive coupling for a time $T = \pi/\chi_f$, and measure the ancilla in the $=\{\ket{+},\ket{-}, \ket{e}\}$ basis with an measurement outcome $o_i$.\\
\If{$o_i = e$}{Apply a phase rotation $e^{i\Delta \chi T a^{\dagger}a/2}$ to the bosonic mode.}
}
}
Apply a parity correction based on the majority voting over $\{o_i\}_{i \in [3]}$.
\caption{$1$-FT photon-loss correction}
\label{alg:FT_QEC}
\end{algorithm}

Now, we verify that the protocol in Alg.~\ref{alg:FT_QEC} satisfies the definition of a $1$-FT QEC in Def.~\ref{def:FT_QEC}. Similar to the previous $Z(\theta)$ gate example, we choose $f(m) = (m, m\Delta\chi T/2)$. Assume there is an input error of size $(k, 0)$ and $m$ faults occur during the protocol. Note that since we are only correcting single photon losses for now, we assume the input has no dephasing errors. 
To verify condition (i) in Def.~\ref{def:FT_QEC}, we consider either $k = 1$, $m = 0$ or $k = 0$, $m = 1$. In the earlier case, the single photon loss can be perfectly corrected and the output has no error; in the latter case, we consider either an ancilla dephasing, an ancilla decay, or a single photon loss. An ancilla dephasing may flip a single parity measurement outcome but does not affect the final majority voting; A single ancilla decay only causes a correctable phase rotation with amplitude $\leq \Delta \chi T/2 < \pi/4$, which is a correctable error; A single photon loss during the protocol either gets corrected or goes undetected but remains as a correctable single photon loss at the output. 

For condition (ii) in Definition~\ref{def:FT_QEC}, one simply observes that a single photon loss error at the input can be detected and then corrected (although a logical error may happen when combined with another photon loss during the protocol), while a single photon loss or an ancilla decay can cause at most a $f(1) = (1, \Delta \chi T/2)$ error that can go undetected and remains at the output. 

\section{Fault-tolerant operations of four-legged cat code} \label{sec:FT4legCat}
In this section, we focus on the four-legged cat and construct universal $1$-FT level-$1$ gadgets that can correct a single-photon loss and any single ancilla fault, using GPI operations.

The universal operation set we consider is 
\begin{equation}
    \mathcal{S} = \{\mr{EC}, Z(\theta), X(\phi), XX(\delta), \mathcal{P}_{\ket{+_L}}, \mathcal{M}_Z, \mathcal{M}_X\},
\end{equation}
including error correction, $Z$-axis rotation, $X$-axis rotation, $XX$ rotation ($\exp(-i\delta XX /2)$), state preparation in the $X$ basis, measurement in the $Z$ basis, and measurement in the $X$ basis.

One essential element for our construction is the GPI SNAP gate and GPI parity measurement described in Sec.~\ref{sec:GPI_SNAP} and Sec.~\ref{sec:GPI_parity_measurement}, respectively. Recall that both of these two operations use a three-level ancilla, which is dispersive coupled to the bosonic mode via $-(\chi_e \ket{e}\bra{e} + \chi_f \ket{f}\bra{f})\otimes a^{\dagger}a$, potentially with a $\chi$ mismatch $\Delta \chi := \chi_f - \chi_e$.
Denote the gate time for the SNAP gates as $T$ and that for a parity measurement as $T_P$. Typically $T \gg T_P$ since the driving strength $\Omega$ for the SNAP gate (see Eq.~\eqref{eq:SNAP_Hamiltonian2}) is much smaller than $\chi_f$ in order for the rotating-wave-approximation to hold~\cite{reinhold2020error}. We choose $f(m) = (m, m\Delta \chi T/2)$ with $\Delta\chi T/2 < \pi/8$~\footnote{Here, we choose $\Delta \chi T/2 < \pi/8$ instead of $\pi/4$ so that the relative phase shift between two data bosonic modes is smaller than $\pi/4$, which will be used to prove the fault tolerance of the two-qubit logical gate ($XX$ rotation)} for proving the fault-tolerance of the gadgets. Unless specially noted, all the SNAP gates and parity measurement gadgets we use have a $\chi$ mismatch $\Delta \chi$. 

Similar to the previous sections, we consider $\{a, \ket{e}\bra{f}, \ket{g}\bra{e}, \sum_{s \in \{e,f\}}\Delta_s \ket{s}\bra{s}\}$ as the errors, representing a single photon loss, an ancilla decay from $\ket{f}$ to $\ket{e}$, an ancilla decay from $\ket{e}$ to $\ket{g}$, and an ancilla  dephasing, respectively.

\subsection{Z-axis rotation}
A $1$-FT $Z$-axis rotation with an arbitrary angle $(\theta)$ using GPI SNAP gate is presented in Alg.~\ref{alg:Z-rotation} in the previous section. Note that a $1$-FT logical gate using strictly PI SNAP gate (with no $\chi$ mismatch) has been experimentally demonstrated for a small binomial bosonic code~\cite{reinhold2020error}. Here, the main difference is that our protocol allows a finite $\chi$ mismatch.

\subsection{X-axis rotation}
In the large $\alpha$ limit, a $X$-axis rotation is given by 
\begin{equation}
    X(\phi) \approx e^{i\phi}\ket{C^+_{\alpha}}\bra{C^+_{\alpha}} + \ket{C^+_{i\alpha}}\bra{C^+_{i\alpha}},
\end{equation}
where $\ket{C^{\pm}_{\beta}}:= c^{\pm}_{\beta}(\ket{\beta} \pm \ket{-\beta})$, with $c^{\pm}_{\beta}$ being normalization constants. We implement $X(\phi)$ by adding a phase $\phi$ to the subspace spanned by the two coherent states $\ket{\alpha}$ and $\ket{-\alpha}$. As illustrated in Fig.~\ref{fig:level_1_gadgets}(a), we first displace the cavity by $\alpha$ and apply a phase shift to the vacuum $S(\vec{\phi}) = e^{i\phi}\ket{0}\bra{0} + I - \ket{0}\bra{0}$ using the SNAP gate (see Sec.~\ref{sec:GPI_SNAP}). Then we displace the cavity by $-2\alpha$ and apply another $S$. Finally, we displace the cavity by $\alpha$ back to the codespace. The joint operation is:
\begin{equation}
\begin{aligned}
    U_X & = D(\alpha) S(\vec{\phi}) D(-2\alpha) S(\vec{\phi}) D(\alpha) \\
    &= [D(\alpha)S(\vec{\phi}) D(\alpha)^{\dagger}][D(-\alpha)S(\vec{\phi}) D(\alpha)^{\dagger}] \\
    & \approx e^{i\theta}P_{\pm \alpha} + I - P_{\pm \alpha},
    \label{eq:X-rotation}
\end{aligned}
\end{equation}
where $P_{\pm \alpha} := \ket{\alpha}\bra{\alpha} + \ket{-\alpha}\bra{-\alpha} = \ket{C^+_{\alpha}}\bra{C^+_{\alpha}} + \ket{C^-_{\alpha}}\bra{C^-_{\alpha}}$.

We now show that this gate is $1$-FT if the $\chi$-mismatch during the SNAP gates is zero. Assume there is a $(k,\delta\theta)$ input error and $m$ faults occur during the gate.
Again, for $1$-FT gate (see Def.~\ref{def:FT_gates}), we only need to consider either $(k = 0,\delta \theta = 0)$, $m = 1$, or  $(k \leq 1,\delta\theta \leq \Delta\chi T/2)$, $m = 0$. 

First, we consider a single fault occurring during $U_X$. A single-photon loss simply commutes through the entire gate since the two SNAP gates $S$ are error-transparent (see Appendix~\ref{appendix:error_transparency}) and $D(\alpha)$ commutes with $a$ up to a constant. A single-ancilla decay or dephasing during the $S$ gate does not cause any error to the bosonic mode when assuming perfect $\chi$ matching. Therefore, a single fault during the gate causes at most a $(1,0) < f(1)$-error at the output, which is correctable. 

Second, we consider a $(k \leq 1,\delta \theta \leq \Delta\chi T/2)$ input error $e^{i\delta\theta a^{\dagger}a} a^k$. We first notice that $U_X e^{i\delta\theta a^{\dagger}a}a^k P_c \propto a^k U_X e^{i\delta\theta a^{\dagger}a} P_c$ since $U_X$ is error-transparent for $a^k$ (see Eq.~\eqref{eq:X-rotation}). Here, $P_c:= \ket{+_L}\bra{+_L} + \ket{-_L}\bra{-_L} \approx \ket{C_\alpha^+}\bra{C_\alpha^+} + \ket{C_{i\alpha}^+}\bra{C_{i\alpha}^+}$ is the projector onto the code space of the four-legged cat. Then we only need to make sure that $U_X$ is also error-transparent to dephasing $e^{i\delta\theta a^{\dagger}a}$.  Let $E := U_X e^{i\delta\theta a^{\dagger}a} U_X^{\dagger}$ be the effective error that $e^{i\delta\theta a^{\dagger}a}$ propagates to after the gate. $E$ satisfies
\begin{equation}
\begin{aligned}
    E P_c & = e^{i\delta\theta a^{\dagger}a} P_c + (1 - e^{-i\phi})(P_{\pm \alpha} - I)e^{i \delta\theta a^{\dagger}a}\ket{C^+_{\alpha}}\bra{C^+_{\alpha}} \\
    & + (e^{i\phi} - 1)P_{\pm \alpha} e^{i \delta\theta a^{\dagger}a} \ket{C^+_{i\alpha}}\bra{C^+_{i\alpha}},
    \label{eq:propagated_dephasing}
\end{aligned}
\end{equation}
where we can see that $U_X$ is not error-transparent against the dephasing due to the last two terms of Eq.~\eqref{eq:propagated_dephasing}. Fortunately, we can make it approximately error-transparent by modifying the SNAP gate:
\begin{equation}
    S(\vec{\phi}) \rightarrow e^{i\phi}(P_{[s]}) + I - P_{[s]},
\end{equation}
where $P_{[s]}:= \sum_{i=0}^s \ket{i}\bra{i}$ is the projection onto the $s$-neighborhood of vacuum.
Then the gate unitary becomes $U_X \rightarrow e^{i\phi}P_{\pm \alpha, s} + I - P_{\pm \alpha, s}$, where $P_{\pm\alpha, s} := D(\alpha)P_{[s]}D^{\dagger}(\alpha) + D(-\alpha)P_{[s]}D^{\dagger}(-\alpha)$ is the projection onto a neighborhood of $\ket{\alpha}$ and $\ket{-\alpha}$. Now, the effective error for the dephasing error becomes
\begin{equation} \label{eq:ErrorTransXrot}
\begin{aligned}
    E P_c & = e^{i\delta\theta a^{\dagger}a} P_c + (1 - e^{-i\phi})(P_{\pm \alpha, s} - I)e^{i \delta\theta a^{\dagger}a}\ket{C^+_{\alpha}}\bra{C^+_{\alpha}} \\
    & + (e^{i\phi} - 1)P_{\pm \alpha, s} e^{i \delta\theta a^{\dagger}a} \ket{C^+_{i\alpha}}\bra{C^+_{i\alpha}}.
\end{aligned}
\end{equation}
For $|\delta\theta| \leq \Delta\chi T/2 < \pi/8$, we can choose $s=O(|\alpha|^2)$ such that the last two terms vanish in the $\alpha \gg 1$ limit, i.e., 
\begin{equation}
    \begin{aligned}
        \langle C^+_{\alpha e^{i\delta \theta}}| P_{\pm\alpha, s} |C^+_{\alpha e^{i\delta \theta}}\rangle & \rightarrow 1, \\
        \langle C^+_{i\alpha e^{i\delta \theta}}| P_{\pm\alpha, s} |C^+_{i\alpha e^{i\delta \theta}}\rangle & \rightarrow 0. \\
    \end{aligned}
\end{equation}
Then we have $E P_c \approx e^{i\delta\theta a^{\dagger}a} P_c$ and the gate is error-transparent to dephasing as well.

Note that $1$-fault-tolerance can no longer be rigorously attained (even in the larger-$\alpha$ limit) if using SNAP gates $S$ with a finite $\chi$-mismatch. Taking the second $S$ gate as an example, and suppose it has a $\chi$-mismatch $\Delta \chi^{\prime}$, a single ancilla decay could cause a $e^{i\delta\theta^{\prime} a^{\dagger}a}$ phase rotation with $|\delta\theta^{\prime}| \leq \Delta\chi^{\prime}T/2$ after $S$, which propagates to $e^{-i\delta \theta^{\prime} [a^{\dagger}a + \alpha(a + a^{\dagger}) + \alpha^2]}$ after the later displacement. The extra displacement error after the gate is uncorrectable. Thus a single ancilla fault during the $X$-rotation can cause a first-order logical error with a probability $c p$, where $c$ is a constant depending on $\Delta \chi^{\prime} T$. Nevertheless, if $\Delta \chi^{\prime} T$ is small enough, the coefficient $c$ can be made comparable or even smaller than $p$, and we can still achieve good error suppression in practical regimes, as is shown in later numerical results in Fig.~\ref{fig:QEC_Infidelities}(a). 



\subsection{XX rotation \label{sec:XX_rotation}}
For large $\alpha$, the $XX$ rotation reads
\begin{equation}
\begin{aligned}
    XX(\delta) & \approx e^{i\delta} (\ket{C^+_{\alpha}, C^+_{\alpha}} \bra{C^+_{\alpha}, C^+_{\alpha}} + \ket{C^+_{i\alpha}, C^+_{i\alpha}} \bra{C^+_{i\alpha}, C^+_{i\alpha}}) \\
    & + (\ket{C^+_{\alpha}, C^+_{i\alpha}} \bra{C^+_{\alpha}, C^+_{i\alpha}} + \ket{C^+_{i\alpha}, C^+_{\alpha}} \bra{C^+_{i\alpha}, C^+_{\alpha}}).
\end{aligned}
\end{equation}
We implement $XX(\delta)$ by adding a phase $\delta$ to the subspace spanned by $\ket{\pm \alpha, \pm \alpha}$ and $\ket{\pm i\alpha, \pm i\alpha}$. As illustrated in Fig.~\ref{fig:level_1_gadgets}(b), we interfere the two modes through a $50:50$ beamsplitter, apply the number dependent phase shift $S(\vec{\delta}) = e^{-i\delta}\ket{0}\bra{0} + I - \ket{0}\bra{0}$ to both ports, and then interfere through another $50:50$ beamsplitter:
\begin{equation}
U_{XX} = \mathrm{BS}(\frac{\pi}{2})^{\dagger} (S\otimes S) \mathrm{BS}(\frac{\pi}{2}),
\label{eq:U_XX}
\end{equation}
where $\mathrm{BS}(\theta):= \exp [\frac{\theta}{2}(a b^{\dagger} - a^{\dagger}b)]$ with $a$ and $b$ denoting the annihilation operator of the two involved modes, respectively.  

To understand the effect of $U_{XX}$, we consider a coherent-state input $\ket{\alpha, \beta}$. The first BS interferes the two coherent states: 
\begin{equation}
    \mr{BS}\ket{\alpha, \beta}  = \ket{(\alpha + \beta)/\sqrt{2}, (\alpha - \beta)/\sqrt{2}}, 
    \label{eq:BS_effect}
\end{equation}
We take the approximation $S\ket{\gamma} \approx e^{-i\delta \mathbbm{1}[\gamma=0]}\ket{\gamma}$, where $\mathbbm{1}[x]$ is the indicator function. Then the two SNAP gates in Eq.~\eqref{eq:U_XX} add a nontrivial phase to the R.H.S. of Eq.~\eqref{eq:BS_effect} if $\alpha = \beta$ or $\alpha = - \beta$:
\begin{equation}
    (S\otimes S)\mr{BS}\ket{\alpha, \beta} = e^{-i\delta (\mathbbm{1}[\alpha=\beta] + \mathbbm{1}[\alpha=-\beta])} \ket{\frac{\alpha + \beta}{\sqrt{2}}, \frac{\alpha - \beta}{\sqrt{2}}}.
\end{equation}
Finally, the last BS restores the input coherent state potentially with an extra phase:
\begin{equation} \label{eq:UXXform}
   U_{XX} \ket{\alpha,\beta}= e^{-i\delta (\mathbbm{1}[\alpha=\beta] + \mathbbm{1}[\alpha=-\beta])} \ket{\alpha, \beta}.
\end{equation}
We remark that, when $\alpha$ and $\beta$ are only chosen from a set of discrete values $\{\alpha_i\}_i$ which are well-separated in the phase space, Eq.~\eqref{eq:UXXform} provides an exact expression of the action of $U_{XX}$. The rigorous form of $U_{XX}$ is given in Eq.~\eqref{eq:UXXform2} in Appendix~\ref{appendix:ET_XX_rotation}.
To conclude, a two-mode coherent state accumulates a nontrivial phase if and only if the two coherent states have matched amplitudes and aligned/anti-aligned phases. Let $P_{\pm(i)\alpha}$ be the projection onto a four-dimensional subspace spanned by $\ket{\alpha}, \ket{-\alpha}, \ket{i\alpha}, \ket{-i\alpha}$, we then have 
\begin{equation}
\begin{aligned}
& P_{\pm(i)\alpha}\otimes P_{\pm(i)\alpha} U_{XX} P_{\pm(i)\alpha}\otimes P_{\pm(i)\alpha} \\
& = e^{i\delta}(P_{\pm \alpha}\otimes P_{\pm \alpha} + P_{\pm i\alpha}\otimes P_{\pm i\alpha})\\
&\quad \quad  + ( P_{\pm\alpha}\otimes P_{\pm i\alpha} + P_{\pm i\alpha}\otimes P_{\pm\alpha}).
\end{aligned}
\label{eq:U_XX_alpha_projections}
\end{equation}
Note that Eq.~\eqref{eq:U_XX_alpha_projections} implies $P_c^{(AB)}  U_{XX} P_c^{(AB)} = XX(\theta)$ where $P_c^{AB} = P_c^{(A)}\otimes P_c^{(B)}$ is the projector onto the collective code space of 4-legged cat on bosonic modes $A$ and $B$. 

Now, we prove this $XX(\theta)$ gate is $1$-FT according to Def.~\ref{def:FT_gates}. We first consider the case where there is an input error $e^{i(\delta \theta_a a^{\dagger} a + \delta \theta_b b^{\dagger}b)}b^{k_b}a^{k_a}$ with $k_a, k_b \leq 1$ and $|\delta \theta_a|, |\delta \theta_b| \leq \Delta\chi T/2 < \pi/8$, but no fault during the gate. $b^{k_b}a^{k_a}$ simply commutes through the gate when acting on the code space due to the error-transparency form of $U_{XX}$ in Eq.~\eqref{eq:U_XX_alpha_projections}. Similar to proof for the $X$-axis rotation in Eq.~\eqref{eq:ErrorTransXrot}, one can show that $U_{XX}$ is also approximately error-transparent to the phase rotation $e^{i(\delta \theta_a a^{\dagger} a + \delta \theta_b b^{\dagger}b)}$ by replacing $S \rightarrow e^{-i\delta}(\sum_{i=0}^s \ket{i}\bra{i}) + I - (\sum_{i=0}^s \ket{i}\bra{i})$. We put the proof in Appendix~\ref{appendix:ET_XX_rotation}. As a result, the input error commutes through the gate and remains correctable.

To complete the proof that the $U_{XX}$ is $1$-FT, we also need to show that for a perfect input state and any single fault during $U_{XX}$, each of the two output modes has an error of size at most $f(1) = (1, \Delta\chi T/2)$. As shown in Appendix~\ref{appendix:ET_XX_rotation}, a single-photon loss during the gate propagates to an error of the form $c_1 a + c_2 b$, where $c_1, c_2 \in \mathbb{C}$, due to the error transparency of the SNAP gates and the error closure of the BSs. By using a $\chi$-matched ancilla for each SNAP gate, any single ancilla fault does not propagate to any bosonic data errors. 

We note that similar to the $X$-axis rotation, the $XX$ rotation is not strictly $1$-FT if there is a finite $\chi$-mismatch when executing the SNAP gates, as an induced phase rotation would propagate to uncorrectable errors after the last BS. Nevertheless, as we show numerically in Fig.~\ref{fig:QEC_Infidelities}, high-fidelity $XX$ rotation can still be realized in practical regimes even with a finite but small $\chi$-mismatch.

\subsection{X-basis state preparation}
The $+1$ $X$ basis eigenstate is a two-legged cat state with an even photon parity $\ket{+_L} = \ket{C^+_{\alpha}} = c^+_{\alpha} (\ket{\alpha} + \ket{-\alpha})$. Observe that $\ket{+_L} \propto P_+ \ket{\alpha}$, i.e. the even-parity projection of a coherent state $\ket{\alpha}$. Thus, we can prepare the even cat state by first preparing a coherent state $\ket{\alpha}$, and then performing a non-destructive parity measurement to project it to an even cat state (up to a parity frame update).
For $1$-FT state preparation, unlike the $1$-FT photon-loss correction protocol in Alg.~\ref{alg:FT_QEC}, we do not need to repeat the parity measurement three times as it allows a noisy output with up to $f(1) = (1, \Delta \chi T/2)$ error for up to a single fault during the parity measurement (see Def.~\ref{def:FT_state_preparation}). Concretely, we implement the following protocol:
\begin{algorithm}[ht!]
Prepare the bosonic mode in the coherent state $\ket{\alpha}$.\\
$o \gets e$ 
\tcp{records the parity measurement outcome}
\While{$o = e$}{
Prepare the ancilla in the $\ket{+}$ state, apply the dispersive coupling for a time $T = \pi/\chi_f$, and measure the ancilla in the $=\{\ket{+},\ket{-}, \ket{e}\}$ basis with an measurement outcome $o$.\\
\If{$o = e$}{Apply a phase rotation $e^{i\Delta \chi T a^{\dagger}a/2}$ to the bosonic mode.}
}
Apply a parity correction if $o = -$.
\caption{$1$-FT $X$-basis state preparation}
\label{alg:FT_X_preparation}
\end{algorithm}

Note that the $X$-basis state preparation here allows a finite $\chi$-mismatch.

\subsection{Z-basis measurement \label{sec:Z-basis_measurement}}
The $Z$-basis measurement admits the form of measuring photon number modulo 4. In order to obtain the correct logical measurement outcome in the presence of a single-photon loss, as required by Def.~\ref{def:FT_measurement}, we insert a non-destructive parity measurement before each logical $Z$ measurement. The full FT protocol is presented in Alg.~\ref{alg:FT_Z_measurement}.
\begin{algorithm}[ht!]
\For{$i \gets 1$ \KwTo $3$}{
$o_{i,a} \gets e$;\\
\While{$o_{i,a} = e$}{
Prepare the anilla in the $\ket{+}$ sate, apply the dispersive coupling for a time $T = \pi/\chi_f$, and measure the ancilla in the $\{\ket{+}, \ket{-}, \ket{e}\}$ basis with an measurement outcome $o_{i,a}$.}
$o_{i,b} \gets e$;\\
\While{$o_{i,b} = e$}{
\If{$o_{i,a} = +$}{
Prepare the anilla in the $\ket{+}$ sate, apply the dispersive coupling for a time $T =  \pi/2\chi_f$, and measure the ancilla in the $\{\ket{+}, \ket{-}, \ket{e}\}$ basis with an measurement outcome $o_{i,b}$.
}
\Else{Prepare the anilla in the $\ket{+}$ sate, apply the dispersive coupling for a time $T = \pi/2\chi_f$, apply an ancilla phase rotation $e^{-i\frac{\pi}{2}\ket{f}\bra{f}}$, and measure the ancilla in the $\{\ket{+}, \ket{-}, \ket{e}\}$ basis with an measurement outcome $o_{i,b}$.}
}
}
Obtain the logical measurement outcome as the majority voting of $\{o_{i,b}\}_{i=1,2,3}$. 
\caption{$1$-FT $Z$-basis measurement}
\label{alg:FT_Z_measurement}
\end{algorithm}

Note that each modulo-$4$ photon number measurement $o_{i,b}$ is conditioned on the parity measurement outcome $o_{i,a}$, i.e. we distinguish the photon number between $0\mod4$ and $2\mod4$ for even parity ($o_{i, b} = +$) and between $3\mod4$ and $2\mod4$ for odd parity ($o_{i,a} = -$). 

To verify that the $1$-FT measurement condition in Def.~\ref{def:FT_measurement} holds, one simply observe that a single photon loss before the measurement can be captured by the parity measurements, and any single fault during the measurement protocol can only cause at most one measurement error on one of $\{o_{i,b}\}_{i=1,2,3}$, and thus does not affect the majority voting. Note that the $Z$-basis measurement here can also allow a finite $\chi$-mismatch between the ancilla and the bosonic mode, as dephasing errors commute with the measurements. 

\subsection{X-basis measurement}
The $X$-basis measurement amounts to distinguishing the phase of the coherent states modulo $\pi$. We achieve this by interfering the mode $a_i$ with another mode $b_i$ in a coherent state $\ket{\alpha}$ through a $50:50$ beam splitter and measuring if the two output modes $a_o, b_0$ have less than $s$ photons. We obtain a logical $-$ if both modes have more than $s$ photons and a logical $+$ otherwise, i.e. we implement the following POVMs:
\begin{equation}
\begin{aligned}
    M_- & = [I_{a_o} - \sum_{i=0}^{s}(|i\rangle_{a_o}\langle i|)]\otimes [I_{b_o} - \sum_{i=0}^{s}(|i\rangle_{b_o}\langle i|)]\\
    & \approx (\hat{I}_{a_i} - \sum_{i=0}^{s}(|\alpha, i\rangle_{a_i} \langle \alpha, i|))(I_{a_i} - \sum_{i=0}^{s}(|-\alpha, i\rangle_{a_i} \langle -\alpha, i|)), \\
    M_+ &= I - M_-& \\
    & \approx \sum_{i=0}^{s}(|\alpha, i\rangle_{a_i} \langle \alpha, i|)) + \sum_{i=0}^{s}(|- \alpha, i\rangle_{a_i} \langle - \alpha, i|)),
\end{aligned}
\end{equation}
where each subscript labels the mode that a state or an operator belongs to.


Measuring if one mode has less than $s$ photons can be realized by dispersively coupling it to a qubit, driving the qubit from $\ket{g}$ to $\ket{e}$ conditioned on the mode having less than $s$ photons, and measuring the qubit in the $\ket{g}, \ket{e}$ basis. In the interaction picture associated with the dispersive coupling, the Hamiltonian reads
\begin{equation}
    \tilde{H}_{AC} = \Omega \left(\ket{e}\bra{g}\otimes P_{[s]} + h.c.\right).
\end{equation}
Recall that $P_{[s]} := \sum_{i=0}^s \ket{i}\bra{i}$.
In the absence of errors, the $0$-th order conditional operations are 
\begin{equation}
    \begin{aligned}
        \langle\langle e| \mathcal{G}^{[0]}(T) |g\rangle\rangle & = P_{[s]} \bullet P_{[s]} + O(p),\\
        \langle\langle g| \mathcal{G}^{[0]}(T) |g\rangle\rangle & = (I - P_{[s]}) \bullet (I - P_{[s]}) + O(p).\\
    \end{aligned}
    \label{eq:photon_number_measurement}
\end{equation}
A single fault will affect the measurement outcome or cause a bosonic error diagonal in the Fock basis. The former can be tolerated by repeating the above measurement three times and performing majority voting, while the latter simply commutes with the measurements and does not affect the (later) measurement outcome. 


To check this $X$-basis measurement scheme is $1$-FT according to Def.~\ref{def:FT_measurement}, we also need to verify that the measurement outcome is correct for any input error $(k,\theta) \leq (1, \Delta\chi T/2)$. First, a single-photon loss does not affect the measurement outcome since $a$ does not change the phase of any coherent states. Second, a small phase rotation rotates $\ket{\alpha}$ to $\ket{\alpha e^{i\theta}}$. Similar to the argument for the $X$-axis rotation, the $X$-basis measurement outcome is correct as long as the POVM $M_+$ captures $\ket{\pm \alpha e^{i\theta}}$ but not $\ket{\pm i\alpha e^{i\theta}}$.

\subsection{Error correction}
\begin{figure}
    \centering
    \includegraphics[width=0.5\textwidth]{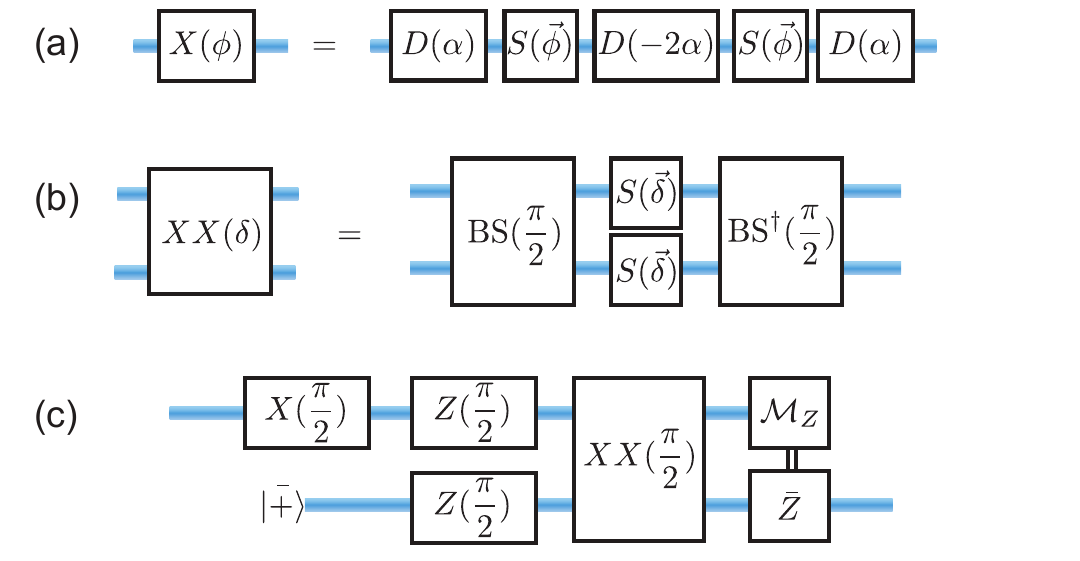}
    \caption{Illustration of the $X$-axis rotation (a), $XX$ rotation (b), and teleportation-based EC (c) in the level-$1$ gadgets $\mc{S}$ for the four-legged cat.
    }
    \label{fig:level_1_gadgets}
\end{figure}
To correct both loss and dephasing errors, i.e. data errors with $(k >0,  |\theta| >0)$, we employ a Knill-type QEC~\cite{knill2005scalable} using a teleportation circuit shown in Fig.~\ref{fig:level_1_gadgets}(c). 
A fresh ancilla bosonic mode $b$ is initialized in $\ket{+}$ state and gets entangled with the data mode $a$ via a $XX$ rotation along with singe-mode rotations. The data mode $a$ is then measured in the $Z$ basis, where the measurement outcome is used to apply a feedback $Z$ operation on the $b$ mode. 
All the gadgets here are $1$-FT using previous constructions. Moreover, they are error-transparent to any input error on the $a$ mode smaller than $f(1) = (1, \Delta \chi T/2)$. Therefore, the input data error simply commutes through all the gates and does not propagate to the $b$ mode. Furthermore, the $1$-FT $Z$-basis measurement is correct for an error smaller than $f(1)$. Therefore, such an input error can be corrected by the EC gadget. 

To verify the $1$-FT EC conditions, we need to further show that a single fault during the teleportation circuit only leads to a correctable residual error of size at most $f(1)$ at the output of the $b$ mode. Since we are using $1$-FT gates, the output for the $a$ or $b$ mode (before the $Z$ measurement) has an error at most $f(1)$.

\begin{figure}
    \centering
    \includegraphics[width=0.5\textwidth]{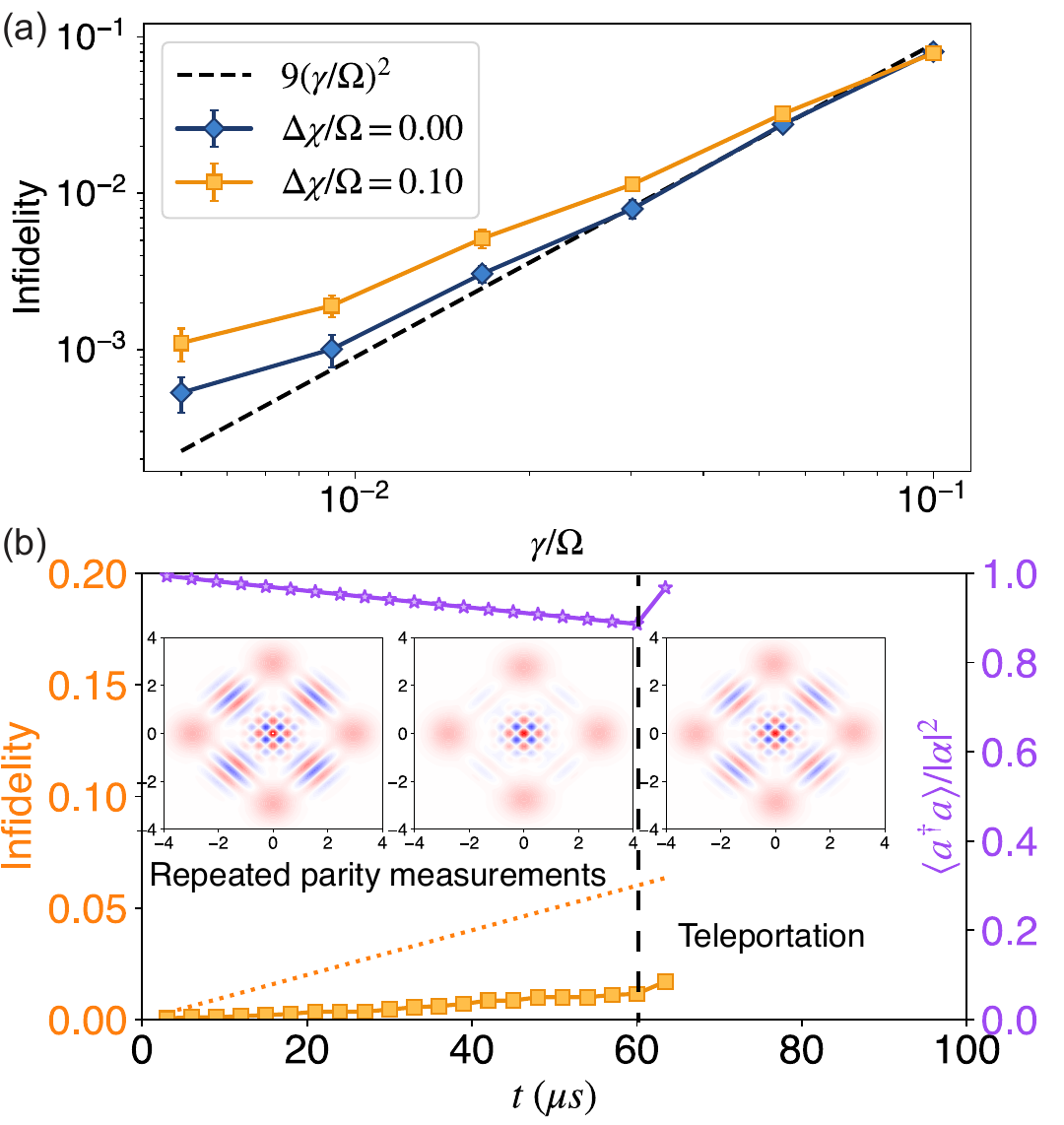}
    \caption{(a) Average infidelities of an error-correction gadget using teleportation in Fig.~\ref{fig:level_1_gadgets}(c) as a function of $\gamma/\Omega$ with perfect $\chi$ matching (blue line) or finite $\chi$ mismatches (orange line). Here, we use experimental parameters from Ref.~\cite{reinhold2020error} for the coherent interaction strengths $\chi_f = 2\pi\times 1$MHz, $\Omega = 0.3\chi_f$, $g_{\mr{BS}} = 2\chi_f$. 
    We consider single-photon loss, ancilla decay from $f$ to $e$, ancilla decay from $e$ to $g$, and ancilla dephasing $\mc{D}[\ket{e}\bra{e} + 2\ket{f}\bra{f}])$ with rates $\kappa$, $\gamma_{f\rightarrow e}$, $\gamma_{e\rightarrow g}$, and $\gamma_{\phi}$, respectively. We assume the ancilla error rates are much larger than the cavity loss rate and set $\gamma_{f\rightarrow e} = \gamma_{e\rightarrow g} = \gamma$, $\gamma_{\phi} = \gamma/4$, and $\kappa = \gamma/10$~\cite{reinhold2020error}.
    We choose $\alpha = 2.9$, which is a sweet spot for the four-legged cat that minimizes the back action of photon loss~\cite{li2017cat}. 
    (b) The accumulation of average infidelity and decay of mean photon number $\langle a^{\dagger}a\rangle$ for 40 rounds of repeated parity measurements (infidelities are shown for every two rounds) followed by teleportation. We use the same coherent parameters $\chi_f, \Omega$ and $g_{\mr{BS}}$ as in (a), a finite $\chi$ mismatch $\Delta \chi = \Omega/10$, and the experimental error rates from Ref.~\cite{reinhold2020error}: $\kappa = 2$KHz, $\gamma_{f\rightarrow e} = \gamma_{e \rightarrow g} = \gamma = 20$KHz and $\gamma_{\phi} = 5$KHz (with the same ratios between these error rates are in (a)). The teleportation pumps energy into the system and suppresses the random phase rotations caused by $\Delta\chi$. The three Wigner plots depict the density matrix at the input, before and after the teleportation respectively. 
    } 
    \label{fig:QEC_Infidelities}
\end{figure}


As shown in Fig.~\ref{fig:QEC_Infidelities}(a), we numerically evaluate the average infidelity of the teleportation gadget in Fig.~\ref{fig:level_1_gadgets}(c). In the absence of $\chi$ mismatch (see the blue curve),  we show that it has an error rate that scales as $(\kappa/\Omega)^2$, manifesting the fault tolerance of its composing gadgets, which cover the entire $\mc{S}$ other than the $X$-basis measurement. There is an error floor in the low $\kappa/\Omega$ regime, which is exponentially suppressed by $|\alpha|^2$, due to the finite-size imperfection of the $X$ rotation and the $XX$ rotation. In the presence of a finite $\chi$ mismatch, a rigorous second-order error suppression can no longer be attained due to the induced random phase rotations during the $X$- and $XX$-rotation gates. However, sufficient error suppression can still be achieved with a finite but small $\chi$-mismatch in practically relevant regimes (see the orange and gree curves). 

In practice, where photon loss is typically the predominant error source, repeated parity measurements that correct photon losses (see Alg.~\ref{alg:FT_QEC}) suffice for extending the lifetime of the cats. Such a robust memory that reaches the break-even point has been experimentally demonstrated~\cite{ofek2016extending}. 
However, only parity measurements are not enough to protect the cats during long computational tasks as the mean photon number would keep decaying (the parity measurement and gates in $\mc{S}$ are energy-preserving operations that commute with $a^{\dagger}a$) due to deterministic energy loss to the environment. We propose to solve this problem by inserting the teleportation gadget periodically in between certain rounds of parity measurements, which pumps energy into the system and restores the amplitude of the cats. Furthermore, the teleportation can suppress the accumulation of random phase rotations if, for example, there is some finite $\chi$-mismatch or small cavity dephasing errors $\kappa_{\phi}\mc{D}[a^{\dagger}a]$.
We demonstrate such effects numerically in Fig.~\ref{fig:QEC_Infidelities}(b). 

\section{Concatenated QEC and hardware-efficient fault-tolerance} \label{Sec:concatenatedQEC}
With the set of $1$-FT level-$1$ gadgets in $\mc{S}$, we can concatenate the level-$1$ logical qubits (four-legged cats) with a level-$2$ qubit code for arbitrary error suppression. We show such a concatenated architecture in Fig.~\ref{fig:concatenation_layout}. The basic elements for each level-$1$ qubit are simply a storage bosonic mode and a three-level ancilla that are dispersively coupled. The ancilla is used for the fault-tolerant operation of the bosonic qubit in each storage mode, including state preparation, photon-loss correction, gates, and measurements (see Sec.~\ref{sec:FT4legCat}).  In addition, a small number of extra bosonic modes shared by neighboring storage modes, which we refer to reservoir modes, are used to pump energy into the storage modes periodically via teleportation (see Fig.~\ref{fig:level_1_gadgets}(c)).   
\begin{figure}
    \centering
    \includegraphics[width=0.5\textwidth]{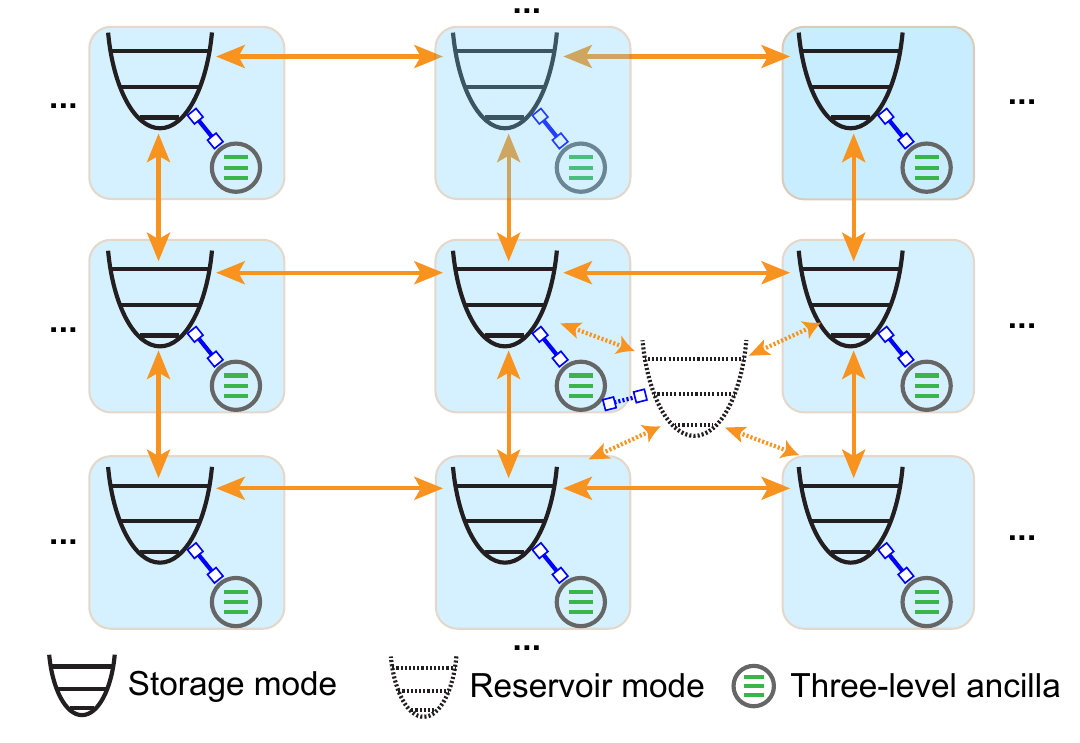}
    \caption{Hardware layout for concatenated 2D codes with four-legged cats. Each level-$1$ logical qubit (blue box) consists of a storage bosonic mode and a three-level ancilla, which are dispersively coupled. BS coupling between neighboring storage bosonic modes is required for the level-$2$ QEC. In addition, reservoir modes (with only one shown here as an example) shared between neighboring storage modes are used to pump energy into the system via teleportation (see Fig.~\ref{fig:level_1_gadgets}(c)). }
    \label{fig:concatenation_layout}
\end{figure}

The level-$2$ QEC requires certain couplings between level-$1$ qubits. Importantly, we can achieve this by introducing only BS coupling between nearest-neighbor storage bosonic modes (see Fig.~\ref{fig:concatenation_layout}) for 2D stabilizer codes. The level-$2$ syndrome-extraction circuits are made of non-destructively measurement of high-weight Pauli operators, featuring a sequence of two-qubit entangling gates such as the CNOT gate. In Fig.~\ref{fig:concatenation_circuit}(a), we show how one can get a level-$1$ CNOT gate using $1$-FT single-mode and two-mode rotations in $\mc{S}$. Although the complied circuit is long with a depth $6$, we note that one can usually reduce the depth per CNOT gate when considering the full stabilizer measurement circuits. As an example, we can measure a weight-$n$ $X$ Pauli operator using a circuit of depth $2n + 4$ (see Fig.~\ref{fig:concatenation_circuit}(b)). We leave the evaluation and optimization of the error rates of level-$1$ gates, e.g. the CNOT gate, as well as the threshold and resource overhead of concatenated codes to future work. Nevertheless, we remark that each CNOT gate (se Fig.~\ref{fig:concatenation_circuit}(a)) uses similar gadgets as those for teleportation (see Fig.~\ref{fig:level_1_gadgets}(c)), and each CNOT gate in a syndrome extraction depth (see Fig.~\ref{fig:concatenation_circuit}(b)) has a similar depth as the teleportation on average, we expect the CNOT gates have a similar error rate as the teleportation shown in Fig.~\ref{fig:QEC_Infidelities}(b). Using this rough estimates, a gate error rate below $10^{-2}$, which corresponds to the gate error threshold for the surface codes, is achievable using the current circuit-QED hardware.

To sum up, our construction of $\mc{S}$ in this work enables a practical, hardware-efficient architecture for fault-tolerant quantum computing, which features only one bosonic mode and one qutrit per level-$1$ logical qubit and only requires low-order physical interactions (dispersively coupling and beam-splitter interaction) that have been experimentally demonstrated. Furthermore, realizing high-fidelity level-$1$ gadgets with error rates below the threshold requirement of the level-$2$ codes is promising for near-term experimental demonstrations.
\begin{figure}
    \centering
    \includegraphics[width=0.5\textwidth]{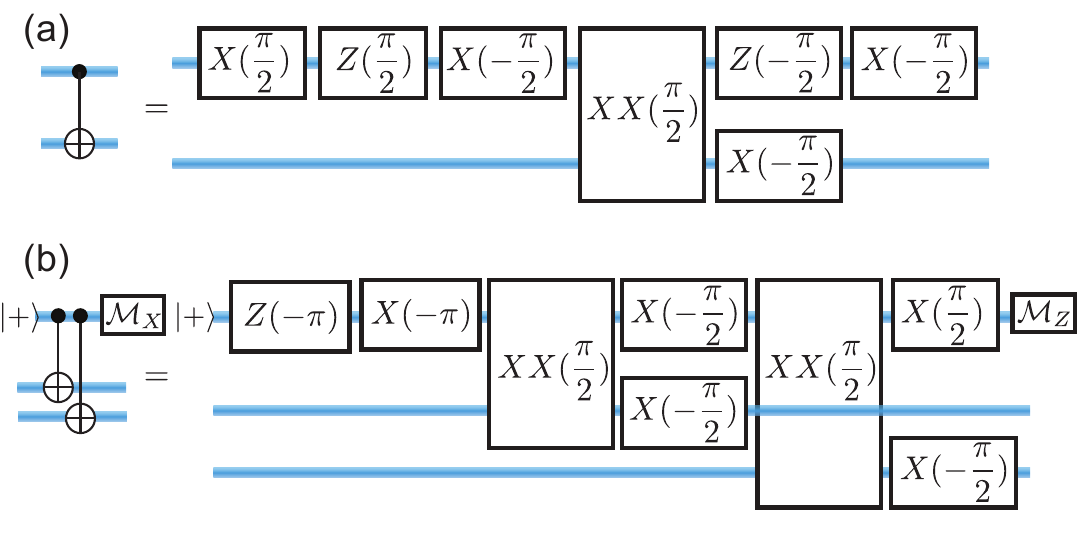}
    \caption{Compilation of level-$1$ CNOT (a) and a stabilizer $X^{\otimes 2}$ measurement circuit (b) using our constructed $1$-FT level-$1$ gadgets in $\mc{S}$. }
    \label{fig:concatenation_circuit}
\end{figure}


\section{Discussion \label{sec:discussion}}
The fault-tolerant gadgets $\mc{S}$ that we develop in Sec.~\ref{sec:FT4legCat} for the four-legged cat can be applied to other rotation-symmetric codes~\cite{grimsmo2020quantum}, whose codewords are invariant under a $N$-fold rotation $R = \exp[i (2\pi/N) a^{\dagger}a]$. Taking $N = 2$ for example, an arbitrary code with a two-fold rotation symmetry have codewords
\begin{equation}
    \begin{aligned}
    \left|+_{\Theta}\right\rangle & \approx \frac{1}{\sqrt{\mathcal{N}_+}} (I + e^{i\pi a^{\dagger}a})|\Theta\rangle, \\
    \left|-_{\Theta}\right\rangle & \approx \frac{1}{\sqrt{\mathcal{N}_-}} (e^{i \pi a^{\dagger}a/2} + e^{i 3\pi a^{\dagger}a/2})|\Theta\rangle, \\
    \end{aligned}
\end{equation}
where $\mc{N}_{\pm}$ are normalization constants, and the approximation holds when the base state $\ket{\Theta}$ is localized in phase space, i.e. $\bra{\Theta}e^{i \pi a^{\dagger}a/2}\ket{\Theta} \approx 0$. The fault-tolerant gadgets in $\mc{S}$ can be applied to such an arbitrary rotation-symmetric code with a localized base state $\ket{\Theta}$, except that for the $X$-basis state preparation in Alg.~\ref{alg:FT_X_preparation}, we need to replace the initial state with $\ket{\Theta}$ in the first step. In particular, the $X$ rotation and $XX$ rotation still work since they are based on the phase-space locality of the base state.

\setlength{\tabcolsep}{10pt} 
\begin{table*}
\centering
\begin{tabular}{ccc}
\hline \hline 
\makecell{Gadgets} & \makecell{Prior schemes} & \makecell{Our scheme}\\
\hline 
\makecell{Error correction} & \makecell{PI parity measurement~\cite{rosenblum2018fault};\\
Two one-bit teleportation \\
with two ancillary bosonic modes~\cite{grimsmo2020quantum}.\\
Engineered dissipation~\cite{xu2023autonomous, gertler2021protecting}.} & \makecell{GPI parity measurement \\
+ one-bit teleportation with a shared ancillary mode.}   \\
\hline
\makecell{$Z$-type gates}
 & \makecell{PI SNAP gate~\cite{reinhold2020error, ma2020path};\\
 Self-Kerr $(a^{\dagger}a)^2$ Hamiltonian~\cite{grimsmo2020quantum}.} & GPI SNAP gate\\
 \hline
\makecell{$X$-type gates} & \makecell{Teleported Hadamard gate \\ with an ancillary bosonic mode~\cite{grimsmo2020quantum}.} & \makecell{$X$-axis rotation \\
using cavity displacements and SNAP gates} \\
\hline 
\makecell{Entangling gate} & \makecell{CZ gate using cross-Kerr $a^{\dagger}a\otimes b^{\dagger}b$~\cite{grimsmo2020quantum}}. & $XX$ rotation using beam-splitter + SNAP gates.\\
\hline
$X$-basis measurement & Phase measurement~\cite{grimsmo2020quantum}. & Beam splitter + SNAP gates.\\
\hline
\end{tabular}
\caption{Comparison of different constructions of fault-tolerant gadgets for rotation-symmetrical codes that can correct photon losses. We denote $Z$-type gates as those that preserve the photon number (alternatively, those that add photon-number dependent phases), and $X$-type gates as those that do not preserve the photon number. }
\label{tab:scheme_comparison}
\end{table*}

In Tab.~\ref{tab:scheme_comparison}, we compare our construction of fault-tolerant gadgets for rotation-symmetrical codes that can correct photon losses with those in the literature. In particular, compared to the gadgets in Ref.~\cite{grimsmo2020quantum} using bosonic ancillae, our gadgets using qutrit ancillae avoid the demand of nonlinear interaction between bosonic modes and the phase measurement, which are both challenging to engineer in practice. 

\begin{acknowledgments}
We thank Wenlong Ma and Takahiro Tsunoda for helpful discussions. The authors acknowledge support from the ARO (W911NF-23-1-0077), ARO MURI (W911NF-21-1-0325), AFOSR MURI (FA9550-19-1-0399, FA9550-21-1-0209), NSF (OMA-1936118, ERC-1941583, OMA-2137642), NTT Research, Packard Foundation (2020-71479), and the Marshall and Arlene Bennett Family Research Program. This material is based upon work supported by the U.S. Department of Energy, Office of Science, National Quantum Information Science Research Centers. 
The authors are also grateful for the support of the University of Chicago Research Computing Center for assistance with the numerical simulations carried out in this paper.
\end{acknowledgments}
\appendix
\section{Algebric conditions for PI and GPI}
\label{appendix:algebric_conditions}
In this section, we provide algebraic conditions for PI gates (Def.~\ref{def:PI_gate}, Def.~\ref{def:finite-order_PI_gate}) and GPI gates (special case of Def.~\ref{def:GPI_operation} when the target operation is a unitary). 

Recall that we are considering a Markovian evolution for the joint system of ancilla $A$ and bosonic mode $C$, described by the Lindblad master equation in Eq.~\eqref{eq:master_equation_ancillla_errors_only}, with a joint Hamiltonian $H_{AC}(t)$ and a set of ancilla errors $\{J_j\}$.

We first provide definitions and properties of a set of structured algebras that we will use.

Let $\mathcal{B}_A = \{\ket{m}_A\}_{m \in [d_A]}$ be an orthonomal basis for a $d_A$-dimensional ancilla, and $\mathcal{B}_C = \{\ket{n}_C\}_{n \in [\infty]}$ be an orthonomal basis for an infinite-dimensional bosonic mode. Let $\mb{M}_A$ ($\mb{M}_C$) be the ordinary matrix algebra for the ancilla (bosonic mode). 
$\mb{M}_A$ ($\mb{M}_C$) is a vector space over $\mathbb{C}$ with a basis $\{\ket{m}_A\bra{n}\}_{m,n \in [d_A]}$ ($\{\ket{m}_C\bra{n}\}_{m,n \in [\infty]}$). In addition, $\mb{M}_A$ (similarly for $\mb{M}_C$) is equipped with a multiplication operation 
\begin{equation}
    \ket{a}_A\bra{b} \ket{c}_A\bra{d} = \delta_{b,c} \ket{a}_A\bra{d},
\end{equation}
for any $a,b,c,d \in [d_A].$
For any algebra $\mb{M}$, we denote $\mb{M} = \langle \mb{S}\rangle$ for a set $\mb{S}$ if any element $a$ in $\mb{M}$ can be written as $a = \sum_j c_j \alpha_j$, where $c_j \in \mathbb{C}$ and $\alpha_j$ is a product of some elements in $\mb{S}$. In other words, $\mb{M}$ is generated by $\mb{S}$. Let $\mb{M}_{AC} = \mb{M}_A \otimes \mb{M}_C$ be the matrix algebra for the joint system. 

We define the reduction of an algebra on the joint system to an algebra only on the ancilla as a surjective map from $\mb{M}_{AC}$ to $\mb{M}_A$:
\begin{definition}[Reduction of a joint algebra]
Given any algebra $\mb{H} \subseteq \mb{M}_{AC}$ on the joint system and an ancilla basis $\mc{B}_A$ we define the reduction of $\mb{a}$ on $\mc{B}_A$ as:
\begin{equation}
\begin{aligned}
    \mb{H}|_{\mc{B}_A} := \langle \{\ket{m}\bra{n} \big| \ket{m}, \ket{n} \in \mc{B}_A; \\
    \exists h \in \mb{H}, \bra{m}h\ket{n} \neq 0\} \rangle.
\end{aligned}
\end{equation}
\end{definition}

Next, we define a family of subalgebras of $\mb{M}_{AC}$ that satisfy a special property:
\begin{definition}[PI matrix algebra. Definition 1 of Ref.~\cite{ma_algebraic_2022}]
Let $\mathcal{B}_A$ be some orthonormal basis for $A$. We say that a subalgebra $\mb{P}$ of $\bf{M}_{AC}$ is a PI algebra associated with $\mathcal{B}_A$ if it satisfies:
\begin{enumerate}
    \item  $\mathbf{P} = \langle\{\ket{m}\bra{n}\otimes U_{mn}\}\rangle$ for some set of $(m,n) \in [d_A]\times[d_A]$ and $(m,n)$-dependent unitaries $U_{mn} \in \mb{M}_C$.
    \item $\mathbf{P}$ is isomorphic to its reduction on $\mc{B}_C$ via the map $\ket{m}\bra{n}\otimes U_{mn} \rightarrow \ket{m}\bra{n}$.
\end{enumerate}
\label{def:PI_algebra}
\end{definition}
Note that the second condition posts three requirements on the unitaries $U_{mn}$:
\begin{enumerate}
    \item $U_{ma}U_{bn} = \delta_{a,b} U_{mn}$.
    \item $U_{mn} = U_{nm}^{\dagger}$.
    \item $U_{mm} = I$.
\end{enumerate}

These requirements lead to the following properties of operators in a PI algebra:
\begin{proposition}[Property of operators lying in a PI algebras]
Let $\mathbf{P} = \langle\{\ket{m}\bra{n}\otimes U_{mn}\}\rangle$ be some PI algebra associated with an ancilla basis $\mathcal{B}_A$, and let $\mb{P}|_{\mc{B}_A}$ be its reduction on $\mc{B}_A$. For any operator $O \in \mb{P}$ and $\ket{r}, \ket{i} \in \mathcal{B}_A$, we have $\bra{r}O\ket{i} \propto U_{ri}$, where $U_{ri} := I$ if $\ket{r}\bra{i} \notin \mb{P}|_{\mc{B}_A}$.
\label{prop:PI_implication}
\end{proposition}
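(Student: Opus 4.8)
The plan is to bring an arbitrary element of $\mathbf{P}$ into the canonical form $\sum_{m,n}c_{mn}\ket{m}\bra{n}\otimes U_{mn}$ and then read off the $(r,i)$ ancilla matrix element. All of the content lies in showing that multiplying and adding generators can never produce a bosonic factor other than the prescribed $U_{mn}$, which is exactly where the defining relations on the $U_{mn}$ (those listed immediately after Definition~\ref{def:PI_algebra}) enter.

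First I would establish a closure (telescoping) lemma for monomials. Since $\mathbf{P}=\langle\{\ket{m}\bra{n}\otimes U_{mn}\}\rangle$, every element is a linear combination of products of generators, so it suffices to understand such products. For two generators the matrix-unit multiplication rule gives
\[
(\ket{m}\bra{n}\otimes U_{mn})(\ket{m'}\bra{n'}\otimes U_{m'n'}) = \delta_{n,m'}\,\ket{m}\bra{n'}\otimes U_{mn}U_{nn'},
\]
and the first requirement on the unitaries, $U_{pq}U_{st}=\delta_{q,s}U_{pt}$, collapses $U_{mn}U_{nn'}=U_{mn'}$. Hence the product is again of the form $\delta_{n,m'}\ket{m}\bra{n'}\otimes U_{mn'}$, i.e.\ a generator of the same type. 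By induction an arbitrary product telescopes to a single term $\ket{m_{\mathrm{first}}}\bra{n_{\mathrm{last}}}\otimes U_{m_{\mathrm{first}}\,n_{\mathrm{last}}}$ (up to a factor in $\{0,1\}$ coming from the $\delta$'s), so the bosonic factor is \emph{path-independent}: it depends only on the endpoint indices and never on the intermediate ones. This is the crux of the argument and the step I expect to demand the most care, since one must verify that every intermediate index is contracted away and that the surviving unitary is the canonical one attached to the endpoints rather than some residual product.

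With the lemma in hand, any $O\in\mathbf{P}$ can be written as $O=\sum_{(m,n)\in\mathcal{S}}c_{mn}\,\ket{m}\bra{n}\otimes U_{mn}$, where $\mathcal{S}$ is the set of endpoint pairs realized by monomials. The isomorphism between $\mathbf{P}$ and its reduction (the second clause of Definition~\ref{def:PI_algebra}, together with $U_{mm}=I$ and $U_{mn}=U_{nm}^{\dagger}$) guarantees that the unitary attached to each pair $(m,n)$ is unique, so this expansion is well defined. Taking the ancilla matrix element then yields $\bra{r}O\ket{i}=\sum_{(m,n)}c_{mn}\,\delta_{rm}\delta_{ni}\,U_{mn}=c_{ri}U_{ri}$, which is manifestly proportional to $U_{ri}$.

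Finally I would dispatch the convention for pairs outside the reduction. By the very definition of $\mathbf{P}|_{\mathcal{B}_A}$, the matrix unit $\ket{r}\bra{i}$ is one of its generators precisely when some $h\in\mathbf{P}$ satisfies $\bra{r}h\ket{i}\neq 0$. Contrapositively, if $\ket{r}\bra{i}\notin\mathbf{P}|_{\mathcal{B}_A}$ then in particular it is not a generator, hence $\bra{r}O\ket{i}=0$ for every $O\in\mathbf{P}$; since $0=0\cdot I$, this is consistent with the stated convention $U_{ri}:=I$, and the proportionality $\bra{r}O\ket{i}\propto U_{ri}$ continues to hold. This would complete the argument.
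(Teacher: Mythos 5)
Your proof is correct and takes essentially the same route as the paper's: expand $O$ in the canonical form $\sum_{m,n} c_{mn}\,\ket{m}\bra{n}\otimes U_{mn}$, read off the $(r,i)$ ancilla matrix element to get $c_{ri}U_{ri}$, and dispose of the case $\ket{r}\bra{i}\notin \mathbf{P}|_{\mathcal{B}_A}$ via $0 = 0\cdot I$. The only difference is that you explicitly prove the telescoping closure of products of generators (using $U_{ma}U_{bn}=\delta_{a,b}U_{mn}$) that justifies the canonical expansion, a step the paper's one-line proof silently takes for granted.
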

\begin{proof}
    If $O \in \mb{P}$, we can write $O$ as $O = \sum_{m,n} o_{mn} \ket{m}\bra{n}\otimes U_{mn}$ for some $o_{mn}\in\mc{C}$. If $\ket{r}\bra{i} \in \mb{P}|_{\mc{B}_A}$, we have $\bra{r}O\ket{i} = o_{ri} U_{ri} \propto U_{ri}$; If  $\ket{r}\bra{i} \notin \mb{P}|_{\mc{B}_A}$, we have $\bra{r}O\ket{i} = 0\times I$. 
\end{proof}
Note that Prop.~\ref{prop:PI_implication} also implies that for any operator $O = \prod_i O_i$ that is a product of operators lying in a PI algebra $\mb{P} = \langle \{\ket{m}\bra{n}\otimes U_{mn}\}\rangle$, we have $\bra{r}O\ket{i} \propto U_{ri}$. 

\subsection{PI conditions}
Ref.~\cite{ma_algebraic_2022} provides a simple algebraic condition for PI gates using PI algebras:



\begin{proposition}[Algebraic condition for PI gates. Theorem 1 of Ref.~\cite{ma_algebraic_2022}. ]
An ancilla-assisted gate is PI (see Def.~\ref{def:PI_gate})  in an ancilla basis $\mathcal{B}_A$ if the Hamiltonian and all the Lindblad jump operators are all in some PI algebra associated with $\mathcal{B}_A$. 
\label{prop:PI_algebraic_conditions}
\end{proposition}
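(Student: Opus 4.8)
The plan is to feed the hypothesis directly into the jump (Dyson) expansion of the channel, where the Kraus operators have an explicit product structure, and then invoke the product form of Prop.~\ref{prop:PI_implication}. By Eq.~\eqref{eq:dyson_expansion_Kraus_form}, the channel $\mc{G}(T)$ generated by Eq.~\eqref{eq:master_equation_ancillla_errors_only} is automatically in Kraus form, with Kraus operators $G_q(\{t_h\},\{j_h\}) = \mc{T}\, W(T,t_q) E_{j_q}\cdots E_{j_1} W(t_1,0)$, each a product of the no-jump propagators $W(t_b,t_a) = \mc{T}\exp[-i\int_{t_a}^{t_b} H_{\mr{eff}}(t')dt']$ and the jump operators $E_{j_h} = \sqrt{\gamma_{j_h}} J_{j_h}$. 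The conditional channel reads $\bbra{r}\mc{G}(T)\kket{i}(\bullet) = \sum_q \left[\int dt_h\right]_{h\in[q]} \left[\sum_{j_h}\right]_{h\in[q]} (\bra{r}G_q\ket{i})\bullet(\bra{r}G_q\ket{i})^{\dagger}$, so it suffices to show that \emph{every} conditional Kraus operator $\bra{r}G_q\ket{i}$ is proportional to the \emph{same} $(r,i)$-dependent unitary $U_{ri}$ furnished by the PI algebra; the PI condition of Def.~\ref{def:PI_gate} (Eq.~\eqref{eq:PI_condition}) then follows by factoring $U_{ri}$ out of each term.

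First I would record the closure properties of a PI algebra $\mb{P}$. It is a $*$-algebra: its generators obey $U_{mn} = U_{nm}^{\dagger}$ (the second requirement following Def.~\ref{def:PI_algebra}), so $(\ket{m}\bra{n}\otimes U_{mn})^{\dagger} = \ket{n}\bra{m}\otimes U_{nm}$ is again a generator, whence $\mb{P}$ is closed under Hermitian conjugation. Consequently $J_j,J_j^{\dagger}\in\mb{P}$ give $J_j^{\dagger}J_j\in\mb{P}$, and therefore the effective Hamiltonian $H_{\mr{eff}}(t) = H_{AC}(t) - \frac{i}{2}\sum_j \gamma_j J_j^{\dagger}J_j$ lies in $\mb{P}$, using that $H_{AC}(t)\in\mb{P}$ by hypothesis. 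Expanding each time-ordered exponential $W(t_b,t_a)$ in its Dyson series replaces it by a superposition of finite time-ordered products $H_{\mr{eff}}(s_k)\cdots H_{\mr{eff}}(s_1)$, i.e. products of elements of $\mb{P}$.

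Substituting these expansions into $G_q$ turns it into a sum/integral over finite products of elements of $\mb{P}$ (the $W$-factors contribute products of $H_{\mr{eff}}$, the jump factors contribute $J_{j_h}\in\mb{P}$). By the product form of Prop.~\ref{prop:PI_implication}, every such product $O$ satisfies $\bra{r}O\ket{i}\propto U_{ri}$, with the proportionality unitary $U_{ri}$ fixed by $r,i$ alone and \emph{independent} of which product appears. Pulling this common $U_{ri}$ out by linearity yields $\bra{r}G_q(\{t_h\},\{j_h\})\ket{i} = c_q(\{t_h\},\{j_h\})\, U_{ri}$ for scalars $c_q\in\mathbb{C}$. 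Feeding this back in gives $\bbra{r}\mc{G}(T)\kket{i} = \big(\sum_q \left[\int dt_h\right]_{h\in[q]} \left[\sum_{j_h}\right]_{h\in[q]} |c_q|^2\big)\, U_{ri}\bullet U_{ri}^{\dagger} \propto U_{ri}\bullet U_{ri}^{\dagger}$, which is exactly Eq.~\eqref{eq:PI_condition}.

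The conceptual content is entirely algebraic and captured above; I expect the only genuine obstacle to be analytic rather than structural. One must justify interchanging the infinite Dyson summations and time integrals with the ancilla sandwiching $\bra{r}\cdot\ket{i}$, and argue convergence on the infinite-dimensional bosonic factor so that treating $W(t_b,t_a)$ as a ``superposition of products of $\mb{P}$-elements'' and $\mb{P}$ as an effectively closed $*$-algebra is legitimate. Crucially, because the common unitary $U_{ri}$ factors out of every term \emph{before} any summation, this reduces to ordinary scalar (operator-norm) convergence of the Dyson series rather than to any new algebraic input, so I expect it to be routine under standard boundedness assumptions on $H_{AC}(t)$ and the $J_j$ restricted to the relevant support.
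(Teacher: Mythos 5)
Your proposal is correct and follows essentially the same route the paper takes for this family of results: it is exactly the argument used to prove the finite-order analog (Prop.~\ref{prop:finite_PI_algebraic_condition}), namely expanding $\mc{G}(T)$ in the jump (Dyson) expansion so each Kraus operator is a time-ordered product of no-jump propagators $W$ (generated by $H_{\mr{eff}}\in\mb{P}$, using $J_j^{\dagger}J_j\in\mb{P}$) and jump operators in $\mb{P}$, then applying Prop.~\ref{prop:PI_implication} to conclude every conditional Kraus operator $\bra{r}G_q\ket{i}$ is proportional to the \emph{same} unitary $U_{ri}$, which factors out of the whole sum. Your added remarks on $*$-closure of the PI algebra and on the analytic interchange of limits are sound bridging details consistent with the paper's treatment (which cites Theorem~1 of Ref.~\cite{ma_algebraic_2022} for the infinite-order statement).
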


Note that the condition in Prop.~\ref{prop:PI_algebraic_conditions} guarantees PI up to infinite order. In the following, we generalize it for finite-order PI gates.

First, the efffective Hamiltonian $H_{\mr{eff}} = H_{AC}(t) - \frac{i}{2}\sum_j J_j^{\dagger}J_j$ needs to be in some PI algebra, i.e. $H_{\mr{eff}}(t) = \sum_{m,n = 1}^{d_A} \xi_{mn}(t) \ket{m}\bra{n} \otimes U_{mn}$ for some $\xi_{mn}(t) \in \mathbb{C}$ and unitaries $U_{mn}$. 

Next, we define a $n$-th order path algerba $\mathbf{p}^{[n]}$ containing all the possible paths contained in the noise expansion of the system dynamics up to $n$-th order, and an associated $n$-th order reachable state set $\mathbf{S}_A^{[n]}$ containing all ancilla states reachable via the $n$-th order paths in  $\mathbf{p}^{[n]}$ when starting from $\ket{i}$, and a $n$-th order error set $\mb{E}^{[n]} \subseteq \{J_j\}$ containing all possible errors that can act nontrivially on $\mathbf{S}_A^{[n-1]}$:
\begin{definition}[Finite-order path algebras, reachable states, and error sets]
Given a Hamiltonian $H_{AC}(t)$ that lies in some PI algebra associated with an ancilla basis $\mc{B}_A$, a set of errors $\{J_j\}$, and an initial ancilla state $\ket{i}$, we define the zeroth-order path algebra $\mathbf{p}^{[0]}$ as an algebra that contains all the paths in the effective Hamiltonian $H_{\mr{eff}}(t) := H_{AC}(t) - \frac{i}{2}\sum_j J_j^{\dagger}J_j = \sum_{m,n = 1}^{d_A} \xi_{mn}(t) \ket{m}\bra{n} \otimes U_{mn}$:
\begin{equation}
    \mb{p}^{[0]} := \langle \{\ket{m}\bra{n}\otimes U_{mn}\mid \exists t \in [0, T], \xi_{mn}(t) \neq 0\}\rangle,
\end{equation}
Let $\mb{p}^{[0]}|_{\mc{B}_A}$ be the reduction of $\mb{p}^{[0]}$ on $\mc{B}_A$.
We define a zeroth order reachable state set including all states reachable via the zeroth order paths when starting from $\ket{i}$: 
\begin{equation}
    \mb{S}_{A}^{[0]} := \{\ket{m} \in \mc{B}_A \mid \ket{m} \bra{i} \in \mb{p}^{[0]}|_{\mc{B}_A}\},
\end{equation}
and we define a zero-th order error set $\mb{E}^{[0]} := \emptyset$.

Then, we define a $n \geq 1$-th order path algebra $\mathbf{p}_{AC}^{[n]}$ and a $n$-th order reachable state set inductively:
\begin{equation}
\begin{aligned}
\mb{E}^{[n]}& = \mb{E}^{[n]} \cup \{J_j \,\big|\, \exists \ket{m} \in \mathbf{S}_A^{[n-1]}, J_j \ket{m} \neq 0\},\\
    \mathbf{p}^{[n]} & = \langle\mathbf{p}^{[n-1]}\cup \mb{E}^{[n]}]\rangle,\\
    \mathbf{S}_A^{[n]} & = \{\ket{m} \,\big| \,\ket{m} \bra{i} \in \mb{p}^{[n]}|_{\mc{B}_A}\}.
\end{aligned}
\end{equation}
\end{definition}

\begin{proposition}[Algebraic conditions for finite-order PI gates]
Given an ancilla-assisted gate $\mc{G}(T)$ generated by a Hamiltonian $H_{AC}(t)$ and jump errors $\{J_j(t)\}$. $\mc{G}(T)$ is $n$-GPI in an ancilla basis $\mc{B}_A$ for an initial ancilla state $\ket{i}$ if $H_{AC}(t) \cup \{J_j^{\dagger}J_j\}\cup \mb{E}^{[n]}$ are in some PI algebra, where $\mathbf{E}^{[n]}$ is the and $n$-th order error set  constructured from $(H_{AC}(t), \{J_j\}, \mc{B}_A, \ket{i})$.  
\label{prop:finite_PI_algebraic_condition}
\end{proposition}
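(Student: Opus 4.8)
The plan is to localize the proof of the infinite-order condition (Prop.~\ref{prop:PI_algebraic_conditions}) to the reachability filtration $\{\mb{S}_A^{[h]}, \mb{E}^{[h]}\}$, showing that within the $n$-th order truncation only the jump operators collected in $\mb{E}^{[n]}$ ever act nontrivially. Using the Kraus form of Eq.~\eqref{eq:dyson_expansion_Kraus_form}, it suffices to prove that for every $q\le n$ and every surviving jump sequence the conditional operator $\bra{r}G_q(\{t_h\},\{j_h\})\ket{i}$ is proportional to the single unitary $U_{ri}$, with a scalar that may depend on $q$, $\{t_h\}$, and $\{j_h\}$; summing and integrating over these Kraus indices then collapses $\bbra{r}\mc{G}^{[k]}\kket{i}$ onto $U_{ri}\bullet U_{ri}^{\dagger}$ for every $k\le n$, which is exactly the $n$-PI condition of Def.~\ref{def:finite-order_PI_gate}.

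First I would record that $H_{\mr{eff}}(t)=H_{AC}(t)-\tfrac{i}{2}\sum_j J_j^{\dagger}J_j$ lies in the PI algebra $\mb{P}$, since $H_{AC}(t)$ and each $J_j^{\dagger}J_j$ do by hypothesis and $\mb{P}$ is closed under linear combination; consequently the path algebra $\mb{p}^{[n]}=\langle\mb{p}^{[n-1]}\cup\mb{E}^{[n]}\rangle$ is contained in $\mb{P}$, because its generators are $H_{\mr{eff}}$-paths and elements of $\mb{E}^{[n]}$, all of which already belong to $\mb{P}$. Next, because $H_{\mr{eff}}$ lies in $\mb{P}$, each no-jump propagator $W(t_{h+1},t_h)=\mc{T}\exp[-i\int H_{\mr{eff}}]$ is a time-ordered exponential of $\mb{P}$ elements; I would Trotterize it as a norm limit of finite products of short-time propagators $I-i H_{\mr{eff}}(t)\,\delta t\in\mb{P}$, so that every $W$ factor becomes a limit of products of operators in $\mb{P}$. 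This is precisely the structure required to invoke Prop.~\ref{prop:PI_implication}.

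The heart of the argument is an induction on the number of jumps establishing that, when the first $h$ jumps are relevant, the partial state $J_{j_h}W(t_h,t_{h-1})\cdots J_{j_1}W(t_1,0)\ket{i}$ has ancilla support confined to $\spn \mb{S}_A^{[h]}$. In the inductive step the subsequent no-jump propagator is generated by $\mb{p}^{[0]}\subseteq\mb{p}^{[h]}$ and therefore preserves $\mb{S}_A^{[h]}$ by closure of the reduced path algebra $\mb{p}^{[h]}|_{\mc{B}_A}$ under multiplication; the next jump $J_{j_{h+1}}$ can then act nontrivially on this support only if $J_{j_{h+1}}\in\mb{E}^{[h+1]}$, by the defining property of the error set, while any other jump annihilates the state and kills that Kraus term. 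Since $q\le n$, every surviving jump hence lies in $\mb{E}^{[n]}\subseteq\mb{P}$, so each nonvanishing $G_q\ket{i}$ is a limit of products of operators all lying in $\mb{P}$. Applying Prop.~\ref{prop:PI_implication} then yields $\bra{r}G_q\ket{i}\propto U_{ri}$, and because $\mb{E}^{[k]}\subseteq\mb{E}^{[n]}$ the same reasoning covers every truncation order $k\le n$, which closes the argument.

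I expect the main obstacle to be making the time-ordered exponentials rigorous inside the algebraic framework: Prop.~\ref{prop:PI_implication} is stated for \emph{finite} products of algebra elements, whereas each $W$ is a continuous object, so the Trotter/limiting step must be controlled and the proportionality direction $U_{ri}$ (as opposed to the scalar prefactor, which is what varies) must be shown stable under the limit. A secondary subtlety is the bookkeeping in the induction: confirming that the no-jump propagator genuinely preserves $\mb{S}_A^{[h]}$ relies on the inclusion $\mb{p}^{[0]}\subseteq\mb{p}^{[h]}$ together with multiplicative closure of $\mb{p}^{[h]}|_{\mc{B}_A}$, and one must verify that restricting to the reachable ancilla subspace does not alter any of the surviving Kraus terms.
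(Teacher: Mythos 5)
Your proposal is correct and follows essentially the same route as the paper's own proof: write each order $k\le n$ of the jump (Dyson) expansion in Kraus form, restrict the jump sums to the error set $\mb{E}^{[k]}\subseteq\mb{E}^{[n]}$, observe that every surviving trajectory operator is then a product of no-jump propagators generated by $H_{\mr{eff}}=H_{AC}(t)-\frac{i}{2}\sum_j J_j^{\dagger}J_j\in\mb{P}$ and jump operators in $\mb{P}$, and invoke Prop.~\ref{prop:PI_implication} so that each conditional Kraus operator $\bra{r}G_q\ket{i}$ is proportional to $U_{ri}$, collapsing $\bbra{r}\mc{G}^{[k]}(T)\kket{i}$ onto $U_{ri}\bullet U_{ri}^{\dagger}$. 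The only difference is that you make explicit two steps the paper asserts without comment --- the Trotter-limit argument for $W(t_2,t_1)\in\mb{P}$ and the induction over the reachable sets $\mb{S}_A^{[h]}$ justifying that trajectories containing jumps outside $\mb{E}^{[n]}$ annihilate the state --- which is a tightening of the same argument rather than a different proof.
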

\begin{proof}
    Let $H_{AC}(t)\cup \{J_j^{\dagger}J_j\}\cup \mb{E}^{[n]}$ be in some PI algebra  $\mathbf{P} = \langle \{\ket{m}\bra{n}\otimes U_{mn}\} \rangle$. 
  The effective Hamiltonian $H_{\mathrm{eff}}(t) = H_{AC}(t) - \frac{i}{2}J_j^{\dagger}J_j \in \mathbf{P}$.
    Furthermore, the non-jump propagator $W(t_2, t_1) := \mc{T} \exp[-i \int_{t = t_1}^{t_2} dt H_{\mathrm{eff}}(t)]$ is also in $\mb{P}$.
    Let $\mathcal{I}_k := \{j \mid J_j \in \mathbf{E}^{[k]}\}$ be the indices of the $k$-th order error set. Now, consider a $k$-th order Dyson expansion of $\mathcal{G}(T)$ (see Eq.~\eqref{eq:dyson_expansion_Kraus_form}): 
    \begin{equation}
    \begin{aligned}
        \langle\langle r| \mathcal{G}_k(T)|i\rangle\rangle = & \left[\int_{t_h = 0}^T d t_h\right]_{h \in [k]} \left[\sum_{j_h \in \mathcal{I}_k}\right]_{h \in [k]}
        \\
        & G_{ri}^k(\{t_h, j_h\}_{h\in [k]}) \bullet G^{\dagger k}_{ri}(\{t_h, j_h\}_{h \in [k]}),
        \label{eq:kraus_expansion}
    \end{aligned}
    \end{equation}
    where 
    \begin{equation}
    \begin{aligned}
        G^k_{ri}(\{t_h, j_h\}_{h\in[k]}) = & \bra{r} \mathcal{T} \Bigg\{
        W(T, t_k) \\
        & \left.\times \prod_{h=1}^k K_{j_h}(t_h) W(t_h, t_{h-1})\right\}\ket{i},
        \label{eq:traj_oper}
    \end{aligned}
    \end{equation} 
    with $t_0 = 0$.  Since all the operators in Eq.~\eqref{eq:traj_oper} are in $\mathbf{P}^{\prime}$, we have $G^k_{ri}(\{t_h, j_h\}_{h \in [k]}) \propto U_{ri}$ according to Prop.~\ref{prop:PI_implication}. As such, $\langle\langle r| \mathcal{G}_k(T)|i\rangle\rangle \propto U_{ri}\bullet U_{ri}$. Therefore, $\bbra{r}\mc{G}^{[k]}(T)\kket{i} \propto U_{ri}\bullet U_{ri}$ for any $k \leq n$ and the $n$-PI condition in Eq.~\eqref{eq:def_finite_PI} is satisfied.
    \label{prop:finite-order_PI_condition}
\end{proof}

\subsection{GPI conditions}
Here, we further relax the condition for finite-order PI in Prop.~\ref{prop:finite_PI_algebraic_condition} and provide algebraic conditions for finite-order GPI gates. 

\begin{proposition}[Algebraic conditions for finite-order GPI gates] 
Given a bosonic code with a code projection $P_c$ and an ancilla-assisted gate generated by a Hamiltonian $H_{AC}(t)$ and jump errors $\{J_j(t)\}$.
$\mc{G}(T)$ is $n$-GPI in an ancilla basis $\mathcal{B}_A$ for an initial ancilla state $\ket{i}$ if
\begin{enumerate}
    \item There exists some PI algebra $\mathbf{P} = \langle \{\ket{m}\bra{n}\otimes U_{mn}\} \rangle$ such that $H_{AC}(t) \in \mathbf{P}$, and any error $J_j(t) \in \mathbf{E}^{[n]}$, where $\mb{E}^{[n]}$ is the $n$-th order error set constructured from $(H_{AC}(t), \{J_j\}, \mc{B}_A, \ket{i})$,
    is in the form $J_j(t) = \sum_{m,n} \ket{m}\bra{n} \otimes R_{mn}^j(t) U_{mn}$, where $R^j_{mn}(t)$ are unitaries. 

    \item Let $\mathbf{\xi} := \{R_{mn}^j(t) \mid J_j \in \mathbf{E}^{[n]}; m,n \in [d_A]; t \in [0, T] \}$. Any error $E \in \mathbf{\xi}$ satisfies
    \begin{equation}
        [E, U_{mn}] = 0.
    \end{equation}
    \item Let $\epsilon := \{\mathcal{T} \prod_{i = 1}^n E_i(t_i) \}_{E_i(t_i) \in \xi \cup I}$. Errors in $\epsilon$ satisfy the Knill-Laflamme condition with respect to $P_c$.  
\end{enumerate}
\label{prop:GPI_algebraic_condition}
\end{proposition}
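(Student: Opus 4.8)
The plan is to mirror the proof of the finite-order PI condition (Prop.~\ref{prop:finite_PI_algebraic_condition}), but now track the extra bosonic factors $R^j_{mn}(t)$ as \emph{propagated errors} and show that they assemble into Kraus operators obeying the Knill--Laflamme condition. The starting observation is a closure property for operators of the ``generalized PI'' form: if $A = \ket{m}\bra{n}\otimes R_A U_{mn}$ and $B = \ket{n'}\bra{p}\otimes R_B U_{n'p}$, where $R_A, R_B$ commute with every $U_{ab}$, then $AB = \delta_{n n'}\ket{m}\bra{p}\otimes R_A R_B U_{mp}$, using $U_{mn}U_{np} = U_{mp}$ and condition~2. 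Since $H_{AC}(t)\in\mathbf{P}$ has trivial $R$-factor and each $J_j^{\dagger}J_j$ has a purely path-independent bosonic part — for a single-transition jump $\ket{m}\bra{n}\otimes R U_{mn}$ one computes $J_j^{\dagger}J_j = \ket{n}\bra{n}\otimes I$ because $R$ is unitary — the effective Hamiltonian $H_{\mr{eff}}(t)$ and hence the no-jump propagator $W(t_2,t_1)=\mc{T}\exp[-i\int H_{\mr{eff}}]$ again lie in $\mathbf{P}$ with trivial $R$-factor.

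First I would expand each conditional channel $\bbra{r}\mc{G}^{[k]}(T)\kket{i}$ in the Kraus form of Eq.~\eqref{eq:kraus_expansion}, whose trajectory operators are $G^k_{ri}(\{t_h,j_h\}) = \bra{r}\mc{T}\{W(T,t_k)\prod_h J_{j_h}(t_h) W(t_h,t_{h-1})\}\ket{i}$. Applying the closure property inductively to this ordered product of generalized-PI operators, the ancilla-index bookkeeping forces a sum over ancilla paths from $\ket{i}$ to $\ket{r}$; along every such path the $U$-factors telescope to the single path-independent unitary $U_{ri}$, while the $R$-factors collect on one side. This yields the factorization $G^k_{ri}(\{t_h,j_h\}) = \tilde{E}^k_{ri}(\{t_h,j_h\})\,U_{ri}$, where $\tilde{E}^k_{ri}$ is a path-summed, time-ordered product of at most $k\le n$ factors drawn from $\xi$. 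Padding each such product with identities to exactly $n$ factors identifies $\tilde{E}^k_{ri}$ as a linear combination of elements of $\epsilon$, and $[\tilde{E}^k_{ri},U_{ri}]=0$ since $U_{ri}$ is itself a product of generators $U_{ab}$.

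With this factorization the conditional channel reads $\bbra{r}\mc{G}^{[k]}(T)\kket{i} = \sum_s K^s_{ri}\bullet K^{s\dagger}_{ri}$ with $K^s_{ri} = \tilde{E}^s U_{ri}$, where $s$ runs over the discrete order $q\le k$, the jump labels $\{j_h\}$, and the continuous times $\{t_h\}$. It then remains to check Knill--Laflamme for $\{K^s_{ri}\}$. Commuting $U_{ri}$ through the $\tilde{E}$'s and using $U_{ri}^{\dagger}U_{ri}=I$ gives $P_c K^{s\dagger}_{ri}K^{s'}_{ri}P_c = P_c \tilde{E}^{s\dagger}\tilde{E}^{s'}P_c$. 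Because each $\tilde{E}^s$ is a linear combination of elements of $\epsilon$ and condition~3 guarantees $P_c E_a^{\dagger}E_b P_c\propto P_c$ for all $E_a,E_b\in\epsilon$, bilinearity shows $P_c \tilde{E}^{s\dagger}\tilde{E}^{s'}P_c\propto P_c$; hence $\{K^s_{ri}\}$ satisfies KL and Eq.~\eqref{eq:GPI_operation} holds for every $k\le n$.

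The step I expect to be the main obstacle is the second one: rigorously justifying $G^k_{ri} = \tilde{E}^k_{ri}U_{ri}$ while keeping the time-ordering of the $R$-factors intact. The subtlety is that the $R^j_{mn}(t_h)$ generally do not commute with one another, so the telescoping of the $U$-factors must be carried out \emph{without} reordering the $R$-factors; condition~2 ($[R,U_{mn}]=0$) is exactly what licenses sliding each $U$ past the $R$'s to its right, and I would make this explicit by an induction on the number of jumps $k$, peeling off one $W\,J_{j}\,W$ block at a time. A secondary point to state carefully is that $W$ contributes no error factors, which relies on the $J_j^{\dagger}J_j$ terms having path-independent bosonic content so that $H_{\mr{eff}}\in\mathbf{P}$; the degenerate case $\ket{r}\bra{i}\notin\mathbf{P}|_{\mc{B}_A}$ is handled by the convention $U_{ri}:=I$ of Prop.~\ref{prop:PI_implication}, for which the trajectory operators vanish and the claim is trivial.
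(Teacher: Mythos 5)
Your proposal is correct and takes essentially the same route as the paper's proof: substitute the generalized form $J_j(t) = \sum_{m,n}\ket{m}\bra{n}\otimes R^j_{mn}(t)U_{mn}$ into the trajectory operators of the jump expansion (Eq.~\eqref{eq:kraus_expansion}--\eqref{eq:traj_oper}), use condition~2 to slide the $U$-factors past the time-ordered $R$-factors so they telescope to a common $U_{ri}$, identify each resulting Kraus operator as $U_{ri}$ times a linear combination of elements of $\epsilon$, and invoke condition~3 plus bilinearity to verify the Knill--Laflamme condition for the conditional channels. Your extra care at the points the paper leaves implicit --- the explicit closure lemma for products of generalized-PI operators, the check that $J_j^{\dagger}J_j$ has trivial bosonic error content (for single-transition jumps) so that $H_{\mr{eff}}$ and hence $W$ remain in $\mathbf{P}$, the padding of length-$k\le n$ products with identities to land in $\epsilon$, and the degenerate convention $U_{ri}:=I$ --- only makes the paper's argument more explicit rather than changing it.
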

\begin{proof}
    We follow the same proof as that for Prop.~\ref{prop:finite_PI_algebraic_condition}. Now, each Kraus operator $G^k_{ri}(\{t_h, j_h\})$ of $\langle\langle r|\mathcal{G}_k(T)|i\rangle\rangle$ (see Eq.~\eqref{eq:kraus_expansion} and Eq.~\eqref{eq:traj_oper}) reads
    \begin{equation}
    \begin{aligned}
        G^k_{ri}(\{t_h, j_h\})
        & = U_{ri} \sum_{m_h, n_h} c_{m_h, n_h}\left[ \mathcal{T} \prod_{h=1}^k R^{j_h}_{m_h n_h}(t_h) \right],
    \end{aligned}  
    \end{equation}
    for some $c_{m_h, n_h} \in \mathbb{C}$. Here, we have replaced the form of $J_j$ in the first condition of Prop.~\ref{prop:GPI_algebraic_condition} into Eq.~\eqref{eq:traj_oper}. The operators $R^{j_h}_{m_h n_h}$ cumulate to the front as if they were transparent to the unitaries due to the second condition in Prop.~\ref{prop:GPI_algebraic_condition}. Then, the Kraus operators of $\langle\langle r| \mathcal{G}_k(T) |i\rangle\rangle$, given by the union of all $G^k_{ri}(\{t_h, j_h\})$, are all linear combinations of errors in $\{U_{ri} \mathcal{T} \prod_{i = 1}^k E_i(t_i) \}_{E_i(t_i) \in \xi}$, where $\xi$ is defined in the second condition of Prop.~\ref{prop:GPI_algebraic_condition}. Finally, the Kraus operators of $\langle\langle r| \mathcal{G}^{[n]}(T) |i\rangle\rangle$ are all linear combinations of errors in $\epsilon$, up to a same unitary $U_{ri}$. Therefore, the errors are correctable if $\epsilon$ satisfies the KL condition Prop.~\ref{prop:GPI_algebraic_condition}.
\end{proof}

As an example, we consider the GPI SNAP gate in Sec.~\ref{sec:GPI_SNAP} using a $\chi$-mismatched three-level transmon. 
In the presence of the ancilla relaxations, one can show that this gate is $1$-GPI by checking the conditions in Proposition~\ref{prop:GPI_algebraic_condition}. 
In the interaction picture, the first-order error set $\mb{E}^{[1]} = \{\ket{e}\bra{f}\otimes e^{-i\Delta\chi t a^{\dagger}a }\}$. We can find a PI algebra such that the first condition of Prop.~\ref{prop:GPI_algebraic_condition} is satisfied:
\begin{equation}
    \mb{P} = \langle\{\ket{f}\bra{g}\otimes S, \ket{g}\bra{f}\otimes S^{\dagger}, \ket{e}\bra{f}\otimes I\}\rangle.
\end{equation}
Here, there is only a single $J_j$ with $R_{ef}(t) = e^{-i\Delta\chi t a^{\dagger}a}$. Therefore, $\xi = \{R_{ef}(t)\}_{t \in [0, T]}$ and $\epsilon = \xi\cup I$.
By choosing $S$ as a logical operator for the four-legged cat and noticing that $[R_{ef}(t), S] = [R_{ef}(t), S^{\dagger}] = 0$, the second condition of Proposition~\ref{prop:GPI_algebraic_condition} is also satisfied.
Finally, $\epsilon = \{R_{ef}(t)\}_{t \in [0, T]} \cup I$ satisfies the KL condition w.r.t. $P_c$ of the four-legged cat as long as $\Delta \chi T < \pi/2$ (see Sec.~\ref{sec:bosonic_code}).


\section{Error transparent/closure control for bosonic errors\label{appendix:error_transparency}}
In this section, we review the error-transparent~\cite{ma_error-transparent_2020} and error-closure~\cite{tsunoda2023error} quantum control techniques that enable fault tolerance against central system (bosonic) errors. For bosonic errors, we neglect their contribution to the no-jump evolution (the no-jump propagator is purely generated by the Hamiltonian) in the jump expansion of a quantum channel (see Eq.~\eqref{eq:dyson_expansion}). Such an approximation can be justified when considering the photon loss, whose back-action associated with $a^{\dagger}a$ is correctable in the large-$\alpha$ regime for cat codes (see Sec.~\ref{sec:bosonic_code}).

We consider a unitary $U(t)$ generated by a Hamiltonian $H(t)$ that acts only on the bosonic mode. Consider a bosonic code with a code projection $P_c$ and a bosonic error $E$. The error-transparent control aims to engineer $H(t)$ such that the dynammics is transparent to $E$, i.e. errors that occur during the gate are equivalent to those that occur after the unitary:
\begin{equation}
   U(T,t_p) E \cdots  E U(t_2, t_1) E U(t_1, 0) P_c \propto E^p U(T,0) P_c,
   \label{eq:error-transparency_condition1}
\end{equation}
for any $T > t_p > \cdots > t_1 > 0$ and $p \geq 1$. We say that the unitary is $k$-th order error-transparent to $E$ if Eq.~\eqref{eq:error-transparency_condition1} is satisfied for $p \leq k$.

When using $U(T,0)$ as a logical gate for the bosonic code, we typically require $H(t)$ (and thereby, $U(t)$) to be block-diagonal, i.e. $(I - P_c) H(t) P_c = 0$. In this case, Eq.~\eqref{eq:error-transparency_condition1} is equivalent to
\begin{equation}
    U(T, t) E U^{\dagger}(T,t) P_j \propto E P_j,
    \label{eq:error-transparency_condition2}
\end{equation}
for any $T > t >0$ and $j \leq k - 1$, where $P_j := E^j P_c$. 
Obviously, Eq.~\eqref{eq:error-transparency_condition2} is satisfied if $E$ commutes with $H(t)$ when acting on the error spaces up to $(k-1)$-th order, i.e.
\begin{equation}
    [E, H(t)]P_j = 0,
    \label{eq:error-transparency_condition3}
\end{equation}
for $j \leq k - 1$.
We note that Eq.~\eqref{eq:error-transparency_condition3} is only a sufficient condition for the error-transparency definition in Eq.~\eqref{eq:error-transparency_condition2}. For instance, Eq.~\eqref{eq:error-transparency_condition2} is also satisfied if $[E, H(t)]P_j \propto E P_j$. 

In this work, we are interested in ancilla-assisted gates. Similar to the PI/GPI, in the case where we initialize the ancilla in $\ket{i}$ and projectively measure it in a basis $\mathcal{B}_A = \{\ket{m}_A\}$, we only care about the conditional bosonic channels given a measurement outcome $m$. As such, we consider the following conditional error transparency, which is easier to achieve than unconditional error-transparency presented above:
\begin{definition}[Conditional error transparency]
Given a bosonic code with a code projection $P_c$, an initial ancilla state $\ket{i}$ and an ancilla orthonormal basis $\mathcal{B}_A = \{\ket{m}_A\}$, we say that an ancilla assisted unitary $U(t)$ is $(P_c, \ket{i}, \mathcal{B}_A)$-error-transparent to a bosonic error $E$ up to $k$-th order if for any $\ket{r} \in \mathcal{B}_A$ and $p \leq k$:
\begin{equation}
    \bra{r} U(T,t_p) E \cdots E U(t_1, 0) P_c \propto E^p \bra{r} U(T,0) \ket{i} P_c,
   \label{eq:conditional_error-transparency}
\end{equation} 
\label{def:conditional_error_transparency}
\end{definition}

Ref.~\cite{tsunoda2023error} generalizes the error transparency condition to the so-called error closure condition. In the case of a single error $E$ and a static Hamiltonian $H_0$, they first construct a vector space $\epsilon$ with a basis $\{E, [H_0, E]\}$ over $\mathbb{C}$, and the error-closure condition is satisfied if for any $e \in \mb{\epsilon}$:
\begin{enumerate}
    \item $[H_0, e] \in \epsilon$.
    \item Errors in $\epsilon$ are correctable (satisfying the KL condition) with respect to $P_c$. 
\end{enumerate}
Such a condition guarantees that each first-order error trajectory gives:
\begin{equation}
    e^{iH_0 (T-t)}E e^{iH_0 t} = e^{iH_0 (T-t)} E e^{-iH_0 (T-t)} e^{iH_0 t} = E^{\prime} e^{iH_0 t},
\end{equation}
where $E^{\prime} := e^{iH_0 (T-t)} E e^{-iH_0 (T-t)} \in \epsilon$ using the first condition. Then the desired unitary is implemented up to a correctable error $E^{\prime}$ according to the second condition. The error closure condition generalizes the error transparency condition as it allows errors to propagate to correctable errors, rather than rigorously commuting through the unitary, in a similar spirit as our generalization from PI to GPI.  

\subsection{$Z$-axis rotation \label{sec:ET_SNAP}}
Recall that a $Z$-axis rotation is implemented by a GPI SNAP gate (see Sec.~\ref{sec:GPI_SNAP}). In the interaction picture associated with the base Hamiltonian $H_0 = - (\chi_f \ket{f}\bra{f} + \chi_e \ket{e}\bra{e})\otimes a^{\dagger}a$, the Hamiltonian is static $\tilde{H} = \Omega \left[ \ket{f}\bra{g} \otimes S(\vec{\phi}) + h.c. \right]$, and the photon loss error reads $\tilde{a}(t) = e^{i(\chi_f \ket{f}\bra{f} + \chi_e \ket{e}\bra{e})t}\otimes a$. Note that $[\tilde{a}(t), \tilde{H}]I\otimes P_c \neq 0$ and the unconditional error transparency does not hold. Fortunately, we now show that the conditional error transparency in Eq.~\eqref{eq:conditional_error-transparency} holds up to a single-photon loss if we choose $S(\vec{\phi})$ appropriately. For $p = 1$, the L.H.S. of Eq.~\eqref{eq:conditional_error-transparency} reads
\begin{equation}
\begin{aligned}
    & \bra{r} \tilde{U}(T, t) \tilde{a}(t) U(t,0)\ket{g} P_c \\
    &= \sum_{m \in \{f,g\}} e^{i\chi_m t}
 \tilde{U}_{rm}(T,t) a \tilde{U}_{m g}(t,0) P_c.
\end{aligned}
\label{eq:conditional_ET_eval}
\end{equation}
Recall that $\tilde{U}(t_2, t_1)$ is in a PI algebra (see Appendix~\ref{appendix:algebric_conditions}) $\mathbf{P} = \langle\{\ket{f}\bra{g}\otimes S, \ket{g}\bra{f}\otimes S^{\dagger}, \ket{g}\bra{g}\otimes I, \ket{e}\bra{e}\otimes I, \ket{f}\bra{f}\otimes I\}\rangle$. Therefore, $\tilde{U}_{fg} \propto S(\vec{\phi})$, $\tilde{U}_{gf} \propto S^{\dagger}(\vec{\phi})$, and $\tilde{U}_{gg}, \tilde{U}_{ff}  \propto I$. Choosing $S(\vec{\phi})$ as a logical gate, we have $\tilde{U}_{m g}(t,0) P_c = P_c \tilde{U}_{m g}(t,0) P_c$ for $m \in \{f,g\}$. If $[\tilde{U}_{rm}(T,t) a]P_c = 0$
for any $r,m \in \{g,f\}$, we can then swap $\tilde{U}_{mi}(T,t)$ and $a$ in Eq.~\eqref{eq:conditional_ET_eval} and obtain $\bra{r} \tilde{U}(T, t) \tilde{a}(t) U(t,0)\ket{g} P_c \propto a \tilde{U}_{rg}(T,0) P_c$. Such a condition is equivalent to
\begin{equation}
    [a, S(\vec{\phi})]P_c = [a, S^{\dagger}(\vec{\phi})]P_c = 0,
    \label{eq:SNAP_error_transparency}
\end{equation}
which is simply that the applied unitary $S(\vec{\phi})/S^{\dagger}(\vec{\phi})$ is error transparent to $a$. This can be satisfied by setting $S(\phi) = P_0 + P_3 + e^{i\theta}(P_2 + P_1)$.  

\subsection{$X$-axis rotation \label{appendix:ET_Z_rotation}}
Here, we show that the $X$-axis rotation in error-transparent to a single photon loss by showing the two involved SNAP gates (see Eq.~\eqref{eq:X-rotation}) satisfy the conditional error transparency in Def.~\ref{def:conditional_error_transparency}. Taking the first SNAP gate as an example, the proof is the same as that for the $Z$-axis rotation in the previous section, except that we now need to change $P_c$ to $D(\alpha) P_c$ when verifying Eq.~\eqref{eq:SNAP_error_transparency}. Recall that $S = e^{i\theta}P_{[s]} + I - P_{[s]}$ for the $X$-axis rotation, where $P_{[s]} = \sum_{i=0}^s\ket{i}\bra{i}$ is a projection into a neighborhood of vacuum. We take the large-$\alpha$ approximation $\ket{+_L} \approx \ket{C^+_{\alpha}}$ and $\ket{-_L} \approx \ket{C^+_{i\alpha}}$. Then,
\begin{equation}
\begin{aligned}
    a S D(\alpha) & \ket{+_L} \approx 2\alpha \ket{2\alpha} \approx  S a D(\alpha) \ket{+_L}, \\
    a S D(\alpha) & \ket{-_L} \approx a D(\alpha) \ket{-_L} \approx S a D(\alpha) \ket{-_L},
\end{aligned}
\label{eq:ET_X_rotation_eq}
\end{equation}
where we have used $S\ket{\beta} \approx \ket{\beta}$ for $|\beta| \gg 1$. Eq.~\eqref{eq:ET_X_rotation_eq} thus verifies that $S$ commutes with $a$ when acting on $D(\alpha) P_c$. 

\subsection{$XX$ rotation \label{appendix:ET_XX_rotation}}

Here, we show that the $XX$ rotation gate in Sec.~\ref{sec:XX_rotation} is robust against a small input phase rotation or the ancilla relaxation/dephasing or cavity loss that occurs in the middle. 

We first consider the input phase rotation error $E = e^{i (\delta\theta_a a^\dag a + \delta\theta_b b^\dag b)}$. Our aim is to show that
\begin{equation} \label{eq:UXXErrorClosure}
    U_{XX} E U_{XX}^\dag (P_c \otimes P_c) \approx E (P_c \otimes P_c).
\end{equation}

If we consider the SNAP gate $S$ to be replaced to the robust version $S = e^{-i\theta} P_{[s]} + (I - P_{[s]})$, we can write $U_{XX}$ in the following form,
\begin{equation} \label{eq:UXXform2}
    U_{XX} = e^{i\theta} P_{\langle[s],I\rangle} + (I - P_{\langle[s],I\rangle}),
\end{equation}
where
\begin{equation}
    P_{\langle[s],I\rangle} = BS(\frac{\pi}{2})^\dag \left( P_{[s]}\otimes I + I \otimes P_{[s]} \right) BS(\frac{\pi}{2}),
\end{equation}
is a projector onto the space where the input bosonic modes $A$ and $B$ have almost clean interference results (i.e., one output mode is close to vacuum) under the balanced beam-splitter $BS(\frac{\pi}{2})$. We can get Eq.~\eqref{eq:UXXform} in Sec.~\ref{sec:XX_rotation} from Eq.~\eqref{eq:UXXform2}. To prove Eq.~\eqref{eq:UXXErrorClosure}, we note that
\begin{equation} \label{eq:UXXECdeduce1}
\begin{aligned}
    &\quad U_{XX} E U_{XX}^\dag (P_c \otimes P_c) = E P_{\pm,\pm} + E P_{\pm,\mp} + \\
    & (e^{-i\theta}-1) (I - P_{\langle[s],I\rangle}) E P_{\pm,\pm} + (e^{i\theta}-1) P_{\langle[s],I\rangle} E P_{\pm,\mp},
\end{aligned}
\end{equation}
where 
\begin{equation}
\begin{aligned}
    P_{\pm,\pm} &= \ket{+_L,+_L}\bra{+_L,+_L} + \ket{-_L,-_L}\bra{-_L,-_L}, \\
    &\approx \ket{C_\alpha^+,C_\alpha^+}\bra{C_\alpha^+,C_\alpha^+} + \ket{C_{i\alpha}^+,C_{i\alpha}^+}\bra{C_{i\alpha}^+,C_{i\alpha}^+} \\
    P_{\pm,\mp} &= \ket{+_L,-_L}\bra{+_L,-_L} + \ket{-_L,+_L}\bra{-_L,+_L}, \\
    &\approx \ket{C_\alpha^+,C_{i\alpha}^+}\bra{C_\alpha^+,C_{i\alpha}^+} + \ket{C_{i\alpha}^+,C_\alpha^+}\bra{C_{i\alpha}^+,C_\alpha^+}.
\end{aligned}
\end{equation}
To simplify Eq.~\eqref{eq:UXXECdeduce1}, we notice that,
\begin{equation}
\begin{aligned}
    & \braket{C_{\alpha e^{i\delta\theta_a}}^+, C_{\alpha e^{i\delta\theta_b}}^+ | P_{\langle[s],I\rangle} | C_{\alpha e^{i\delta\theta_a}}^+, C_{\alpha e^{i\delta\theta_b}}^+ } \\
    =& \langle C_{\alpha e^{i\delta\theta_a}}^+, C_{\alpha e^{i\delta\theta_b}}^+ | BS(\frac{\pi}{2})^\dag ( P_{[s]}\otimes I + I \otimes P_{[s]} )\\ 
    & \quad  BS(\frac{\pi}{2})  | C_{\alpha e^{i\delta\theta_a}}^+, C_{\alpha e^{i\delta\theta_b}}^+ \rangle.
\end{aligned}
\end{equation}
Since
\begin{equation}
\begin{aligned}
     & BS(\frac{\pi}{2})  | C_{\alpha e^{i\delta\theta_a}}^+, C_{\alpha e^{i\delta\theta_b}}^+ \rangle \\
    =& \mu_\alpha^2 BS(\frac{\pi}{2}) \left( \ket{\alpha e^{i\delta\theta_a}} + \ket{-\alpha e^{i\delta\theta_a}}\right)_A \left( \ket{\alpha e^{i\delta\theta_b}} + \ket{-\alpha e^{i\delta\theta_b}}\right)_B \\
    =& \mu_\alpha^2 \bigg( \big( \ket{\frac{\alpha}{\sqrt{2}}(e^{i\delta\theta_a}+ e^{i\delta\theta_b})} \big)_A \big( \ket{\frac{\alpha}{\sqrt{2}}(e^{i\delta\theta_a}-e^{i\delta\theta_b})}  \big)_B \\
    &\quad + \ket{\frac{\alpha}{\sqrt{2}}(e^{i\delta\theta_a}- e^{i\delta\theta_b})} \big)_A \big( \ket{\frac{\alpha}{\sqrt{2}}(e^{i\delta\theta_a}+ e^{i\delta\theta_b})} \big)_B  \\
    &\quad + \ket{\frac{\alpha}{\sqrt{2}}(-e^{i\delta\theta_a}+ e^{i\delta\theta_b})} \big)_A \big( \ket{\frac{\alpha}{\sqrt{2}}(-e^{i\delta\theta_a}- e^{i\delta\theta_b})} \big)_B  \\
    &\quad + \ket{\frac{\alpha}{\sqrt{2}}(-e^{i\delta\theta_a}- e^{i\delta\theta_b})} \big)_A \big( \ket{\frac{\alpha}{\sqrt{2}}(-e^{i\delta\theta_a}+ e^{i\delta\theta_b})} \big)_B  \bigg).
\end{aligned}
\end{equation}
Here $\mu_\alpha = 1/\sqrt{2(1+\exp(-2|\alpha|^2))}$ is the normalization factor of the cat state.
when $|\delta\theta_a|,|\delta\theta_b|< \pi/8$, we have $|e^{i\delta\theta_a}-e^{i\delta\theta_b}|<2\sin(\pi/8)$. As a result, either the components on mode $A$ or the ones on mode $B$ will be almost covered in the region of $P_{[s]}$, which implies that when $\alpha \gg 1$, we can choose a value of $s=O(|\alpha|^2)$ such that
\begin{equation} \label{eq:XXrotCondition1}
    \braket{C_{\alpha e^{i\delta\theta_a}}^+, C_{\alpha e^{i\delta\theta_b}}^+ | P_{\langle[s],I\rangle} | C_{\alpha e^{i\delta\theta_a}}^+, C_{\alpha e^{i\delta\theta_b}}^+ } \to 1.
\end{equation}
Similarly, we will have
\begin{equation} \label{eq:XXrotCondition2}
\begin{aligned}
    \braket{C_{i\alpha e^{i\delta\theta_a}}^+, C_{i\alpha e^{i\delta\theta_b}}^+ | P_{\langle[s],I\rangle} | C_{i\alpha e^{i\delta\theta_a}}^+, C_{i\alpha e^{i\delta\theta_b}}^+ } \to 1, \\
    \braket{C_{\alpha e^{i\delta\theta_a}}^+, C_{i\alpha e^{i\delta\theta_b}}^+ | P_{\langle[s],I\rangle} | C_{\alpha e^{i\delta\theta_a}}^+, C_{i\alpha e^{i\delta\theta_b}}^+ } \to 0, \\
    \braket{C_{i\alpha e^{i\delta\theta_a}}^+, C_{\alpha e^{i\delta\theta_b}}^+ | P_{\langle[s],I\rangle} | C_{i\alpha e^{i\delta\theta_a}}^+, C_{\alpha e^{i\delta\theta_b}}^+ } \to 0.
\end{aligned}
\end{equation}
When Eq.~\eqref{eq:XXrotCondition1} and \eqref{eq:XXrotCondition2} hold, we can simplify Eq.~\eqref{eq:UXXECdeduce1} to
\begin{equation}
    U_{XX} E U_{XX}^\dag (P_c \otimes P_c) \approx E (P_{\pm,\pm} + P_{\pm,\mp}) = E (P_c\otimes P_c).
\end{equation}
i.e., $U_{XX}$ is robust against small phase rotation.

Now, we consider the error occurs during the process of $XX$ rotation.
First, we show that a single photon loss during the $XX$ rotation can only propagate to at most a single loss per mode by combing the idea of error transparency and error closure. Recall that a $XX$ rotation is implemented by two SNAP gates sandwiched by two BSs: $U_{XX} = \mr{BS}(S\otimes S)\mr{BS}$. We first show that a single photon loss during the BS can only propagate to an error of the form $c_1 a + c_2 b$, where $c_1, c_2 \in \mathbb{C}$, since the BS satisfies the error-closure condition. A BS is generated by $H_{\mr{BS}} = g_{\mr{BS}}a^{\dagger}b + g_{\mr{BS}}^{*} ab^{\dagger}$. Any single photon loss on one of the modes, e.g. $a$, satisfies the error closure condition since $\epsilon = \langle a, [H_{\mr{BS}}, a]\rangle = \langle a, b\rangle$, and for any $e = c_1 a + c_2 b$, where $c_1, c_2 \in \mathbb{C}$, we have: (i) $[H_{\mathrm{BS}},e] \in \epsilon$; (ii) $e$ is a correctable error if both $a$ and $b$ modes are encoded in a four-legged cat.

Next, we show that the SNAP gates $S\otimes S$ are error-transparent to both a single-photon loss during the gates and an input error of the form $c_1 a + c_2 b$. According to the analysis in Sec.~\ref{appendix:ET_Z_rotation}, the former error is satisfied if $[S\otimes S, a] \mr{BS}\,P_c = [S\otimes S, b] \mr{BS}\,P_c = 0$, which is a special case of the condition for the latter error:
\begin{equation}
    [S\otimes S, c_1 a + c_2 b] \mr{BS}\, P_c = 0.
    \label{eq:ET_condition_XX_SNAP}
\end{equation} 
To prove Eq.~\eqref{eq:ET_condition_XX_SNAP}, we take the approximation $\ket{+_L} \approx \ket{C^+_{\alpha}}$ and $\ket{-_L} \approx \ket{C^+_{i\alpha}}$ and show that $[S\otimes S, c_1 a + c_2 b] \mr{BS} \ket{\pm_L, \pm_L} = 0$. For $\ket{+_L, -_L}$,
\begin{equation}
\begin{aligned}
    & (S\otimes S) (c_1 a + c_2 b) \mr{BS} \ket{+_L, -_L} = (c_1 a + c_2 b) \mr{BS} \ket{+_L, -_L} \\
    =&  (c_1 a + c_2 b) (S\otimes S) \mr{BS} \ket{+_L, -_L},
\end{aligned}
\end{equation}
since $S\otimes S$ acts trivially on both $\mr{BS} \ket{+_L, -_L}$ and $(c_1 a + c_2 b)\mr{BS} \ket{+_L, -_L}$. The same argument also applies to $\ket{-_L, +_L}$. For $\ket{+_L, +_L}$, we have $\mr{BS}\ket{+_L, +_L} \propto (\ket{2\alpha, 0} + \ket{0,2\alpha} + \ket{-2\alpha, 0} + \ket{0, -2\alpha})$. Then
\begin{equation}
\begin{aligned}
    & \quad (S\otimes S) (c_1 a + c_2 b)\mr{BS} \ket{+_L, +_L} \\
    =& e^{i\theta}2\alpha [ c_1 (\ket{2\alpha, 0} + \ket{-2\alpha, 0}) + c_2(\ket{0,2\alpha} + \ket{0, -2\alpha})] \\
    =& (c_1 a + c_2 b) (S\otimes S)\mr{BS} \ket{+_L, +_L}.
\end{aligned}
\end{equation}
Similarly, we can show $[S\otimes S, c_1 a + c_2 b]\mr{BS} \ket{-_L, -_L} = 0$. 

Combining the error-closure property of the BSs and the error-transparency property of the SNAP gates, we conclude that a single photon loss during the execution of the $XX$ rotation can propagate to an error of the form $c_1^{\prime}a + c_2^{\prime} b$, which is correctable by the four-legged cats.

\section{More GPI examples \label{appendix:GPI_parity_flag}} 
Here, we provide more examples of ancilla-assisted bosonic operations that are GPI. 

\begin{figure}[h!]
    \centering
    \includegraphics[width=0.45\textwidth]{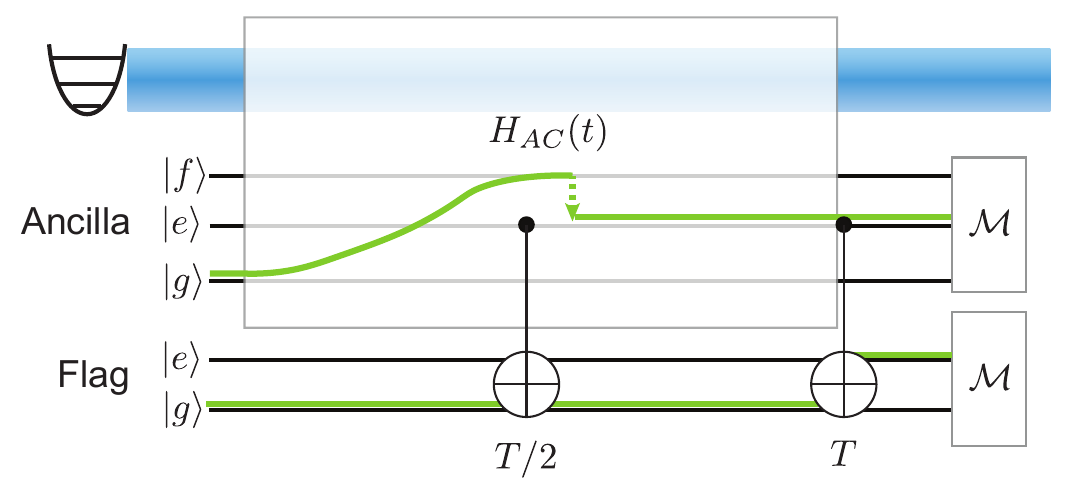}
    \caption{GPI SNAP gate with a flag qubit. The flag qubit is excited to $\ket{e}$ only if the ancilla decays from $\ket{f}$ to $\ket{e}$ at $t \in [T/2, T]$.}
    \label{fig:flagged_SNAP}
\end{figure}
Recall that the SNAP gate using a three-level transmon that we present in the main text (Sec.~\ref{sec:GPI_SNAP}) is GPI only if the $\chi$ mismatch $\Delta \chi$ is small than $\pi/2T$. In the scenario where $\Delta \chi \geq \pi/2T$, we can add another flag qubit~\cite{chao2018quantum, chao2020flag, Chamberland2018flag} to make the gate GPI.  

Notice that the major reason why the SNAP gate with a single ancilla is not GPI when $\Delta \chi$ is large is that the random dephasing range on the bosonic mode is too large due to the uncertainty of when an ancilla relaxation from $\ket{f}$ to $\ket{e}$ happens. Therefore, we can add an extra flag qubit to narrow down the ancilla-relaxation time window, and thus reducing the dephasing range. 

As shown in Fig.~\ref{fig:flagged_SNAP}, we apply two $X$ gates to the flag qubit controlled by the ancilla $\ket{e}$ state at time $T/2$ and $T$, respectively. As before, we consider adjacent-level relaxation errors for both the ancilla and the flag qubits, as well as arbitrary forms of dephasing errors. The flag qubit starts from $\ket{g}$ and only gets excited to $\ket{e}$ if the ancilla relaxes from $\ket{f}$ to $\ket{e}$ at a time $t \in [T/2, T]$. As such, a single ancilla relaxation incurs a random phase rotation of angle $\theta = \Delta \chi t$ on the bosonic mode, where $t \in  [0, T/2]$ if the flag qubit is measured in $\ket{g}$ while $t \in  (T/2, T]$ if the flag qubit is measured in $\ket{e}$. Formally, we can calculate the bosonic channels conditioned on the measurement outcomes of both the ancilla and flag qubits:



\begin{equation}
\begin{aligned}
\bbra{g,g} \mc{G}^{[1]}\kket{g,g}& \propto \mc{I},\\
 \bbra{f,g} \mc{G}^{[1]}\kket{g,g}& \propto S \bullet S^{\dagger},\\
   \bbra{e,g} \mc{G}^{[1]}\kket{g,g} & \propto \int_{\theta = 0}^{\Delta\chi T/2} S e^{-i\theta a^{\dagger}a} \bullet e^{i\theta a^{\dagger}a} S^{\dagger},\\
 \bbra{e,e} \mc{G}^{[1]}\kket{g,g} & \propto \int_{\theta = \Delta \chi T/2}^{\Delta \chi T}S e^{-i\theta a^{\dagger}a}\bullet e^{i\theta a^{\dagger}a} S^{\dagger},
\end{aligned}
\label{eq:flagged_SNAP_channels}
\end{equation}
where the first and second bits represent the ancilla and the flag qubit state for $\kket{\phi, \psi}$, respectively. 
According to Eq.~\eqref{eq:flagged_SNAP_channels}, the gate is $1$-GPI if $\Delta \chi T/2 < \pi/2$, or $\Delta \chi < \pi/T$. Therefore, we can allow twice as large $\chi$ mismatch by introducing another flag qubit. Note that we do not necessarily require the CNOT gates to be infinitely fast and noiseless, and Eq.~\eqref{eq:flagged_SNAP_channels} holds as long as the CNOT Hamiltonian is diagonal in the ancilla basis, e.g. $H_{\mr{CNOT}} \propto \ket{e}_A\bra{e}\otimes \left(\ket{e}_f\bra{g} + \ket{g}_f\bra{e}\right)$.

We remark that one can similarly construct a $1$-GPI parity measurement that can tolerate larger $\chi$ mismatch by introducing another flag qubit. 

\section{Details of numerical simulations}

Here, we provide the details of numerical simulations of the teleportation-based QEC and the parity-measurement shown in Fig.~\ref{fig:QEC_Infidelities}. In Fig.~\ref{fig:level_1_gadgets}(c) we have shown that the teleportation-based QEC circuit can be decomposed to a logical $\ket{+_L}$ state preparation gadget, a $X(\pi/2)$ gate on the input mode, two $Z(\pi/2)$ gates acting on two bosonic modes respectively, a $XX(\pi/2)$ gate on two bosonic modes, a logical $Z$ measurement on the input mode, and a potential $Z$ gate on the output mode. Note that the final $Z$ gate can be done in the software by updating the Pauli frame. 
The $X(\pi/2)$ and $XX(\pi/2)$ gates can further be decomposed to displacement operators, SNAP gates and/or beam-splitting operations following Fig.~\ref{fig:level_1_gadgets}(a) and (b).
The 1-FT $\ket{+_L}$-state preparation and the 1-FT logical $Z$ measurement are done by the procedures in Algorithm~\ref{alg:FT_X_preparation} and \ref{alg:FT_Z_measurement}, respectively, based on repeated single-shot parity/$Z$-basis measurement by dispersive-coupling to a $3$-level ancilla and majority vote.

In the numerical simulation, we assume the displacement operations can be performed quickly and ignore the faults that occur during them. We also assume a perfect preparation of the coherent state $\ket{\alpha}$ and the measurement on the $3$-level ancilla. On the other hand, we consider a noisy simulation of all other three basic gadgets including the dispersive coupling between a bosonic mode and a $3$-level ancilla, the SNAP gate, and the beam-splitting interaction. Below, we discuss the simulation details of these three noisy gadgets.

The dispersive coupling Hamiltonian $H_0$ is given by Eq.~\eqref{eq:SNAP_Hamiltonian1}. We set the dispersive coupling coefficient $\chi_f=2\pi\times 1 \mr{MHz}$. In the simulation, we mainly consider 4 types of Markovian noise: ancilla relaxation $J_{f\to e}= \sqrt{\gamma_{f\to e}}\ket{e}\bra{f}$, $J_{e\to g} = \sqrt{\gamma_{e\to g}} \ket{e}\bra{f}$, ancilla dephasing $J_{ph} = \sqrt{\gamma_{\phi}}(\ket{e}\bra{e} + 2\ket{f}\bra{f})$ and cavity loss $F_a = \sqrt{\kappa_1} a$.  

The SNAP gate Hamiltonian $\tilde{H}$ in the interaction picture of $H_0$ is given by Eq.~\eqref{eq:SNAP_interacting}. For the convenience of numerical simulation, we move to another interaction picture associated with a $\chi$-matched Hamiltonian,
\begin{equation}
    H_0' = - \chi_f(\ket{f}\bra{f} + \ket{e}\bra{e})\otimes a^{\dagger}a.
\end{equation}
In the interacting picture of $H_0'$, the SNAP gate Hamiltonian becomes
\begin{equation}
    \tilde{H}' = \Delta\chi \ket{e}\bra{e}\otimes a^\dag a + \Omega \left[|f\rangle \langle g|\otimes S(\vec{\phi}) + h.c.\right],
\end{equation}
where $\Delta\chi = \chi_f - \chi_e$. We set the Rabi drive strength $\Omega=0.3\chi_f$. 
The jump operators are then converted to $\tilde{J}_{f\to e}= \sqrt{\gamma_{f\to e}}\ket{e}\bra{f}$, $\tilde{J}_{e\to g} = \sqrt{\gamma_{e\to g}} \ket{e}\bra{f} e^{i\chi_f t a^\dag a}$, $\tilde{J}_{ph} = \sqrt{\gamma_{\phi}}(\ket{e}\bra{e} + 2\ket{f}\bra{f})$ and $\tilde{F}_a = \sqrt{\kappa_1} (P_g + e^{i\chi_f t}(P_e+P_f))\otimes a$ where $P_k:= \ket{k}\bra{k}$ for $k=g,e$, and $f$.
Note that $\tilde{J}_{e\to g}$ and $\tilde{F}_a$ are time-dependent which rotate quickly. To ease the simulation, we make a conservative estimate and approximate $\tilde{J}_{e\to g}$ by $\tilde{J}_{e\to g}' = \sqrt{\gamma_{e\to g}} \ket{e}\bra{f}\otimes e^{i\frac{\pi}{4} a^\dag a}$, i.e., as long as the $e\to g$ relaxation happens, a large dephasing error will be introduced on the cavity. To simplify $\tilde{F}_a$, we first notice that $\mc{D}[\tilde{F}_a] \approx \mc{D}[\sqrt{\kappa_1}P_g\otimes a] + \mc{D}[\sqrt{\kappa_1}(P_e+P_f)\otimes a]$ where we ignore all the fast rotating terms. This can be understood as a dephasing error between $P_g$ and $P_e + P_f$ introduced by the cavity loss. As a simple approximation, we merge the cavity-induced ancilla dephasing with the real ancilla dephasing, i.e., we set the cavity loss jump operator to be $\tilde{F}_a' = \sqrt{\kappa_1} a$, while the effective ancilla dephasing rate for $\tilde{J}_{f\to e}$ becomes $\gamma_\phi' = \gamma_\phi + \kappa_1/4$. The factor $1/4$ is introduced because we set $\Delta_f =2$ for $f$-level in $J_{ph}$.

The beam-splitting Hamiltonian is $H_{\mr{BS}} = ig_{\mr{BS}}(a b^\dag - a^\dag b)$ where $g_{\mr{BS}}$ is the BS interaction strength. We set the beam-splitting interaction strength $g_{\mr{BS}} = 2\chi_f$. The major noise we consider during the procedure are the cavity loss $F_a = \sqrt{\kappa_1} a$ and $F_b = \sqrt{\kappa_1} b$.

To simulate the dissipative time-evolution described above, we use the Monte-Carlo solver in the QuTiP package~\cite{johansson2012qutip}.
This can be easily done for a composite system with one bosonic mode and a 3-level ancilla. 
However, in the simulation of $XX(\pi/2)$ gate, we need to finish the following 3-step simulation:
\begin{enumerate}
    \item For a product input state $\ket{\psi}_A\otimes \ket{\psi'}_B$ on two bosonic modes $A$ and $B$, simulate the noisy beam-splitting interaction $\mathrm{BS}(\frac{\pi}{2})$. The output state is a two-mode entangled state $\ket{\Psi}_{AB}$.
    \item For the entangled state $\ket{\Psi}_{AB}$ input, simulate the noisy tensor-ed SNAP gate $S\otimes S$ with two $3$-level ancilla $A'$ and $B'$. The output state is a two-mode entangled state $\ket{\Psi'}_{AB}$.
    \item For the entangled state $\ket{\Psi'}_{AB}$ input, simulate the noisy beam-splitting interaction $\mathrm{BS}^\dag(\frac{\pi}{2})$. The output state is a two-mode entangled state $\ket{\Psi''}_{AB}$.
\end{enumerate}

The major bottleneck is the Step 2, where we need to consider a simulation of two bosonic modes $A$ and $B$ and two 3-level ancilla $A'$ and $B'$. To get rid of this costly simulation, we first perform a Schmidt decomposition on the entangled state $\ket{\Psi'}_{AB} = \sum_{k} \sqrt{p_k} \ket{u_k}_A \otimes \ket{v_k}_B$. 
Then, we simulate the SNAP gate on the bosonic mode $A$ and $B$ separately. 
Then, we simulate the SNAP gate on each components $\ket{u_k}$ or $\ket{v_k}$ separately, i.e. $(S\otimes S)\ket{\Psi^{\prime}}_{AB} = \sum_k \sqrt{p_k} (S\ket{u_k}_A)\otimes (S\ket{v_k}_B)$. Then, taking the simulation of the SNAP gate on mode $A$ as an example, we need to estimate when the quantum jumps occur and which jump operator occurs~\cite{johansson2012qutip}. This is determined by the following unnormalized expectation values
\begin{equation}
\begin{aligned}
    O_j(t) = \braket{\tilde{\Psi}(t)| J_j^\dag J_j|\tilde{\Psi}(t)}, 
\end{aligned}
\end{equation}
where
\begin{equation}
\begin{aligned}
    \ket{\tilde{\Psi}(t)} &= e^{-i t H_{\mr{eff}}} \ket{\Psi}_{AB}\otimes\ket{gg}_{A'B'}, \\
    H_{\mr{eff}} &= (H_0')_{AA'}\otimes I_{BB'} -\frac{i}{2} \sum_{j} J_j^\dag J_j.
\end{aligned}
\end{equation}
Here the summation of $j$ is taken for the 4 different jump operators we consider. We can simplify the form of $p_j$ to
\begin{equation}
\begin{aligned}
    & O_j(t) = \sum_k p_k O_j^{(k)}(t) \\
    &= \sum_{k} p_k \braket{u_k\otimes g | e^{it H_{\mr{eff}}} J_j^\dag J_j e^{-it H_{\mr{eff}}} |u_k\otimes g }.
\end{aligned}
\end{equation}
Here, $\ket{u_k\otimes g} := \ket{u_k}_A \otimes \ket{g}_{A'}$.
It is easy to verify that, for the ancillary relaxation and dephasing, the value of $O_j^{(k)}(t)$ for all the Schmidt components are the same. We also check numerically that for the cavity loss, the value of $O_j^{(k)}(t)$ for the Schmidt components with large weight $p_k$ are almost the same.
This means that we can approximate the overall simulation of $\ket{\Psi}_{AB}$ by fixing the quantum jump location of all the components $\{\ket{u_k}_A \otimes \ket{g}_{A'}\}$ to be the same. This can be done in the Monte-Carlo solver in QuTiP by passing the same random seed for all the Schmidt components.


%

\end{document}